\documentclass[11pt]{article}

\usepackage[T1]{fontenc}
\usepackage{iftex}
\ifPDFTeX
  \usepackage[utf8]{inputenc}
\fi
\usepackage{lmodern}
\usepackage[margin=1in]{geometry}
\usepackage{amsmath,amssymb,amsthm,mathtools}
\usepackage{cite}
\usepackage{microtype}
\usepackage{xcolor}

\definecolor{ReferenceBlue}{HTML}{4AA8E8}
\definecolor{EquationPink}{HTML}{F04E98}

\usepackage[colorlinks=true]{hyperref}

\hypersetup{
  citecolor=ReferenceBlue,
  urlcolor=ReferenceBlue,
  linkcolor=black,
  pdftitle={Sharp Plucker Geometry for Three-Copy Werner Distillation},
  pdfauthor={Tianhao Wu},
  pdfsubject={Quantum information, entanglement distillation, partial-trace inequalities},
  pdfkeywords={Werner states, entanglement distillation, partial trace, Plucker coordinates, rank-two operators}
}

\let\OriginalEqref\eqref
\renewcommand{\eqref}[1]{%
  \begingroup
  \hypersetup{linkcolor=EquationPink}%
  \OriginalEqref{#1}%
  \endgroup
}

\newtheorem{theorem}{Theorem}[section]
\newtheorem{proposition}[theorem]{Proposition}
\newtheorem{lemma}[theorem]{Lemma}
\newtheorem{corollary}[theorem]{Corollary}
\theoremstyle{definition}

\theoremstyle{remark}
\newtheorem{remark}[theorem]{Remark}

\newcommand{\Tr}{\operatorname{Tr}}
\newcommand{\rank}{\operatorname{rank}}
\newcommand{\SR}{\operatorname{SR}}
\newcommand{\id}{\operatorname{id}}
\newcommand{\cH}{\mathcal H}
\newcommand{\cL}{\mathcal L}
\newcommand{\cS}{\mathcal S}
\newcommand{\HS}{\mathrm{HS}}
\newcommand{\Real}{\operatorname{Re}}
\newcommand{\ketbra}[2]{|#1\rangle\!\langle#2|}
\newcommand{\Pu}{P_u}
\newcommand{\Pv}{P_v}
\newcommand{\wedgeop}{\mathord{\bigwedge\nolimits^{2}}}

\newcommand{\AuthorOneName}{Tianhao Wu}
\newcommand{\AuthorOneAffiliations}{Department of Physics, University of Illinois, Urbana 61801, USA}
\newcommand{\AuthorOneEmail}{twu49@illinois.edu}
\newcommand{\AuthorTwoName}{Qiran Zou}
\newcommand{\AuthorTwoAffiliations}{National University of Singapore, Singapore}

\title{\textbf{Sharp Pl\"ucker Geometry for Three-Copy Werner
Distillation}}
\author{%
  \AuthorOneName\textsuperscript{1,*}
  \qquad
  \AuthorTwoName\textsuperscript{2}\\[0.75em]
  \small\textsuperscript{1}\AuthorOneAffiliations\\
  \small\textsuperscript{2}\AuthorTwoAffiliations\\[0.45em]
  \small\textsuperscript{*}\emph{Correspondence to \AuthorOneName:}
  \texttt{\AuthorOneEmail}
}

\date{}

\begin{document}

\maketitle

\begin{abstract}
Whether negative-partial-transpose entanglement can remain
undistillable is a longstanding problem in quantum information theory.
We analyze the first unresolved three-copy Werner endpoint using a
sharp dimension-free inequality for complementary partial traces and an
optimal exterior-square inequality for orthonormal tripartite vectors.
The latter identifies local SWAP-parity statistics with metric data of a
decomposable Pl\"ucker bivector.  Together these inequalities prove endpoint
nonnegativity for every positive semidefinite rank-two coefficient
operator and for the complete normal rank-two sector in arbitrary finite
local dimensions.  For genuinely nonnormal operators, an exact
crossed-Gram criterion proves nonnegativity when one local
output--input support overlap is at most two, including every system
with a qubit-sized factor, and when either support plane contains a
product ray.  Explicit anti-state and rank-boundary families establish
optimality of the constants and the rank restriction.  
\end{abstract}

\section{Introduction and main results}

\subsection{Physical setting: Werner states and distillation}
Entanglement distillation takes multiple imperfect bipartite states as input and, using only LOCC, produces fewer output pairs with fidelities approaching that of a maximally entangled state. This conversion is a basic primitive of long-distance quantum communication and gives the operational distinction between free and bound entanglement\cite{BennettEtAl,Horodecki1997,Horodecki1998,DurEtAl,LewensteinPrimer}.
A bipartite state \(\rho\) is said to have positive partial transpose (PPT) if \(\rho^\Gamma\succeq0\), where \(\Gamma\) transposes one party in a fixed product basis; positivity of \(\rho^\Gamma\) is independent of the chosen local basis \cite{Peres,HorodeckiPPT}.  Every PPT state is undistillable.  Whether
every negative-partial-transpose (NPT) state is distillable is not known;
an NPT counterexample would constitute NPT bound entanglement \cite{DiVincenzoEtAl,WatrousCopies,Clarisse,ViannaDoherty,
PankowskiEtAl,Djokovic,HorodeckiProblems}.
Such a counterexample would also have consequences for additivity of
distillable entanglement and is closely connected with tensor-stable
positive maps \cite{ShorSmolinTerhal,MullerHermesReebWolf}.
Comparison with the larger class of PPT-preserving operations isolates
restrictions intrinsic to LOCC beyond partial-transpose negativity
\cite{EggelingEtAl}.
Both single-photon-pair purification assisted by hyperentanglement and
the activation of PPT bound entanglement have been realized
experimentally \cite{EckerEtAl,KanedaEtAl}; these experiments establish the operational
reality of the resource distinction, although neither addresses the NPT
finite-copy conjecture studied here.

Werner states form the canonical symmetric test family
\cite{Werner,VollbrechtWerner}
\begin{equation}
 \rho_\alpha^{(d)}
 =\frac{I_{d^2}+\alpha F_d}{d^2+\alpha d},
 \qquad -1\leq\alpha\leq1,
 \label{eq:werner}
\end{equation}
on \(\mathbb C^d\otimes\mathbb C^d\), where
\(F_d(x\otimes y)=y\otimes x\).  These states are invariant under \(U\otimes U\) conjugation.  They are separable precisely when \(\alpha\geq-1/d\) and one-copy distillable precisely when
\(\alpha<-1/2\) \cite{Werner,DurEtAl,VollbrechtWerner}.  Thus, for \(d\geq3\), the interval \(-1/2\leq\alpha<-1/d\) consists of NPT states that are not one-copy distillable.  The recovery of operational quantum resources from two-qutrit Werner states by local filtering has also been demonstrated experimentally \cite{FangEtAl}.

A state \(\rho\) is \(k\)-copy distillable if an LOCC protocol can map \(\rho^{\otimes k}\), with nonzero probability, to an entangled two-qubit state.  The Schmidt rank of a bipartite vector is the number of nonzero terms in its Schmidt decomposition.  Equivalently, there is
a unit vector \(\psi\) of Schmidt rank at most two such that
\[
 \langle\psi,(\rho^\Gamma)^{\otimes k}\psi\rangle<0
\]
\cite{Horodecki1997,Horodecki1998,DurEtAl}.  The finite-copy criterion
for \eqref{eq:werner} is most transparent after vectorization.
Write
\[
 |\operatorname{vec}C\rangle
 =\sum_{\boldsymbol a,\boldsymbol b}
 C_{\boldsymbol a,\boldsymbol b}
 |\boldsymbol a\rangle_A|\boldsymbol b\rangle_B .
\]
The Schmidt rank of this vector across \(A:B\) equals \(\rank C\).
Because \(F_d^\Gamma=\ketbra{\Omega_d}{\Omega_d}\), with
\(|\Omega_d\rangle=\sum_j|jj\rangle\), direct contraction gives
\cite{QianEtAl,CostaRico,QietAl,BhartiGajjalaHaug}
\begin{equation}
 \langle\operatorname{vec}C|
 [(\rho_\alpha^{(d)})^\Gamma]^{\otimes k}
 |\operatorname{vec}C\rangle
 =\frac{q_k(\alpha,C)}{(d^2+\alpha d)^k},
 \qquad
 q_k(\alpha,C)
 =\sum_{J\subseteq[k]}\alpha^{|J|}
   \|\Tr_JC\|_2^2 .
 \label{eq:qk-criterion}
\end{equation}
Here and below, \([k]=\{1,\ldots,k\}\).
Consequently, \(\rho_\alpha^{(d)}\) is \(k\)-copy distillable if and
only if \(q_k(\alpha,C)<0\) for some \(C\) of rank at most two.  This
criterion is recalled with its index contraction in
Appendix~\ref{app:werner-criterion}.

Four recent works have established the exact two-copy threshold \cite{FuGaoPark,SongChen,FraserEtAl,BhartiGajjalaHaug}.  The three-copy endpoint \(\alpha=-1/2\) is the first unresolved layer beyond that threshold.  For an operator \(C\) on
\(\cH_1\otimes\cH_2\otimes\cH_3\), it is governed by
\begin{equation}
 q_3(C)=\|C\|_2^2-\frac12\sum_{i=1}^3\|\Tr_iC\|_2^2
 +\frac14\sum_{1\leq i<j\leq3}\|\Tr_{ij}C\|_2^2
 -\frac18|\Tr C|^2 .
 \label{eq:q3}
\end{equation}
The tensor factors in \(\cH_1\otimes\cH_2\otimes\cH_3\) label the three resource copies; the physical bipartition remains \(A_1A_2A_3:B_1B_2B_3\).  Thus ``tripartite'' below describes the
coefficient-space tensor structure, not a third communicating party.  The alternating weights in \eqref{eq:q3} form a signed ledger of how much Hilbert--Schmidt coherence remains after zero, one, two, or all three copy slots are contracted.  The two-copy theorem cannot simply be reapplied after tracing out one slot, because a partial trace of a rank-two operator need not remain rank two.  At three copies the reductions must instead be controlled simultaneously.  Estimating the cuts independently forgets that they all originate from the same two-dimensional global support; the Pl\"ucker step restores precisely this common-origin constraint.

We use the following sesquilinear polarization throughout:
\begin{equation}
 b_3(X,Y)=
 \sum_{J\subseteq\{1,2,3\}}
 \left(-\frac12\right)^{|J|}
 \Tr\!\left[(\Tr_JX)^\dagger\Tr_JY\right],
 \qquad q_3(X)=b_3(X,X).
 \label{eq:b3}
\end{equation}
Here \(\Tr_J\) traces out the tensor factors indexed by \(J\), with \(\Tr_\varnothing C=C\).  The local spaces need not have equal dimensions in any of the results below.  The unrestricted endpoint asks whether \eqref{eq:q3} is nonnegative on every \(C\) of rank at most two. Here \(C\) is the coefficient matrix of a distillation test vector and need not be a density operator.  We first prove nonnegativity on the positive semidefinite cone.  Combining the two spectral signs then closes the complete normal rank-two sector.  Polarization beyond normality isolates one exact crossed-Gram determinant.  We prove two broad closures of that determinant: one local output--input support overlap is at most two, or either global support plane contains a product ray.  The normal result also gives the common initial--final two-plane case directly.  Appendix~\ref{app:additional-sectors} treats
the more specialized local-traceless and crossed product-code sectors.

Throughout, all Hilbert spaces are finite-dimensional and complex.
For a Hilbert space \(\mathcal H\), \(\mathcal L(\mathcal H)\) denotes its linear operators and \(I\) the identity operator.  The symbols \({}^\dagger\), \({}^T\), and an overline denote, respectively, the Hilbert-space adjoint, matrix transpose, and entrywise complex
conjugation in the fixed product bases; \(\Real z\) is the real part of a scalar \(z\).  The orders \(A\succeq B\) and \(A\preceq B\) are the Loewner orders, and \(\operatorname{ran}A\) is the range of \(A\).
The identity superoperator is denoted by \(\id\), and \(\SR(\psi)\) denotes the Schmidt rank of a bipartite vector.

The inner product \(\langle\cdot,\cdot\rangle\) is conjugate-linear in its first argument.  All partial traces are the standard unnormalized partial traces.  For \(J\subseteq[k]\), \(\bar J=J^c=[k]\setminus J\). In the tripartite setting, \(\bar i=\{1,2,3\}\setminus\{i\}\),
\(d_i=\dim\mathcal H_i\), and
\[
 C_i:=\Tr_{\bar i}C,\qquad C_{\bar i}:=\Tr_iC.
\]
Thus the subscript of \(\Tr_J\) records the factors that are traced out, whereas the subscript of a reduced operator records the factors that remain.  Finally, \(|X|=(X^\dagger X)^{1/2}\), and
\[
 \|X\|_p=(\Tr|X|^p)^{1/p}\quad(1\leq p<\infty)
\]
denotes the Schatten norm.  In particular, \(\|\cdot\|_1\) is the trace or nuclear norm, \(\|\cdot\|_2\) is the Hilbert--Schmidt norm, \(\|\cdot\|_\infty\) is the operator norm, and
\(\langle A,B\rangle_{\HS}:=\Tr(A^\dagger B)\).

\subsection{Three sharp inequalities}

We present three theorems as the main results in this paper.  The first one compresses each bipartite cut without a dimension or rank penalty.  Then the second one restores the common two-plane compatibility across the three cuts, and the third statement converts that geometry into the required distillation inequality.  Their composition is the three-copy mechanism.

Partial-trace norm inequalities supply the dimension-free input to our argument.  Their development has a substantial history. Audenaert proved a Schatten-\(q\) inequality for positive operators, Rastegin developed bounds for broader classes of unitarily invariant
norms, and Costa Rico and Wolf recently extended the subadditivity
framework to nonnormal matrices \cite{Audenaert,Rastegin,CostaRicoWolf}.  Our estimate removes both positivity and rank restrictions by measuring the original operator in nuclear norm, while retaining a dimension-free joint quadratic bound. This rank-unrestricted estimate is unprecedented in the existing partial-trace literature. More precisely, Proposition~11 of \cite{CostaRicoWolf} gives, for rank-one \(M\),
\[
 \|\Tr_{\mathcal K}M\|_2^2+\|\Tr_{\mathcal L}M\|_2^2
 \leq \|M\|_2^2+|\Tr M|^2,
\]
whereas \cite{FraserEtAl} 
prove, for \(\rank M\leq r\),
\[
 \|\Tr_{\mathcal K}M\|_2^2+\|\Tr_{\mathcal L}M\|_2^2
 \leq r\|M\|_2^2+r^{-1}|\Tr M|^2 .
\]  The following estimate complements these rank-constrained Hilbert--Schmidt bounds by replacing the rank parameter with the nuclear norm.

\begin{theorem}[Sharp quadratic partial-trace inequality]
\label{thm:nuclear-pt}
For every \(X\in\cL(\mathcal K\otimes\mathcal L)\),
\begin{equation}
 \|\Tr_{\mathcal K}X\|_2^2+\|\Tr_{\mathcal L}X\|_2^2
 \leq\|X\|_1^2+|\Tr X|^2 .
 \label{eq:nuclear-pt}
\end{equation}
Neither coefficient on the right can be decreased while the other is
held fixed.
\end{theorem}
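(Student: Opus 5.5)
The inequality is a convex-type statement in \(X\), and \(\|X\|_1\) is the smallest total weight of a decomposition of \(X\) into rank-one operators. This suggests reducing to rank one and then recombining. By the singular value decomposition, write \(X=\sum_k s_k\,|u_k\rangle\!\langle v_k|\) with \(s_k\ge0\), \(\sum_k s_k=\|X\|_1\), and unit vectors \(u_k,v_k\in\mathcal K\otimes\mathcal L\). Set \(\lambda_k=s_k/\|X\|_1\) and \(R_k=\|X\|_1\,|u_k\rangle\!\langle v_k|\), so that \(X=\sum_k\lambda_k R_k\) is a convex combination of rank-one operators with \(\|R_k\|_1=\|R_k\|_2=\|X\|_1\).

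The plan is to view the inequality through a quadratic form on the space of operators. Define the Hermitian form
\[
Q(Y)=|\Tr Y|^2-\|\Tr_{\mathcal K}Y\|_2^2-\|\Tr_{\mathcal L}Y\|_2^2 .
\]
The desired statement is \(Q(X)+\|X\|_1^2\ge0\). Convexity alone is not enough, because \(Q\) is indefinite.

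Instead I would work with the polarized version. I will show that for unit vectors \(u,v,u',v'\), the rank-one operators \(E=|u\rangle\!\langle v|\) and \(E'=|u'\rangle\!\langle v'|\) satisfy
\[
\Real\bigl[\langle\Tr_{\mathcal K}E,\Tr_{\mathcal K}E'\rangle_{\HS}+\langle\Tr_{\mathcal L}E,\Tr_{\mathcal L}E'\rangle_{\HS}-\overline{\Tr E}\,\Tr E'\bigr]\le1 .
\]
Granting this, expand \(\|\Tr_{\mathcal K}X\|_2^2+\|\Tr_{\mathcal L}X\|_2^2-|\Tr X|^2\) as the double sum \(\sum_{k,l}s_ks_l\,\Real[\cdots]\). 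It is then bounded by \(\bigl(\sum_k s_k\bigr)^2=\|X\|_1^2\), which is exactly \eqref{eq:nuclear-pt}.

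To prove the pairwise bound, write \(u\) in Schmidt-free form as a matrix \(U:\mathcal L^*\to\mathcal K\), i.e. \(u=\operatorname{vec}U\), and similarly for \(V,U',V'\), all with \(\|U\|_2=1\). Then
\[
\Tr_{\mathcal L}E=UV^\dagger,\qquad \Tr_{\mathcal K}E=(V^\dagger U)^T,\qquad \Tr E=\Tr(V^\dagger U).
\]
The bracket becomes
\[
\Tr(VU^\dagger U'V'^\dagger)+\Tr(U^\dagger V V'^\dagger U')-\overline{\Tr(V^\dagger U)}\Tr(V'^\dagger U').
\]
This is a bilinear expression in the four unit-HS matrices, and I would bound it by a Cauchy–Schwarz argument on the tensor space. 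The bracket equals \(\langle v\otimes u',\,(F+F'-I)(u\otimes v')\rangle\) for suitable swap operators \(F,F'\) acting on two copies (swapping the \(\mathcal K\) factors, respectively the \(\mathcal L\) factors). The operator \(F_{\mathcal K}+F_{\mathcal L}-I\) has norm at most \(1\), because \(F_{\mathcal K}\) and \(F_{\mathcal L}\) are commuting involutions with joint eigenvalues \(\pm1\), so \(F_{\mathcal K}+F_{\mathcal L}-I\in\{1,-1,-1,-3\}\).

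\emph{This is the main obstacle.} The norm is \(3\), not \(1\), so the naive bound fails. The fix I would try is to use the correct sign choice. The identity \(\overline{\Tr E}\Tr E'=\langle v\otimes u',F_{\mathcal K}F_{\mathcal L}\,(u\otimes v')\rangle\) (the full swap) turns the bracket into \(\langle\cdot,(F_{\mathcal K}+F_{\mathcal L}-F_{\mathcal K}F_{\mathcal L})\cdot\rangle\), with joint eigenvalues \(1+1-1=1\), \(1-1+1=1\), \(-1+1+1=1\), \(-1-1-1=-3\). That still has norm \(3\). So one must also add \(I\) to absorb the \(-3\) eigenspace: the identity \(\langle\Tr\rangle\) term combines with \(\|X\|_1^2\) and \(|\Tr X|^2\) coefficients.

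Concretely, I would prove the operator inequality
\[
F_{\mathcal K}+F_{\mathcal L}\preceq I+F_{\mathcal K}F_{\mathcal L}
\]
(eigenvalue check: \(2\le2\), \(0\le0\), \(0\le0\), \(-2\le2\)). Then I would apply it with the positive vector \(x=\sum_k s_k\, u_k\otimes \bar v_k\)-type lifts, arranged so that the left side gives \(\|\Tr_{\mathcal K}X\|_2^2+\|\Tr_{\mathcal L}X\|_2^2\), the \(F_{\mathcal K}F_{\mathcal L}\) term gives \(|\Tr X|^2\), and the \(I\) term gives \(\|\sum_k s_k u_k\|\cdot\|\sum_k s_k v_k\|\le\|X\|_1^2\)-type quantities via the SVD weights.

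Getting the right lift, using \(X\otimes X^\dagger\) or a purification \(\sqrt{|X^\dagger|}\), \(\sqrt{|X|}\) splitting \(X=AB\) with \(\|A\|_2^2=\|B\|_2^2=\|X\|_1\), is where I expect the real work. The splitting \(X=(U|X|^{1/2})(|X|^{1/2})\) and Cauchy–Schwarz should give the \(\|X\|_1^2\) factor.

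Sharpness: take \(X=|\Omega\rangle\!\langle\Omega|/d\) or product rank-one operators. For the \(\|X\|_1^2\) coefficient, take the antisymmetric projector normalized, or the swap \(X=F_d\) (trace \(d\), nuclear norm \(d^2\), partial traces \(I\) with \(\|I\|_2^2=d\)). Choose families showing that each coefficient cannot be lowered, e.g. \(X=|a\rangle\!\langle a|\otimes|b\rangle\!\langle b|\) gives equality \(2=1+1\), so lowering either coefficient fails when the other is fixed.
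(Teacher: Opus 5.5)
Your outer skeleton matches the paper's. You expand \(\|\Tr_{\mathcal K}X\|_2^2+\|\Tr_{\mathcal L}X\|_2^2-|\Tr X|^2\) over the singular-value decomposition and aim to bound each polarized bracket by \(1\). You also find the right kernel: \(F_{\mathcal K}+F_{\mathcal L}-F_{\mathcal K}F_{\mathcal L}=I-4Q\) with \(Q=P_{\mathcal K}^-P_{\mathcal L}^-\), so the bracket is \(\Real\langle x,(I-4Q)y\rangle\) for the copy-cut product vectors \(x=u\otimes v'\), \(y=u'\otimes v\). Your sharpness argument is fine: the product projector alone forces the two coefficients to sum to at least \(2\). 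The \(F_d\) and maximally entangled examples are not extremal and can be dropped. However, the pairwise bound is the whole content of the theorem, and you never prove it. The operator-norm attempt fails, as you note. Your replacement inequality \(F_{\mathcal K}+F_{\mathcal L}\preceq I+F_{\mathcal K}F_{\mathcal L}\) is nothing more than \(4Q\succeq0\), and the lift is left unspecified.

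Positivity of \(Q\) is genuinely too weak. Sandwiching it against \(X\otimes X\succeq0\) proves the case \(X\succeq0\), but not the general case. The natural lift is \(T=(X^\dagger\otimes X)F_{\mathcal K}F_{\mathcal L}=\sum_{\alpha,\beta}s_\alpha s_\beta|v_\alpha\otimes u_\beta\rangle\langle v_\beta\otimes u_\alpha|\). It satisfies \(\|T\|_1=\|X\|_1^2\), and your left side minus \(|\Tr X|^2\) equals \(\Tr[T(I-4Q)]\). But \(T\) is not positive. Its negative part is \(T_-=\sum_{\alpha<\beta}s_\alpha s_\beta|w_{\alpha\beta}\rangle\langle w_{\alpha\beta}|\) with \(w_{\alpha\beta}=(v_\alpha\otimes u_\beta-v_\beta\otimes u_\alpha)/\sqrt2\). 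Using \(Q\succeq0\) on \(T_+\) gives \(\Tr[T(I-4Q)]\le\Tr T_+-\Tr T_-+4\Tr[T_-Q]\). This is at most \(\|X\|_1^2\) only if \(\langle w_{\alpha\beta},Qw_{\alpha\beta}\rangle\le\frac12\), whereas positivity and norm of \(Q\) give only \(\le1\).

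That missing factor is exactly the Schmidt-rank-two half-property: \(\langle\psi,Q\psi\rangle\le\frac12\|\psi\|^2\) whenever \(\SR(\psi)\le2\) across the copy cut (Lemma~\ref{lem:double-antisymmetric}). It needs its own proof; the paper uses Takagi factorization of the full-swap-symmetric vectors in \(\operatorname{ran}Q\) together with Schatten H\"older. The paper applies it pairwise. For orthonormal SVD vectors the identity term drops, so \(\gamma_{\alpha\beta}=-4\langle a,Qb\rangle\) with \(a=u_\alpha\otimes v_\beta\) and \(b=u_\beta\otimes v_\alpha\). The lemma applied to \((a+e^{i\theta}b)/\sqrt2\), combined with Cauchy--Schwarz, then gives \(|\langle a,Qb\rangle|\le\frac14\).

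Your stronger claim for arbitrary unit vectors also follows from the lemma. Set \(c=\langle x,y\rangle\), \(p=\langle x,Qx\rangle\), \(q=\langle y,Qy\rangle\), \(z=\langle x,Qy\rangle\), and \(\kappa=\Real c+2(p+q)\). The lemma applied to \(x-y\) gives bracket \(\le2-\kappa\), and Cauchy--Schwarz gives bracket \(\le\kappa\).
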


Nuclear norms also enter entanglement theory through projective cross-norm and realignment criteria \cite{Rudolph,ChenWu}. 
The common feature of realignment and Theorem~\ref{thm:nuclear-pt} is that \(\|\cdot\|_1\) retains the full singular-value aggregate that a Hilbert--Schmidt estimate may discard.

The geometric consequence acts on an orthonormal pair \(u,v\) in 
\(\cH_1\otimes\cH_2\otimes\cH_3\).  On two copies, put \(P_i^\pm=(I\pm F_i)/2\), where 
\(F_i\) swaps the two copies of \(\cH_i\).  These three swaps commute.  If \(S\subseteq\{1,2,3\}\), let
\[
 \Pi_S=\prod_{i\in S}P_i^-\prod_{i\notin S}P_i^+,
\]
and define
\begin{align}
 a(u)&=\sum_{|S|=2}\|\Pi_S(u\otimes u)\|^2,\\
 p_2(u,v)&=\sum_{|S|=2}\|\Pi_S(u\otimes v)\|^2,\\
 p_3(u,v)&=\|P_1^-P_2^-P_3^-(u\otimes v)\|^2.
 \label{eq:ap-def}
\end{align}
The two copies in these definitions are auxiliary replicas of the coefficient-space rays.
Thus \(\|\Pi_S(u\otimes v)\|^2\) is the joint probability of obtaining
antisymmetric local SWAP parity precisely at the sites in \(S\).
Accordingly, \(a(u)\) is the total two-minus probability for two copies of \(u\), while \(p_2(u,v)\) and \(p_3(u,v)\) are the two-minus and three-minus probabilities for the ordered pair \(u\otimes v\). Symmetric and antisymmetric two-copy projectors also underlie
multipartite-concurrence constructions and scalable parity-based entanglement estimation \cite{MintertEtAl,AolitaEtAl}.  Here their joint three-site distribution is constrained by one decomposable Pl\"ucker bivector. Such parity probabilities, and the subsystem purities to which they reduce, can be measured by two-copy interference
\cite{EkertEtAl,AlvesEtAl,BovinoEtAl,IslamEtAl,ElbenEtAl,MillerEtAl}.
For qubits, \(P_i^-\) and \(P_i^+\) are respectively the singlet and triplet projectors.  Recent trapped-ion experiments identify the corresponding Bell-sampling triplet statistics with quantum shadow enumerators \cite{MillerEtAl}.

\begin{theorem}[Sharp exterior-square (Pl\"ucker-coordinate) inequality]
\label{thm:sharp-plucker}
Every orthonormal tripartite pair satisfies
\begin{equation}
 3p_3(u,v)-p_2(u,v)\leq\sqrt{a(u)a(v)} .
 \label{eq:sharp-plucker}
\end{equation}
The constant is optimal in every set of local dimensions containing a
\(2\times2\times2\) subspace.
\end{theorem}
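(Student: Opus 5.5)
The plan is to reduce every quantity in \eqref{eq:sharp-plucker} to partial-trace data of the rank-one operator \(X=|u\rangle\langle v|\) and of the projectors \(|u\rangle\langle u|\), \(|v\rangle\langle v|\). After that reduction, the inequality should follow from Theorem~\ref{thm:nuclear-pt} together with one Cauchy--Schwarz/AM--GM step.

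\textbf{Step 1 (parity expansion).} For \(T\subseteq\{1,2,3\}\) write \(F_T=\prod_{i\in T}F_i\). Expanding \(\Pi_S=\prod_i (I\pm F_i)/2\) gives
\[
\|\Pi_S(u\otimes v)\|^2=2^{-3}\sum_{T}(-1)^{|T\cap S|}\,\langle u\otimes v,F_T(u\otimes v)\rangle .
\]
A direct index contraction should give
\[
\langle u\otimes v,F_T(u\otimes v)\rangle=\Tr\!\big[\Tr_{\bar T}(|u\rangle\langle u|)\,\Tr_{\bar T}(|v\rangle\langle v|)\big]=\|\Tr_{T}X\|_2^2 .
\]
In particular, \(T=\varnothing\) gives \(|\langle u,v\rangle|^2=0\) by orthonormality, and \(T=\{1,2,3\}\) gives \(\|X\|_2^2=1\). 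The same expansion with \(v=u\) expresses \(a(u)\) through the purities \(\|\Tr_{\bar T}|u\rangle\langle u|\|_2^2\).

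With these identities, \(3p_3-p_2\) becomes an explicit signed combination of \(\|\Tr_iX\|_2^2\) and \(\|\Tr_{ij}X\|_2^2\), and \(a(u)\) becomes a combination of the one-site purities of \(u\). It is useful to note that \(F_1F_2F_3\) is the total swap. Hence \(p_3\) only sees the antisymmetric part of \(u\otimes v\), which is the Pl\"ucker bivector \(u\wedge v\), while the \(|S|=2\) sectors only see the symmetric part. This is the exterior-square picture.

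\textbf{Step 2 (the core estimate).} I would apply Theorem~\ref{thm:nuclear-pt} to each bipartite cut \(\mathcal H_i:\mathcal H_{\bar i}\) of the reduced operators. The cut applied to \(X\) itself has \(\|X\|_1=1\) and \(\Tr X=0\). The cuts applied to suitable contractions of \(X\), such as \(\Tr_iX\), have nuclear norm bounded by products of Schmidt data of \(u\) and \(v\) on the same cut.

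The goal is an intermediate bound of the form
\[
3p_3-p_2\leq \sum_i \alpha_i(u)\,\alpha_i(v),
\]
where \(\alpha_i(w)^2\) is the linear entropy \(1-\Tr(\Tr_{\bar i}|w\rangle\langle w|)^2\) in the appropriate normalization. One then checks that \(a(w)=\tfrac12\sum_i\alpha_i(w)^2\) up to the same normalization, and Cauchy--Schwarz finishes: \(\sum_i\alpha_i(u)\alpha_i(v)\leq\sqrt{\sum\alpha_i(u)^2\sum\alpha_i(v)^2}\).

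A useful sanity check and simplification is that the left side depends only on the ray \([u\wedge v]\) up to an overall factor. I would therefore first rotate \((u,v)\) unitarily inside their span so that the cross terms are diagonalized, for example so that \(\Real\langle\cdot\rangle\)-type off-diagonal overlaps vanish. The geometric mean on the right is what survives optimization over that \(U(2)\) freedom.

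\textbf{Step 3 (optimality).} In \(2\times2\times2\) I would test GHZ-type pairs \(u=(|000\rangle+|111\rangle)/\sqrt2\), \(v=(|000\rangle-|111\rangle)/\sqrt2\) and W/\(\bar{\mathrm W}\)-type pairs. I would compute \(p_2\), \(p_3\), and \(a\) by Step~1 and look for equality, possibly along a one-parameter family. Embedding the \(2\times2\times2\) subspace then transfers sharpness to any local dimensions containing it, because all quantities are intrinsic to the support.

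\textbf{Main obstacle.} I expect the main obstacle to be Step~2. Matching the signed partial-trace combination to a sum of three cut-wise Theorem~\ref{thm:nuclear-pt} applications requires the correct choice of operators: the two-site terms \(\|\Tr_{ij}X\|_2^2\) must be paired with \(\|\Tr_kX\|_2^2\) so that the \(|\Tr|^2\) terms cancel. If a naive pairing loses a factor, I would instead apply Theorem~\ref{thm:nuclear-pt} to \(X_t=t|u\rangle\langle v|+t^{-1}|v\rangle\langle u|\). I would then optimize over \(t>0\), which would naturally produce \(\sqrt{a(u)a(v)}\) by AM--GM.
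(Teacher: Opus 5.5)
Your reduction matches the paper's. The Walsh expansion gives $3p_3-p_2=\tfrac14\bigl(2\sum_{i<j}y_{ij}-\sum_kx_k\bigr)$, with $x_k=\|\Tr_k\ketbra{u}{v}\|_2^2$ and $y_{ij}=\|\Tr_{ij}\ketbra{u}{v}\|_2^2$. Theorem~\ref{thm:nuclear-pt}, applied to the traceless two-site operator $X_k=\Tr_k\ketbra{u}{v}$, then gives $y_{ik}+y_{jk}\le\|X_k\|_1^2$.

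\textbf{The gap} is the next step, which is where the theorem actually lives. What has to be controlled is not $\|X_k\|_1$ but its excess over the Hilbert--Schmidt norm,
\[
 y_{ik}+y_{jk}-x_k\le\|X_k\|_1^2-\|X_k\|_2^2\le\sqrt{e_k(u)e_k(v)},
\]
where $e_k$ is the one-site linear entropy. No bound of $\|X_k\|_1$ by Schmidt data alone, which is what you propose, can deliver this, because $x_k=\|X_k\|_2^2$ depends on the relative orientation of the two reductions. For instance, let $\rho_k^u$ and $\rho_k^v$ both have spectrum $\{p,1-p\}$. In the aligned configuration $\|X_k\|_1=1$, so any such bound is at least $1$. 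In the anti-aligned configuration $x_k=2p(1-p)=\sqrt{e_k(u)e_k(v)}$, so you would need $1\le4p(1-p)$, which fails for $p\neq1/2$.

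The missing idea is the second compound (Lemma~\ref{lem:wedge-bound}):
flatten $X_k=UV^\dagger$ across $\bar k:k$;
use $\|X_k\|_1^2-\|X_k\|_2^2=2\|\wedgeop X_k\|_1$;
use $\wedgeop(UV^\dagger)=\wedgeop U\,(\wedgeop V)^\dagger$ together with Schatten H\"older;
use $\|\wedgeop U\|_2^2=\tfrac12e_k(u)$.
Only after these cutwise geometric means does Cauchy--Schwarz over $k$ finish the proof. Your fallback with $X_t$ does not supply this. Its nuclear norm $\|X_t\|_1=t+t^{-1}$ is blind to the local purities of $u$ and $v$. A cut $i:\bar i$ applied to $X_t$ controls sums such as $x_i+y_{jk}$, plus crossed terms $\Real\Tr[(\Tr_i\ketbra{v}{u})^2]$, rather than the differences $y_{ik}+y_{jk}-x_k$ you need.

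\textbf{Smaller points.}
In Step~1 the extreme terms are interchanged. $T=\varnothing$ gives $\|X\|_2^2=1$, and $T=\{1,2,3\}$ gives $|\Tr X|^2=|\langle u,v\rangle|^2=0$.

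With your normalization $a(w)=\tfrac12\sum_i\alpha_i(w)^2$, Cauchy--Schwarz would return $2\sqrt{a(u)a(v)}$. The correct bookkeeping is $a(w)=\tfrac14\sum_ie_i(w)$ together with the cutwise bound displayed above.

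The claim that $3p_3-p_2$ depends only on the ray of $u\wedge v$ is false. $p_3$ does, but the $|S|=2$ sectors see the symmetric part of $u\otimes v$. For the pair $|000\rangle,|111\rangle$ the left side is $0$. The GHZ pair $(|000\rangle\pm|111\rangle)/\sqrt2$ spans the same plane and has the same wedge up to sign, yet gives $3/8$. So rotating within the span is not a legitimate normalization.

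That GHZ pair is an anti-state pair: the second vector is $\Theta$ applied to the first, as in Proposition~\ref{prop:antistate}. It saturates the inequality with $3p_3-p_2=\sqrt{a(u)a(v)}=3/8$, so your Step~3 does work.
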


The term ``Pl\"ucker'' records that \(u\wedge v\) is a decomposable bivector, hence a point of the Grassmannian in its Pl\"ucker embedding \cite{Levay,Harris,Landsberg}.  Its fully locally antisymmetric component controls \(p_3\), while \(a(u),a(v)\) are sums of squared
second-compound norms.  Theorem~\ref{thm:sharp-plucker} retains the decomposability constraint that is lost if the local swap-sector weights are treated as arbitrary probabilities.  The theorem uses the metric geometry of the Pl\"ucker embedding. Each cut sees a different projection of the same oriented two-area \(u\wedge v\).

\begin{theorem}[Positive rank-two consequence]
\label{thm:strong}
If \(C\succeq0\) and \(\rank C\leq2\), then
\begin{align}
 \cS(C)&\geq0,\label{eq:S-positive}\\
 q_3(C)&\geq
 \frac{2\Tr C^2-(\Tr C)^2}{8}\geq0,
 \label{eq:strong-bound}
\end{align}
where
\begin{equation}
 \cS(C)=3\|C\|_2^2+\sum_i\|\Tr_{\bar i}C\|_2^2
       -2\sum_i\|\Tr_iC\|_2^2 .
 \label{eq:S-def}
\end{equation}
Here \(\Tr_{\bar i}\) leaves subsystem \(i\), whereas \(\Tr_i\) leaves the complementary pair.
\end{theorem}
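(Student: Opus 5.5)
The plan is to diagonalize \(C\) and reduce both claims to Theorem~\ref{thm:sharp-plucker}. Write \(C=\lambda\,\ketbra{u}{u}+\mu\,\ketbra{v}{v}\) with \(\lambda,\mu\geq0\) and \(u,v\) orthonormal; if the rank is smaller, pad with an arbitrary orthonormal \(v\) and \(\mu=0\). Every term in \eqref{eq:q3} and \eqref{eq:S-def} is a Hilbert--Schmidt inner product of partial traces of rank-one projectors, so each one becomes a two-copy swap expectation. Concretely, for \(J\subseteq\{1,2,3\}\) I will use
\[
\Tr\!\big[(\Tr_J\ketbra{\psi}{\psi})^\dagger\,\Tr_J\ketbra{\phi}{\phi}\big]
=\langle\psi\otimes\phi,\,F_{\bar J}\,\psi\otimes\phi\rangle,
\qquad F_T=\prod_{i\in T}F_i .
\]
Both sides are real, since each is the overlap of two Hermitian reduced operators. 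Consequently every quadratic functional \(Q\) among those considered here splits as
\[
Q(C)=\lambda^2 Q[u,u]+\mu^2 Q[v,v]+2\lambda\mu\, Q[u,v],
\]
where \(Q[\psi,\phi]\) denotes the corresponding linear combination of swap expectations.

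Next I will pass to the parity decomposition. Set \(w_S=\|\Pi_S(\psi\otimes\phi)\|^2\); these weights sum to \(1\). Since \(F_i=P_i^+-P_i^-\), we have \(\langle F_T\rangle=\sum_S(-1)^{|S\cap T|}w_S\), so each functional is a weighted sum of the \(w_S\). For \(\cS\), whose swap form is \(3F_{123}+\sum_iF_i-2\sum_{j<k}F_{jk}\), counting signs gives the sector weights \(0,0,4,-12\) on \(|S|=0,1,2,3\). Hence
\[
\cS[\psi,\phi]=4\big(p_2(\psi,\phi)-3p_3(\psi,\phi)\big).
\]
On the diagonal, \(u\otimes u\) is invariant under \(F_1F_2F_3\), while the range of \(P_1^-P_2^-P_3^-\) has \(F_1F_2F_3=-1\). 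Therefore \(p_3(u,u)=0\) and \(\cS[u,u]=4a(u)\). It follows that
\[
\cS(C)=4\big[\lambda^2a(u)+\mu^2a(v)-2\lambda\mu\,(3p_3-p_2)(u,v)\big].
\]
Theorem~\ref{thm:sharp-plucker} bounds this below by \(4\big(\lambda\sqrt{a(u)}-\mu\sqrt{a(v)}\big)^2\geq0\), which proves \eqref{eq:S-positive}. Note that the positivity \(\lambda\mu\geq0\) is exactly what allows the upper bound on \(3p_3-p_2\) to be used.

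For \eqref{eq:strong-bound}, observe that the swap form of \(q_3\) factors as \(\prod_i(F_i-\tfrac12)\). This factorization gives sector weights \(\tfrac18,-\tfrac38,\tfrac98,-\tfrac{27}8\). The subtracted term \((2\Tr C^2-(\Tr C)^2)/8\) corresponds to \((2F_{123}-1)/8\), with weights \(\tfrac18,-\tfrac38,\tfrac18,-\tfrac38\). Subtracting, the difference has weights \(0,0,1,-3\), so it equals \(\tfrac14\) times the \(\cS\)-functional. I therefore expect the exact identity
\[
q_3(C)-\frac{2\Tr C^2-(\Tr C)^2}{8}=\frac{\cS(C)}4 ,
\]
valid for every Hermitian \(C\) by the same bilinear expansion. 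The first inequality in \eqref{eq:strong-bound} then follows from \eqref{eq:S-positive}. The last inequality is Cauchy--Schwarz: \((\lambda+\mu)^2\leq2(\lambda^2+\mu^2)\).

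The only delicate point is the bookkeeping. I must check the sign and weight tables carefully, and confirm that the cross terms are real, so that the coefficient really is \(2\lambda\mu\) and not a real part that could change sign. All the genuine difficulty is absorbed by Theorem~\ref{thm:sharp-plucker}.
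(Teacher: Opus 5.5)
Your proof is correct and follows the paper's own route. You use the exact spectral bridge \(\cS(C)=4[\lambda^2a(u)+\mu^2a(v)+2\lambda\mu(p_2-3p_3)]\) (Proposition~\ref{prop:spectral-bridge}) with Theorem~\ref{thm:sharp-plucker}, then the identity \(q_3(C)=\tfrac14\cS(C)+\tfrac18(2\|C\|_2^2-|\Tr C|^2)\) with \((\lambda-\mu)^2\geq0\); your sector-weight tables, which check out, are just the Walsh-transform form of the paper's marginal-overlap computation and coefficient comparison.
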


For a normalized state,
\(\cS(C)\geq0\) is equivalently
\begin{equation}
 3S_L(C)+\sum_iS_L(C_i)
 \leq2\sum_iS_L(C_{\bar i}),\qquad
 S_L(\rho)=1-\Tr\rho^2 .
 \label{eq:sssa}
\end{equation}
The linear entropy is a directly measurable second-order impurity, distinct from the von Neumann entropy.  Equation~\eqref{eq:sssa} is therefore a purity constraint, separate from the usual strong subadditivity theorem.
It belongs to the family of quantum shadow and sector-length inequalities
\cite{ShorLaflamme,RainsShadow,Rains,EltschkaEtAl,WyderkaGuhne}.
Wyderka and G\"uhne proved \eqref{eq:sssa} for all three-qubit states
\cite{WyderkaGuhne}; Theorem~\ref{thm:strong} allows arbitrary local dimensions under the rank-two hypothesis.

The proof begins with the exact two-vector reduction in Section~\ref{sec:reduction} and the sharp projection lemma in
Section~\ref{sec:projection}.  Sections~\ref{sec:nuclear-proof} and
\ref{sec:plucker-proof} establish the two main inequalities.
Section~\ref{sec:consequences} derives their positive, normal, and common-plane consequences.  Section~\ref{sec:nonnormal} identifies the remaining nonnormal determinant and proves two broad closures.
Section~\ref{sec:sharpness} constructs equality families and rank
boundaries; Section~\ref{sec:physical-meaning} gives the operational
interpretation.  The appendices contain the contraction identities and
two additional closed sectors.

\section{Exact two-vector reduction}
\label{sec:reduction}

We first separate the two spectral rays and express every mixed term in
local SWAP-parity variables.  Let
\begin{equation}
 C=\lambda\Pu+\mu\Pv,\qquad
 \lambda,\mu\geq0,\qquad \langle u,v\rangle=0,
 \label{eq:spectral}
\end{equation}
where \(P_w=\ketbra{w}{w}\).  For a unit tripartite vector \(w\), write
\(\rho_S^w=\Tr_{\bar S}P_w\), and set
\begin{equation}
 s_w=\sum_{i=1}^3\Tr[(\rho_i^w)^2],
 \qquad
 e_i(w)=1-\Tr[(\rho_i^w)^2].
 \label{eq:se-def}
\end{equation}
Thus \(e_i(w)\) is the linear entropy of entanglement across the pure
bipartition \(i:\bar i\).
Expanding the local swap projectors gives
\begin{equation}
 a(w)=\frac14(3-s_w)=\frac14\sum_i e_i(w).
 \label{eq:a-purity}
\end{equation}

For an orthonormal pair \(u,v\), define the marginal overlaps
\[
 x_i=\Tr(\rho_i^u\rho_i^v),\qquad
 y_{ij}=\Tr(\rho_{ij}^u\rho_{ij}^v).
\]
The Walsh expansion is the discrete Fourier transform on
\(\mathbb Z_2^3\).  Expanding the three commuting parity projectors and
using the SWAP identity gives, for every \(S\subseteq\{1,2,3\}\),
\begin{equation}
 \|\Pi_S(u\otimes v)\|^2
 =\frac1{8}\sum_{T\subseteq\{1,2,3\}}
 (-1)^{|S\cap T|}\Tr(\rho_T^u\rho_T^v).
 \label{eq:walsh-sector}
\end{equation}
The empty-set character is one, while the full-set character is
\(\Tr(P_uP_v)=|\langle u,v\rangle|^2=0\).  Summing the characters for
the all-minus outcome and for the three outcomes with exactly two minus
signs therefore gives
\begin{align}
 p_3&=\frac18\left(1-\sum_i x_i+\sum_{i<j}y_{ij}\right),\\
 p_2&=\frac18\left(3-\sum_i x_i-\sum_{i<j}y_{ij}\right).
 \label{eq:p2p3}
\end{align}
In particular,
\begin{equation}
 4(p_2-3p_3)=\sum_i x_i-2\sum_{i<j}y_{ij}.
 \label{eq:p-cross}
\end{equation}

The following algebraic identity does not require positivity, normality,
or a rank assumption.  For every tripartite operator \(C\), direct
coefficient comparison between \eqref{eq:q3} and \eqref{eq:S-def}
gives
\begin{equation}
 q_3(C)
 =\frac14\cS(C)
 +\frac18\bigl(2\|C\|_2^2-|\Tr C|^2\bigr).
 \label{eq:q-S}
\end{equation}

\begin{proposition}[Exact positive spectral bridge]
\label{prop:spectral-bridge}
For \(C=\lambda\Pu+\mu\Pv\) as above,
\begin{equation}
 \cS(C)
 =4\left[
 \lambda^2a(u)+\mu^2a(v)
 +2\lambda\mu\bigl(p_2(u,v)-3p_3(u,v)\bigr)
 \right].
 \label{eq:S-exterior}
\end{equation}
\end{proposition}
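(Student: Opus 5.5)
The plan is to expand \(\cS(C)\) as a quadratic form in \((\lambda,\mu)\), using that every partial trace is linear in \(C\). Since \(\lambda,\mu\) are real and the projectors are Hermitian, each term of \eqref{eq:S-def} splits as
\[
 \|\Tr_J C\|_2^2=\lambda^2\|\Tr_J P_u\|_2^2+\mu^2\|\Tr_J P_v\|_2^2+2\lambda\mu\,\Tr(\Tr_JP_u\,\Tr_JP_v),
\]
with no imaginary part in the cross term. In the notation of Section~\ref{sec:reduction},
\[
 \|\Tr_{\bar i}P_w\|_2^2=\Tr[(\rho_i^w)^2],\qquad \|\Tr_iP_w\|_2^2=\Tr[(\rho_{\bar i}^w)^2],\qquad \|P_w\|_2^2=1 .
\]
The cross traces are \(\Tr(P_uP_v)=0\), together with \(x_i\) and \(y_{jk}\) for \(\{j,k\}=\bar i\). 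So \(\cS(C)=\lambda^2\cS(P_u)+\mu^2\cS(P_v)+2\lambda\mu\,\mathcal B\), where \(\mathcal B\) is the cross term built from the polarized quantities.

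For the diagonal terms, we use purity of \(w\): the Schmidt spectra across \(i:\bar i\) coincide, so \(\Tr[(\rho_{\bar i}^w)^2]=\Tr[(\rho_i^w)^2]\). Then
\[
 \cS(P_w)=3+s_w-2s_w=3-s_w=4a(w)
\]
by \eqref{eq:a-purity}, which accounts for the \(\lambda^2\) and \(\mu^2\) terms.

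For the cross term, bilinearity gives
\[
 \mathcal B=3\cdot 0+\sum_i x_i-2\sum_{i<j}y_{ij}.
\]
Here the \(\|C\|_2^2\) term contributes \(\Tr(P_uP_v)=|\langle u,v\rangle|^2=0\), the single-site reductions contribute the \(x_i\), and the two-site reductions \(\Tr_iC\) contribute the \(y_{ij}\). By \eqref{eq:p-cross}, \(\mathcal B=4(p_2-3p_3)\). Hence \(2\lambda\mu\,\mathcal B=8\lambda\mu(p_2-3p_3)\), and assembling the three pieces gives exactly \eqref{eq:S-exterior}.

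There is no real obstacle here. The only points needing care are: the index convention, that \(\Tr_i\) leaves the pair \(\bar i\) so the \(-2\) weight lands on the \(y\)'s; the pure-state equality of complementary purities, which is what collapses the diagonal terms to \(a(w)\); and the vanishing of the full cross term by orthogonality. Everything else is already packaged in \eqref{eq:a-purity} and \eqref{eq:p-cross}.
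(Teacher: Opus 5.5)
Your proposal is correct and follows the paper's proof essentially verbatim: the diagonal terms collapse to \(4a(w)\) via equality of complementary purities for pure states, and the cross coefficient \(2\lambda\mu\bigl(\sum_i x_i-2\sum_{i<j}y_{ij}\bigr)\), with \(\Tr(P_uP_v)=0\), is converted by \eqref{eq:p-cross}. Your remark that the cross traces are real, being traces of products of Hermitian operators, is a harmless detail that the paper leaves implicit.
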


\begin{proof}
For a pure state, complementary purities agree.  Hence
\[
 \cS(P_w)=3-\sum_i\Tr[(\rho_i^w)^2]=4a(w).
\]
The cross coefficient in \(\cS(\lambda\Pu+\mu\Pv)\) is
\[
 2\lambda\mu
 \left(\sum_i x_i-2\sum_{i<j}y_{ij}\right),
\]
because \(\Tr(\Pu\Pv)=0\).  Equation~\eqref{eq:p-cross} proves
\eqref{eq:S-exterior}.
\end{proof}

\begin{remark}
Theorem~\ref{thm:sharp-plucker} immediately turns
\eqref{eq:S-exterior} into a perfect square:
\[
 \frac14\cS(C)\geq
 \bigl(\lambda\sqrt{a(u)}-\mu\sqrt{a(v)}\bigr)^2.
\]
The remaining sections establish the Pl\"ucker estimate without
assuming local dimensions.
\end{remark}

\section{The double-antisymmetric projection}
\label{sec:projection}

The sharp partial-trace constant comes from one projection problem.
Let \(\mathcal K,\mathcal L\) be finite-dimensional Hilbert spaces.  On
\((\mathcal K\otimes\mathcal L)^{\otimes2}\), let
\(F_{\mathcal K}\) and \(F_{\mathcal L}\) be the local copy swaps and put
\[
 Q=P_{\mathcal K}^-P_{\mathcal L}^-,
 \qquad
 P_{\mathcal K}^-=\frac{I-F_{\mathcal K}}2,\quad
 P_{\mathcal L}^-=\frac{I-F_{\mathcal L}}2.
\]
For an orthogonal projection \(P\), the variational quantity
\[
 \sup_{\substack{\|\psi\|=1\\\SR(\psi)\leq k}}
 \langle\psi,P\psi\rangle
\]
is its \(S(k)\)-operator norm \cite{JohnstonKribs,JohnstonEtAl}.
The following lemma determines the \(S(2)\)-norm of \(Q\).  It is the
sharp form of the half-property studied by Pankowski \emph{et al.} and
is the geometric input of the recent two-copy solution
\cite{PankowskiEtAl,FuGaoPark}.  Fu, Gao, and Park state their
Theorem~2.4 for equal local dimensions; the proof below gives the
rectangular \(\mathcal K\otimes\mathcal L\) form needed here.

\begin{lemma}[Sharp double-antisymmetric projection]
\label{lem:double-antisymmetric}
If \(\psi\) is a unit vector of Schmidt rank at most two across the copy
cut
\[
 (\mathcal K\otimes\mathcal L)_{\mathrm{copy}\,1}:
 (\mathcal K\otimes\mathcal L)_{\mathrm{copy}\,2},
\]
then
\begin{equation}
 \langle\psi,Q\psi\rangle\leq\frac12.
 \label{eq:double-antisymmetric}
\end{equation}
If \(\dim\mathcal K,\dim\mathcal L\geq2\), the constant \(1/2\) is optimal.
\end{lemma}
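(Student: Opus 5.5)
The plan is to expand \(Q\) in swap operators. Then use the full-swap parity to reduce \eqref{eq:double-antisymmetric} to one scalar inequality, and prove that inequality with the Schmidt-rank-two structure.

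\textbf{Reduction.} Since \(F_{\mathcal K}\) and \(F_{\mathcal L}\) commute and \(F:=F_{\mathcal K}F_{\mathcal L}\) is the full copy swap,
\[
 4Q=I-F_{\mathcal K}-F_{\mathcal L}+F .
\]
So \eqref{eq:double-antisymmetric} is equivalent to
\begin{equation}
 \langle\psi,F\psi\rangle-1\leq
 \langle\psi,F_{\mathcal K}\psi\rangle+\langle\psi,F_{\mathcal L}\psi\rangle .
 \label{eq:plan-key}
\end{equation}
Because \(F=+1\) on \(\operatorname{ran}Q\), only the \(F\)-symmetric part of \(\psi\) contributes to \(\langle\psi,Q\psi\rangle\). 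Equivalently, with \(P^{\pm\pm}\) the four joint local parity projectors,
\[
 \langle P^{++}\rangle+\langle P^{--}\rangle=\tfrac12\bigl(1+\langle F\rangle\bigr),
 \qquad
 \langle P^{++}\rangle-\langle P^{--}\rangle=\tfrac12\bigl(\langle F_{\mathcal K}\rangle+\langle F_{\mathcal L}\rangle\bigr).
\]
Hence \(\langle\psi,Q\psi\rangle\leq\tfrac12\) is the statement that the \(--\) weight of a Schmidt-rank-two vector exceeds its \(++\) weight by at most \(\tfrac12(1-\langle F\rangle)\).

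\textbf{Proof of \eqref{eq:plan-key}.} Write the copy-cut Schmidt decomposition
\[
 \psi=s_1\,x_1\otimes y_1+s_2\,x_2\otimes y_2,\qquad s_1^2+s_2^2=1,
\]
with \(\{x_j\}\) and \(\{y_j\}\) orthonormal in \(\mathcal K\otimes\mathcal L\). The SWAP identity converts each matrix element into an overlap of reduced operators:
\[
 \langle x_j\otimes y_j,F_{\mathcal K}\,x_k\otimes y_k\rangle
 =\Tr\bigl[\Tr_{\mathcal L}(|x_k\rangle\langle x_j|)\,\Tr_{\mathcal L}(|y_k\rangle\langle y_j|)\bigr],
\]
with the analogous formula for \(F_{\mathcal L}\) using \(\Tr_{\mathcal K}\). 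This is the same computation as the Walsh expansion \eqref{eq:walsh-sector}, now with off-diagonal Schmidt terms.

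Each side of \(\psi\) lives in a two-dimensional subspace (\(X=\operatorname{span}\{x_1,x_2\}\), \(Y=\operatorname{span}\{y_1,y_2\}\)). So the three quantities in \eqref{eq:plan-key} become quadratic forms in \((s_1,s_2)\) whose coefficients are \(2\times2\) blocks of HS inner products of the local marginals of \(X\) and \(Y\). I would then prove \eqref{eq:plan-key} from two facts:
\begin{itemize}
 \item positivity of the Gram matrices of the operators \(\Tr_{\mathcal L}(|x_k\rangle\langle x_j|)\), and similarly for \(Y\);
 \item the purity identity for pure vectors, \(\Tr[(\Tr_{\mathcal L}P_x)^2]=\Tr[(\Tr_{\mathcal K}P_x)^2]\), which already drove Proposition~\ref{prop:spectral-bridge}.
\end{itemize}
Together these should bound \(\langle F\rangle-\langle F_{\mathcal K}\rangle-\langle F_{\mathcal L}\rangle\) by \(1\), via Cauchy--Schwarz on the cross term \(2s_1s_2\,\Real(\cdot)\).

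I expect this to be the main obstacle: organizing the off-diagonal terms so that the rank-two restriction is genuinely used. Without it the bound fails, since \(Q\) has full norm \(1\). My first attempt is to diagonalize the \(2\times2\) Gram matrix \(G_{jk}=\langle x_j\otimes y_j,Q\,x_k\otimes y_k\rangle\). I would bound its largest eigenvalue by \(\tfrac12\) through
\[
 \lambda_{\max}(G)\leq\tfrac12(G_{11}+G_{22})+\Bigl[\tfrac14(G_{11}-G_{22})^2+|G_{12}|^2\Bigr]^{1/2},
\]
controlling \(|G_{12}|\) by the orthogonality \(\langle x_1,x_2\rangle=\langle y_1,y_2\rangle=0\). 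If that is too lossy, I would fall back on the Fu--Gao--Park argument \cite{FuGaoPark}, which only uses the two-dimensional supports and so carries over verbatim to rectangular \(\mathcal K\otimes\mathcal L\).

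\textbf{Optimality.} Embed \(\mathbb C^2\otimes\mathbb C^2\) in \(\mathcal K\otimes\mathcal L\) and take
\[
 \psi=\tfrac1{\sqrt2}\bigl(|00\rangle|11\rangle+|11\rangle|00\rangle\bigr),
\]
which has Schmidt rank two across the copy cut. Then \(F\psi=\psi\), while \(F_{\mathcal K}\psi\) and \(F_{\mathcal L}\psi\) are supported on \(|10\rangle|01\rangle\) and \(|01\rangle|10\rangle\), which are orthogonal to \(\psi\). So \(\langle F_{\mathcal K}\rangle=\langle F_{\mathcal L}\rangle=0\) and \(\langle\psi,Q\psi\rangle=\tfrac14(1+1)=\tfrac12\), so the constant is attained.
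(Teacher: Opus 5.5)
Your reduction \(4Q=I-F_{\mathcal K}-F_{\mathcal L}+F\) and your optimality example (the paper's own) are correct. However, the reduced inequality \(\langle F\rangle-1\le\langle F_{\mathcal K}\rangle+\langle F_{\mathcal L}\rangle\) for Schmidt-rank-two \(\psi\) is never proved, and none of the mechanisms you list supplies it. This is the half-property of Pankowski et al., the geometric input of the two-copy solution, so a generic Cauchy--Schwarz step is not enough.

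The \(2\times2\) eigenvalue formula you quote is an identity. Bounding \(\lambda_{\max}(G)\) by \(\tfrac12\) is therefore just the lemma restated on \(\operatorname{span}\{x_1\otimes y_1,x_2\otimes y_2\}\); what you need is \(|G_{12}|^2\le(\tfrac12-G_{11})(\tfrac12-G_{22})\). Orthogonality of the Schmidt vectors removes only the identity contribution \(\langle x_1,x_2\rangle\langle y_1,y_2\rangle\) from \(G_{12}\). The full-swap part \(\langle x_1,y_2\rangle\langle y_1,x_2\rangle\) and the two local transition overlaps survive and remain uncontrolled. The purity identity does not help either. In a non-symmetric Schmidt decomposition the diagonal terms are \(\Tr(\rho^{x_j}_{\mathcal K}\rho^{y_j}_{\mathcal K})\) and \(\Tr(\rho^{x_j}_{\mathcal L}\rho^{y_j}_{\mathcal L})\) with \(x_j\neq y_j\), and these need not coincide. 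Finally, you correctly observe that only the \(F\)-symmetric part of \(\psi\) matters, but this cannot be used directly. The vector \((\psi+F\psi)/2\) can have Schmidt rank four, and symmetry alone gives nothing: the product of the \(\mathcal K\)-singlet and the \(\mathcal L\)-singlet across the copies is \(F\)-symmetric, has Schmidt rank four, and has \(\langle Q\rangle=1\).

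The missing idea is to dualize before using the rank, so that the symmetry sits on a vector you are allowed to truncate. Since \(Q\) is a projection, \(\langle\psi,Q\psi\rangle=\sup|\langle\phi,\psi\rangle|^2\) over unit \(\phi\in\operatorname{ran}Q\). Exchanging the suprema shows that the Schmidt-rank-two supremum equals \(\sup_\phi\bigl(s_1(\phi)^2+s_2(\phi)^2\bigr)\).

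Now \(\phi\), not \(\psi\), is fixed by \(F\). Takagi factorization gives \(\phi=\sum_js_je_j\otimes e_j\), and the truncation \(\eta=c_1e_1\otimes e_1+c_2e_2\otimes e_2\), with \(c_j=s_j/(s_1^2+s_2^2)^{1/2}\), is still \(F\)-symmetric and has Schmidt rank two. For such \(\eta\) one has \(\langle\eta,F\eta\rangle=1\) and \(\langle\eta,F_{\mathcal K}\eta\rangle=\langle\eta,F_{\mathcal L}\eta\rangle\). Your reduced inequality therefore becomes \(\langle\eta,F_{\mathcal L}\eta\rangle\ge0\). Writing \(e_j\) as matrices \(E_j\), Schatten H\"older bounds the cross term by \(\|E_1\|_4^2\|E_2\|_4^2\), giving \(\langle\eta,F_{\mathcal L}\eta\rangle\ge\bigl(c_1\|E_1\|_4^2-c_2\|E_2\|_4^2\bigr)^2\). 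Then \(s_1^2+s_2^2=|\langle Q\eta,\phi\rangle|^2\le\|Q\eta\|^2\le\tfrac12\).

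Without this step or an equivalent one, the proposal does not establish \eqref{eq:double-antisymmetric}. Your fallback citation does not close the gap either. Fu--Gao--Park state their theorem for equal local dimensions, and the claim that it carries over verbatim to rectangular \(\mathcal K\otimes\mathcal L\) is exactly what would need checking.
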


\begin{proof}
This is the sharp projection statement used by Fu, Gao, and Park \cite[Theorem 2.4]{FuGaoPark}; we include a self-contained proof. If \(\dim\mathcal K<2\) or \(\dim\mathcal L<2\), then one local antisymmetric projector vanishes, so \(Q=0\) and the assertion is
immediate.  Hence assume \(\dim\mathcal K,\dim\mathcal L\geq2\).
For a unit \(\phi\in\operatorname{ran}Q\), the full copy swap
\(F_{\mathcal K}F_{\mathcal L}\) fixes \(\phi\).  Takagi factorization
therefore gives
\[
 \phi=\sum_j s_j e_j\otimes e_j,\qquad
 s_1\geq s_2\geq\cdots\geq0,\qquad \sum_js_j^2=1.
\]
The largest squared overlap of \(\phi\) with a unit vector of Schmidt
rank at most two is \(s_1^2+s_2^2\).  Because \(Q\) is an orthogonal
projection, taking the two suprema in either order gives
\begin{equation}
\begin{split}
 \sup_{\substack{\|\psi\|=1\\\SR(\psi)\leq2}}
 \langle\psi,Q\psi\rangle
 &=
 \sup_{\substack{\|\psi\|=1\\\SR(\psi)\leq2}}
 \sup_{\substack{\phi\in\operatorname{ran}Q\\\|\phi\|=1}}
 |\langle\phi,\psi\rangle|^2\\
 &=
 \sup_{\substack{\phi\in\operatorname{ran}Q\\\|\phi\|=1}}
 \bigl(s_1(\phi)^2+s_2(\phi)^2\bigr).
\end{split}
 \label{eq:S2-proj}
\end{equation}

Fix \(\phi\), put \(t=s_1^2+s_2^2\), and define
\[
 \eta=t^{-1/2}
 (s_1e_1\otimes e_1+s_2e_2\otimes e_2)
 =c_1e_1\otimes e_1+c_2e_2\otimes e_2.
\]
Identify \(e_j\in\mathcal K\otimes\mathcal L\) with a rectangular matrix \(E_j\).  Since \(\eta\) is symmetric under the full copy swap,
\[
 \langle\eta,Q\eta\rangle
 =\frac12\bigl(1-\langle\eta,F_{\mathcal L}\eta\rangle\bigr).
\]
The remaining expectation is
\begin{align*}
 \langle\eta,F_{\mathcal L}\eta\rangle
 &=c_1^2\Tr[(E_1^\dagger E_1)^2]
   +c_2^2\Tr[(E_2^\dagger E_2)^2] 
   +2c_1c_2\Real\Tr(E_1^\dagger E_2E_1^\dagger E_2)\\
 &\geq
 \bigl(c_1\|E_1\|_4^2-c_2\|E_2\|_4^2\bigr)^2\geq0.
\end{align*}
Indeed,
\[
 |\Tr(E_1^\dagger E_2E_1^\dagger E_2)|
 \leq\|E_1^\dagger E_2\|_2^2
 \leq\|E_1\|_4^2\|E_2\|_4^2
\]
by Schatten H\"older.  Thus
\(\|Q\eta\|^2=\langle\eta,Q\eta\rangle\leq1/2\).  Since
\(Q\phi=\phi\),
\[
 t=|\langle\eta,\phi\rangle|^2
   =|\langle Q\eta,\phi\rangle|^2
   \leq\|Q\eta\|^2\leq\frac12.
\]
Taking the supremum in \eqref{eq:S2-proj} proves the bound. For optimality, choose orthonormal local vectors
\(|1\rangle,|2\rangle\) and let
\[
 \psi=\frac{
 |11\rangle_{\mathrm{copy}\,1}|22\rangle_{\mathrm{copy}\,2}
 +|22\rangle_{\mathrm{copy}\,1}|11\rangle_{\mathrm{copy}\,2}}
 {\sqrt2}.
\]
It has Schmidt rank two across the copy cut and
\(\langle\psi,Q\psi\rangle=1/2\).
\end{proof}

\section{Proof of the quadratic partial-trace inequality}
\label{sec:nuclear-proof}

The SVD transfers the projection bound from two-copy vectors to an
arbitrary bipartite operator.

\begin{proof}[Proof of Theorem~\ref{thm:nuclear-pt}]
Take a singular-value decomposition
\[
 X=\sum_{\alpha=1}^r s_\alpha
 \ketbra{u_\alpha}{v_\alpha},\qquad s_\alpha>0,
\]
where each of the two vector families is orthonormal.  Put
\[
 X_\alpha=\ketbra{u_\alpha}{v_\alpha},\qquad
 \Phi(Z)=(\Tr_{\mathcal K}Z,\Tr_{\mathcal L}Z).
\]
The two-component target of \(\Phi\) carries the direct-sum
Hilbert--Schmidt inner product.
With \(F=F_{\mathcal K}F_{\mathcal L}\), the operator
\[
 K=F_{\mathcal K}+F_{\mathcal L}-I=F-4Q
\]
satisfies \(QF=FQ=Q\).  The swap trick gives
\begin{align}
 &\langle\Phi(\ketbra{u}{v}),\Phi(\ketbra{s}{t})\rangle_{\HS}
 -\overline{\Tr(\ketbra{u}{v})}\Tr(\ketbra{s}{t})\notag\\
 &\hspace{32mm}
 =\langle u\otimes t,K(v\otimes s)\rangle .
 \label{eq:swap-kernel}
\end{align}
For example, the \(\Tr_{\mathcal K}\) term on the left equals
\(\langle u\otimes t,F_{\mathcal L}(v\otimes s)\rangle\).

Define
\[
 \gamma_{\alpha\beta}
 =\langle\Phi(X_\alpha),\Phi(X_\beta)\rangle_{\HS}
  -\overline{\Tr X_\alpha}\Tr X_\beta .
\]
On the diagonal, \eqref{eq:swap-kernel} gives
\[
 \gamma_{\alpha\alpha}
 =1-4\|Q(u_\alpha\otimes v_\alpha)\|^2\leq1.
\]
For \(\alpha\neq\beta\), the full-swap contribution vanishes by
orthogonality.  Set
\[
 a=u_\alpha\otimes v_\beta,\qquad
 b=u_\beta\otimes v_\alpha,\qquad
 z=\langle a,Qb\rangle .
\]
Equivalently,
\[
 \gamma_{\alpha\beta}
 =\langle a,(I-4Q)b\rangle
 =-4\langle a,Qb\rangle=-4z,
\]
because \(\langle a,b\rangle=0\).  The vectors \(a,b\) are orthonormal,
and
\(\eta_\theta=(a+e^{i\theta}b)/\sqrt2\) has Schmidt rank at most two
across the copy cut.  Choose \(\theta\) so that the cross term is
\(|z|\), and set
\[
 p=\langle a,Qa\rangle,\qquad q=\langle b,Qb\rangle.
\]
Lemma~\ref{lem:double-antisymmetric} and Cauchy--Schwarz in the
semidefinite inner product induced by \(Q\) give, respectively,
\[
 p+q+2|z|\leq1,\qquad
 2|z|\leq2\sqrt{pq}\leq p+q.
\]
Consequently \(|z|\leq1/4\), and hence
\(\Real\gamma_{\alpha\beta}\leq1\) also off the diagonal.
It follows that
\begin{align*}
 &\|\Tr_{\mathcal K}X\|_2^2+\|\Tr_{\mathcal L}X\|_2^2-|\Tr X|^2\\
 &=\sum_{\alpha,\beta}s_\alpha s_\beta
   \Real\gamma_{\alpha\beta}
 \leq\left(\sum_\alpha s_\alpha\right)^2
 =\|X\|_1^2.
\end{align*}
The double sum is real because
\(\gamma_{\beta\alpha}=\overline{\gamma_{\alpha\beta}}\).

For a rank-one projector onto a product of unit vectors, the two sides
of \eqref{eq:nuclear-pt} equal two; with the coefficient of
\(\|X\|_1^2\) fixed, this forces the coefficient of \(|\Tr X|^2\) to be
at least one.  Conversely, take
\(X=\ketbra{a}{a}\otimes\ketbra{b}{b'}\) with
\(a,b,b'\) unit and \(\langle b,b'\rangle=0\).  Its left side and
\(\|X\|_1^2\) both equal
one, while \(\Tr X=0\), forcing the coefficient of
\(\|X\|_1^2\) to be at least one.
\end{proof}

\section{Exterior-square control and the sharp Pl\"ucker theorem}
\label{sec:plucker-proof}

We now retain the second compound of each reduced transition operator.
Use the standard Hilbert exterior-power normalization
\[
 e_i\wedge e_j\ \longmapsto\
 \frac{e_i\otimes e_j-e_j\otimes e_i}{\sqrt2}
 \qquad(i<j).
\]
Thus the wedges of an orthonormal basis are orthonormal.
The symbol \(\wedgeop X\) denotes the induced map on the second
exterior power; if
\(\sigma_1(X)\geq\sigma_2(X)\geq\cdots\) are the singular values of
\(X\), then the singular values of \(\wedgeop X\) are
\(\sigma_i(X)\sigma_j(X)\), \(i<j\).

\begin{lemma}[Transition exterior-square bound]
\label{lem:wedge-bound}
Let \(u,v\) be normalized vectors in \(\cH_{\bar k}\otimes\cH_k\), and put
\[
 X_k=\Tr_k\ketbra{u}{v}.
\]
Then
\begin{equation}
 \|X_k\|_1^2-\|X_k\|_2^2
 \leq
 \sqrt{e_k(u)e_k(v)} .
 \label{eq:wedge-bound}
\end{equation}
\end{lemma}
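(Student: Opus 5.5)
The plan is to pass to coefficient matrices and then apply Hölder's inequality on the second exterior power. Identify \(u\in\cH_{\bar k}\otimes\cH_k\) with a matrix \(A:\overline{\cH_k}\to\cH_{\bar k}\), so that \(u=\sum_{m,n}A_{mn}|m\rangle|n\rangle\). Identify \(v\) with \(B\) in the same way. A direct index contraction gives
\[
 X_k=\Tr_k\ketbra{u}{v}=AB^\dagger ,
\]
and \(\|A\|_2=\|B\|_2=1\). The squared singular values \(\alpha_i^2\) of \(A\) are the eigenvalues of \(AA^\dagger=\rho^u_{\bar k}\). These agree with the nonzero eigenvalues of \(\rho^u_k\), since for a pure state the two complementary reductions share their nonzero spectrum. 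Hence \(\sum_i\alpha_i^2=1\) and \(e_k(u)=1-\sum_i\alpha_i^4\). The same holds for \(B\), with singular values \(\beta_i\).

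The first step rewrites the left side of \eqref{eq:wedge-bound} through the second compound. Let \(\sigma_i\) be the singular values of \(X_k\). Then
\[
 \|X_k\|_1^2-\|X_k\|_2^2=\Bigl(\sum_i\sigma_i\Bigr)^2-\sum_i\sigma_i^2=2\sum_{i<j}\sigma_i\sigma_j=2\|\wedgeop X_k\|_1 .
\]
This uses the stated fact that the singular values of \(\wedgeop X\) are the products \(\sigma_i\sigma_j\) with \(i<j\).

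The second step uses functoriality of the exterior square, \(\wedgeop(AB^\dagger)=(\wedgeop A)(\wedgeop B)^\dagger\). Schatten Hölder with exponents \(2,2\) then gives
\[
 \|\wedgeop X_k\|_1\leq\|\wedgeop A\|_2\,\|\wedgeop B\|_2 .
\]
The third step evaluates these Hilbert--Schmidt norms:
\[
 \|\wedgeop A\|_2^2=\sum_{i<j}\alpha_i^2\alpha_j^2=\tfrac12\Bigl[\Bigl(\sum_i\alpha_i^2\Bigr)^2-\sum_i\alpha_i^4\Bigr]=\tfrac12 e_k(u),
\]
and likewise \(\|\wedgeop B\|_2^2=\tfrac12e_k(v)\). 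Combining the three steps yields
\[
 \|X_k\|_1^2-\|X_k\|_2^2\leq2\sqrt{\tfrac14e_k(u)e_k(v)}=\sqrt{e_k(u)e_k(v)} .
\]

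The only delicate point is bookkeeping rather than analysis. One must check that the partial trace over \(\cH_k\) really produces \(AB^\dagger\) and not \(A^TB\) or \(\bar A B^T\). One must also check that the compound of an adjoint is the adjoint of the compound under the chosen unitary normalization of wedges. Both are standard: the normalization \((e_i\otimes e_j-e_j\otimes e_i)/\sqrt2\) makes \(\wedgeop\) a \(*\)-homomorphism on \(\cL(\cH)\). With these in place, the proof above is complete.
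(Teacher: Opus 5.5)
Your proposal is correct and follows the paper's proof essentially step for step. Both arguments flatten \(u,v\) so that \(X_k=AB^\dagger\), rewrite the left side as \(2\|\wedgeop X_k\|_1\), apply functoriality and Schatten H\"older, and evaluate \(\|\wedgeop A\|_2^2=\tfrac12e_k(u)\). The only cosmetic difference is that you read the purity off \(AA^\dagger=\rho^u_{\bar k}\) together with complementary spectra, whereas the paper uses \(U^\dagger U=(\rho^u_k)^T\) directly.
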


\begin{proof}
Flatten \(u,v\) across \(\bar k:k\) into rectangular matrices
\(U,V:\cH_k\to\cH_{\bar k}\).  Then \(X_k=UV^\dagger\) and
\(\|U\|_2=\|V\|_2=1\).  With the stated vectorization convention,
\[
 U^\dagger U=(\rho_k^u)^T,\qquad
 V^\dagger V=(\rho_k^v)^T;
\]
the transpose does not change either spectrum or purity.  The singular values of
\(\wedgeop X_k\) are the pairwise products of those of \(X_k\), so
\[
 \|X_k\|_1^2-\|X_k\|_2^2=2\|\wedgeop X_k\|_1.
\]
These standard compound-matrix identities hold for rectangular maps as
well \cite{HornJohnson}.
Functoriality of exterior powers and Schatten H\"older imply
\[
 \|\wedgeop X_k\|_1
 \leq\|\wedgeop U\|_2\|\wedgeop V\|_2.
\]
Finally,
\[
 \|\wedgeop U\|_2^2
 =\frac12\left(\|U\|_2^4-\|U^\dagger U\|_2^2\right)
 =\frac12\left(1-\Tr[(\rho_k^u)^2]\right)
 =\frac12e_k(u),
\]
and similarly for \(V\).  Combining these identities proves
\eqref{eq:wedge-bound}.
\end{proof}

\begin{proof}[Proof of Theorem~\ref{thm:sharp-plucker}]
For each \(k\in\{1,2,3\}\), let
\(\{i,j,k\}=\{1,2,3\}\), and consider
\[
 X_k=\Tr_k\ketbra{u}{v}\in\cL(\cH_i\otimes\cH_j).
\]
The transition-operator identity
\begin{equation}
 \|\Tr_S\ketbra{u}{v}\|_2^2
 =\Tr(\rho_S^u\rho_S^v)
 \label{eq:transition-identity}
\end{equation}
and \(u\perp v\) give
\[
 \|X_k\|_2^2=x_k,\qquad
 \|\Tr_iX_k\|_2^2=y_{ik},\qquad
 \|\Tr_jX_k\|_2^2=y_{jk},\qquad
 \Tr X_k=0.
\]
Theorem~\ref{thm:nuclear-pt} and
Lemma~\ref{lem:wedge-bound} therefore yield the cutwise estimate
\begin{equation}
 y_{ik}+y_{jk}-x_k
 \leq\sqrt{e_k(u)e_k(v)}.
 \label{eq:one-k-bound}
\end{equation}
Summing over \(k\) counts every \(y_{ij}\) twice.  Cauchy--Schwarz then
gives
\[
 2\sum_{i<j}y_{ij}-\sum_kx_k
 \leq\sum_k\sqrt{e_k(u)e_k(v)}
 \leq\sqrt{\sum_ke_k(u)\sum_ke_k(v)}.
\]
Using \eqref{eq:p2p3} and \eqref{eq:a-purity},
\[
 3p_3-p_2
 =\frac14\left(2\sum_{i<j}y_{ij}-\sum_kx_k\right)
 \leq\sqrt{a(u)a(v)}.
\]
Proposition~\ref{prop:antistate} below supplies equality examples.
\end{proof}

\begin{remark}[ ]
The nuclear norm in Theorem~\ref{thm:nuclear-pt} is not replaced by a rank bound.  Instead, Lemma~\ref{lem:wedge-bound} keeps the second compound matrix of \(X_k=U_kV_k^\dagger\).  Only after obtaining the three cutwise geometric means in \eqref{eq:one-k-bound} is the ordinary
Cauchy--Schwarz inequality applied.
\end{remark}

\section{Rank-two consequences}
\label{sec:consequences}

The Pl\"ucker estimate closes the positive spectral sector through the exact bridge \eqref{eq:S-exterior}.

\begin{proof}[Proof of Theorem~\ref{thm:strong}]
Equations~\eqref{eq:S-exterior} and
\eqref{eq:sharp-plucker} give the stronger square
\begin{equation}
 \frac14\cS(C)
 \geq
 \bigl(\lambda\sqrt{a(u)}-\mu\sqrt{a(v)}\bigr)^2
 \geq0.
 \label{eq:sector-square}
\end{equation}
Moreover,
\[
 \|C\|_2^2=\lambda^2+\mu^2,\qquad
 \Tr C=\lambda+\mu,
\]
so
\[
 2\|C\|_2^2-|\Tr C|^2=(\lambda-\mu)^2\geq0.
\]
Substitution into \eqref{eq:q-S} proves
\eqref{eq:strong-bound}.
\end{proof}

Recall that \(C\) is normal when \(C^\dagger C=CC^\dagger\),
equivalently when it admits an orthonormal spectral basis.
Theorem~\ref{thm:strong} controls the same-phase positive spectral
combination.  The opposite-sign theorem in
\cite[Proposition 3]{CostaRico} controls the other real extreme of the
polarized cross term; together, the two bounds cover arbitrary relative
phases.  In the present normalization, the opposite-sign result states
\[
 q_3(\alpha P_u-\beta P_v)\geq0
 \qquad(u\perp v,\ \alpha,\beta>0).
\]

\begin{corollary}[Normal rank-two endpoint]
\label{cor:normal-rank2}
Every normal tripartite operator \(C\) of rank at most two obeys
\(q_3(C)\geq0\).
\end{corollary}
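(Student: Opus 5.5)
Write a normal operator of rank at most two in spectral form, \(C=z_1P_u+z_2P_v\) with \(u\perp v\) unit vectors and \(z_1,z_2\in\mathbb C\). Polarize \(q_3=b_3(C,C)\):
\[
 q_3(C)=|z_1|^2q_3(P_u)+|z_2|^2q_3(P_v)+2\Real\bigl(\overline{z_1}z_2\,b_3(P_u,P_v)\bigr).
\]
Every partial trace of a Hermitian operator is Hermitian, so \(b_3(P_u,P_v)\) is a real number; call it \(\beta\). Writing \(\overline{z_1}z_2=|z_1||z_2|e^{i\theta}\), we get
\[
 q_3(C)=|z_1|^2A+|z_2|^2B+2|z_1||z_2|\beta\cos\theta ,
\]
with \(A=q_3(P_u)\) and \(B=q_3(P_v)\). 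This expression is affine in \(\cos\theta\in[-1,1]\), so its minimum is attained at \(\cos\theta=\pm1\). The whole argument therefore reduces to the two real extremes.

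First case, \(\cos\theta=1\). Here \(q_3(C)=q_3(\lambda P_u+\mu P_v)\) with \(\lambda=|z_1|\), \(\mu=|z_2|\geq0\). This operator is positive semidefinite of rank at most two, so Theorem~\ref{thm:strong} gives \(q_3\geq0\).

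Second case, \(\cos\theta=-1\). Here the expression equals \(q_3(\lambda P_u-\mu P_v)\).
\begin{itemize}
\item If \(\lambda,\mu>0\), the opposite-sign result \cite[Proposition 3]{CostaRico}, recalled just before the corollary, gives \(q_3(\lambda P_u-\mu P_v)\geq0\).
\item If \(\lambda=0\) or \(\mu=0\), the operator is a nonnegative multiple of a single rank-one projector, up to sign. Since \(q_3(-X)=q_3(X)\), this reduces to the positive case.
\end{itemize}

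Both endpoints are nonnegative, so the affine function is nonnegative on all of \([-1,1]\). Hence \(q_3(C)\geq0\) for every relative phase. Rank zero or one is the subcase in which one coefficient vanishes.

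The only point needing care is that the cross term \(b_3(P_u,P_v)\) is real. This is what lets the relative phase enter only through \(\cos\theta\). It holds because each term \(\Tr[(\Tr_JP_u)^\dagger\Tr_JP_v]\) is the Hilbert--Schmidt inner product of two Hermitian operators, and such an inner product is real.

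Everything else is the convexity-in-phase argument combined with the two cited sign results. The phase of \(z_1\) itself is also harmless, since \(q_3(e^{i\varphi}C)=q_3(C)\).
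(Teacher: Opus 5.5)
Your proof is correct and takes essentially the paper's route: spectral form, the real polarized cross term $b_3(P_u,P_v)$, and the two real extremes supplied by Theorem~\ref{thm:strong} and Costa Rico's opposite-sign result. The only cosmetic difference is in the final step. The paper first extracts $|b_3(P_u,P_v)|\leq\sqrt{q_3(P_u)q_3(P_v)}$ and closes with a perfect square, whereas you interpolate affinely in $\cos\theta$ at fixed moduli.
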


\begin{proof}
Write
\[
 C=z_1\Pu+z_2\Pv,\qquad u\perp v,
\]
and use the sesquilinear polarization \(b_3\) of \(q_3\).  Put
\[
 A=q_3(\Pu),\qquad B=q_3(\Pv),\qquad
 Z=b_3(\Pu,\Pv)\in\mathbb R.
\]
Theorem~\ref{thm:strong}, applied to every positive spectral pair,
gives
\[
 \alpha^2A+\beta^2B+2\alpha\beta Z\geq0
 \quad(\alpha,\beta\geq0),
\]
and hence \(Z\geq-\sqrt{AB}\).  Costa Rico's opposite-sign theorem
gives
\[
 \alpha^2A+\beta^2B-2\alpha\beta Z\geq0
 \quad(\alpha,\beta>0),
\]
and hence \(Z\leq\sqrt{AB}\).  Thus \(|Z|\leq\sqrt{AB}\), and
\[
 q_3(C)
 =|z_1|^2A+|z_2|^2B
  +2\Real(\overline{z_1}z_2)Z
 \geq
 \bigl(|z_1|\sqrt A-|z_2|\sqrt B\bigr)^2\geq0.
\]
\end{proof}

\begin{corollary}[Common initial--final two-plane]
\label{thm:common-plane}
If
\begin{equation}
 \dim\bigl(\operatorname{ran}C+\operatorname{ran}C^\dagger\bigr)\leq2,
 \label{eq:common-plane}
\end{equation}
then \(q_3(C)\geq0\).
\end{corollary}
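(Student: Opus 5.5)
The plan is to reduce the corollary to Corollary~\ref{cor:normal-rank2} by splitting $C$ into Hermitian and anti-Hermitian parts. We then show that the cross term of $q_3$ between these parts vanishes identically.

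\emph{Step 1: both parts live on a two-plane.} Let $W=\operatorname{ran}C+\operatorname{ran}C^\dagger$, so $\dim W\leq2$ by \eqref{eq:common-plane}. Both $C$ and $C^\dagger$ vanish on $W^\perp$: indeed $\ker C=(\operatorname{ran}C^\dagger)^\perp\supseteq W^\perp$, and likewise $\ker C^\dagger\supseteq W^\perp$. Both also map into $W$. Hence $C=P_WCP_W$, i.e.\ $C$ is an operator on $W$ extended by zero. Write
\[
 C=H_1+iH_2,\qquad H_1=\tfrac12(C+C^\dagger),\quad H_2=\tfrac1{2i}(C-C^\dagger).
\]
Each $H_k$ is Hermitian, has range in $W$, and therefore has rank at most two. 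Being Hermitian, each $H_k$ is normal. Corollary~\ref{cor:normal-rank2} then gives $q_3(H_1)\geq0$ and $q_3(H_2)\geq0$.

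\emph{Step 2: expand by polarization.} Sesquilinearity of $b_3$ in \eqref{eq:b3} gives
\[
 q_3(H_1+iH_2)=q_3(H_1)+q_3(H_2)+2\Real\bigl(i\,b_3(H_1,H_2)\bigr)
 =q_3(H_1)+q_3(H_2)-2\operatorname{Im}b_3(H_1,H_2).
\]

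\emph{Step 3: the cross term is real.} This is the one point that needs checking. Partial traces commute with the adjoint, $(\Tr_J X)^\dagger=\Tr_J(X^\dagger)$, so $\Tr_J H_k$ is Hermitian for every $J$. For Hermitian $A,B$ we have
\[
 \overline{\Tr(AB)}=\Tr\bigl((AB)^\dagger\bigr)=\Tr(BA)=\Tr(AB),
\]
so every summand $\Tr[(\Tr_JH_1)^\dagger\Tr_JH_2]$ of $b_3(H_1,H_2)$ is real. The weights $(-1/2)^{|J|}$ are real, hence $b_3(H_1,H_2)\in\mathbb R$.

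\emph{Conclusion.} With the cross term gone, Step~2 gives $q_3(C)=q_3(H_1)+q_3(H_2)\geq0$.

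I do not expect a real obstacle. The only places needing care are Step~1, the verification that the Hermitian parts inherit the rank bound (this is exactly where the common-plane hypothesis enters), and Step~3, the reality of $b_3$ on Hermitian pairs.
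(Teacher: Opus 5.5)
Your proposal is correct and is essentially the paper's proof: the paper also splits \(C\) into Hermitian parts \(H,K\) supported on the common two-plane, applies Corollary~\ref{cor:normal-rank2} to each, and removes the cross term. For that last step the paper applies \(\|X+iY\|_2^2=\|X\|_2^2+\|Y\|_2^2\) for Hermitian \(X,Y\) to every term of \eqref{eq:q3}, which is the same fact as your observation that \(b_3(H_1,H_2)\) is real.
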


\begin{proof}
Let \(E\) be a subspace of dimension at most two containing both ranges, and let \(P_E\) be its orthogonal projection.  Then
\[
 C=P_ECP_E.
\]
Writing
\[
 H=\frac{C+C^\dagger}{2},\qquad
 K=\frac{C-C^\dagger}{2i},
\]
we see that the Hermitian operators \(H,K\) are supported on \(E\) and therefore have rank at most two.  Every partial trace preserves adjoints, and for Hermitian \(X,Y\),
\[
 \|X+iY\|_2^2=\|X\|_2^2+\|Y\|_2^2
\]
because \(\Tr(XY-YX)=0\).  Applying this identity to every term of \eqref{eq:q3} gives \(q_3(C)=q_3(H)+q_3(K)\).  Both summands are nonnegative by Corollary~\ref{cor:normal-rank2}.
\end{proof}

\section{The residual nonnormal obstruction}
\label{sec:nonnormal}

\subsection{The crossed-Gram criterion}

Normality aligns the input and output directions within one spectral two-plane.  Once this alignment is lost, the new obstruction is the crossed pairing of two output rays with two input rays.  The singular-value decomposition isolates all of this genuinely nonnormal interference in one \(2\times2\) determinant.

\begin{proposition}[Exact crossed-Gram reduction]
\label{prop:crossed-gram}
Let \(u_1,u_2\) and \(v_1,v_2\) be orthonormal pairs and put
\(A_r=\ketbra{u_r}{v_r}\).  Then
\[
 q_3(z_1A_1+z_2A_2)\geq0\quad\text{for all }z_1,z_2\in\mathbb C
\]
if and only if
\begin{equation}
 G(u_1,u_2;v_1,v_2)=
 \begin{pmatrix}
  q_3(A_1)&b_3(A_1,A_2)\\
  \overline{b_3(A_1,A_2)}&q_3(A_2)
 \end{pmatrix}\succeq0.
 \label{eq:crossed-gram}
\end{equation}
Equivalently, the remaining scalar inequality is
\begin{equation}
 |b_3(A_1,A_2)|^2\leq q_3(A_1)q_3(A_2).
 \label{eq:crossed-determinant}
\end{equation}
Thus \(q_3(C)\geq0\) for every \(\rank C\leq2\) if and only if
\eqref{eq:crossed-determinant} holds for every two orthonormal pairs.
\end{proposition}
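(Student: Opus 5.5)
The plan is to reduce everything to the polarized quadratic form on the two-dimensional coefficient space spanned by \(A_1,A_2\). Because \(b_3\) is sesquilinear (conjugate-linear in the first slot, as \(\Tr[(\Tr_JX)^\dagger\Tr_JY]\) is) and Hermitian, \(b_3(Y,X)=\overline{b_3(X,Y)}\), we have for all \(z_1,z_2\in\mathbb C\)
\[
 q_3(z_1A_1+z_2A_2)
 =|z_1|^2q_3(A_1)+|z_2|^2q_3(A_2)+2\Real\bigl(\overline{z_1}z_2\,b_3(A_1,A_2)\bigr)
 =\langle z,Gz\rangle,
\]
with \(z=(z_1,z_2)^T\) and \(G\) the matrix in \eqref{eq:crossed-gram}. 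Nonnegativity of this expression for all \(z\) is, by definition, \(G\succeq0\). This gives the first equivalence at once.

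Next I will turn \(G\succeq0\) into the single determinant inequality \eqref{eq:crossed-determinant}. A Hermitian \(2\times2\) matrix is positive semidefinite exactly when its diagonal entries and its determinant are nonnegative. The diagonal entries \(q_3(A_r)=q_3(\ketbra{u_r}{v_r})\) are rank-one values. I will show they are nonnegative by an earlier result. For \(u_r=v_r\) this is Theorem~\ref{thm:strong} with \(\mu=0\). In general, \(\ketbra{u}{v}\) is a rank-one operator, hence has a one-dimensional range, and the natural route is Corollary~\ref{cor:normal-rank2} or Corollary~\ref{thm:common-plane}: the initial and final spaces of \(\ketbra uv\) together span at most two dimensions, so \eqref{eq:common-plane} holds and \(q_3(A_r)\geq0\). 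With the diagonal settled, \(G\succeq0\) reduces to \(\det G\geq0\), which is exactly \eqref{eq:crossed-determinant}.

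For the final clause, I will show that every \(C\) with \(\rank C\leq2\) is of the form \(z_1A_1+z_2A_2\). Take an SVD \(C=s_1\ketbra{u_1}{v_1}+s_2\ketbra{u_2}{v_2}\) with \(s_r\geq0\). If the rank is below two, complete the orthonormal pairs arbitrarily in the ambient space, which has dimension at least two; the dimension-one case is trivial, since then \(q_3\) reduces to the scalar \(\bigl(1-\tfrac12\bigr)^3|c|^2\geq0\). Setting \(z_r=s_r\) then covers every rank-two \(C\). Conversely, each \(z_1A_1+z_2A_2\) has rank at most two. So global nonnegativity on rank at most two is equivalent to \(G\succeq0\) for all orthonormal pairs, hence to \eqref{eq:crossed-determinant} for all of them.

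The only step that is not bookkeeping is the nonnegativity of the diagonal entries \(q_3(\ketbra uv)\) for \(u\neq v\). I expect the common-plane corollary to handle it, since \(\operatorname{ran}C+\operatorname{ran}C^\dagger=\operatorname{span}\{u,v\}\). If I want to avoid that dependence, a fallback is the following. If some \(q_3(A_r)<0\), then \(q_3\) already fails on a rank-one operator. In that case \(\det G\geq0\) cannot force \(G\succeq0\), so the equivalence as stated would need the diagonal input, and citing Corollary~\ref{thm:common-plane} supplies it.
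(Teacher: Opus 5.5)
Your proof is correct, and its skeleton (polarization into the Hermitian form $\langle z,Gz\rangle$, the $2\times2$ positivity test, and the SVD for the final clause) is the paper's. The one genuinely different step is how you make the diagonal entries nonnegative.

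You invoke Corollary~\ref{thm:common-plane} for $\ketbra{u_r}{v_r}$. This is sound and not circular: that corollary rests on Theorem~\ref{thm:strong} and Costa Rico's opposite-sign theorem, not on this proposition. But it routes an elementary rank-one fact through the full positive/normal machinery and an external result. Because the Hermitian part $(\ketbra{u}{v}+\ketbra{v}{u})/2$ is indefinite whenever $u$ and $v$ are not parallel, the opposite-sign input is genuinely used.

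The paper instead applies the swap identity to get $q_3(\ketbra{u_r}{v_r})=\langle u_r\otimes v_r,R(u_r\otimes v_r)\rangle$ with $R=\bigotimes_i(I-\tfrac12F_i)$. Each factor has spectrum $\{1/2,3/2\}$, so $R\succeq I/8$ and $q_3(A_r)\geq1/8$. This route buys several things:
\begin{itemize}
\item it is self-contained and gives an explicit strictly positive floor;
\item it keeps the reduction independent of the main theorems, which suits a statement whose purpose is to isolate the nonnormal obstruction;
\item it introduces the operator $R$ that the paper reuses for the $K^{\Gamma_2}$ reformulation and in Lemma~\ref{lem:product-ray-schur}.
\end{itemize}
It is also the dependence-free alternative your ``fallback'' paragraph was reaching for; that paragraph does not actually supply one.

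In your favour, you correctly note that nonnegative, not strictly positive, diagonal entries already give $G\succeq0\iff\det G\geq0$. Your treatment of rank below two and of a one-dimensional ambient space, where $q_3(c)=|c|^2/8$, is also slightly more careful than the paper's.
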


\begin{proof}
Polarization gives
\[
 q_3(z_1A_1+z_2A_2)
 =
 \begin{pmatrix}\overline z_1&\overline z_2\end{pmatrix}
 G(u_1,u_2;v_1,v_2)
 \begin{pmatrix}z_1\\z_2\end{pmatrix}.
\]
This quadratic form is nonnegative for all coefficients exactly when \(G\succeq0\), which is equivalent to \eqref{eq:crossed-determinant} because its diagonal entries are positive.  Indeed, on two copies of the tripartite space define
\begin{equation}
 R=\bigotimes_{i=1}^3\left(I-\frac12F_i\right).
 \label{eq:R-positive}
\end{equation}
Every factor in \eqref{eq:R-positive} has eigenvalues \(1/2\) and
\(3/2\), so \(R\succeq I/8\).  The swap trick yields
\begin{align}
 q_3(A_r)
 &=\langle u_r\otimes v_r,R(u_r\otimes v_r)\rangle,\label{eq:R-diagonal}\\
 b_3(A_1,A_2)
 &=\langle u_1\otimes v_2,R(u_2\otimes v_1)\rangle.
 \label{eq:R-cross}
\end{align}
Hence \(q_3(A_r)\geq1/8\).  Finally, every rank-two operator has a
singular-value decomposition with precisely the two orthonormal
families appearing above, and arbitrary phases can be absorbed into
one of those families.
\end{proof}

For a fixed coefficient vector \((z_1,z_2)\), negativity of \(q_3(z_1A_1+z_2A_2)\) implies \(G\nsucceq0\), but \(G\nsucceq0\) need not imply negativity for that prescribed vector.  It does, however, guarantee negativity for some coefficient vector within the
same four rays; equivalently, this occurs when \eqref{eq:crossed-determinant} fails.

The crossed ordering carries the entire obstruction.  Let \(U,V:\mathbb C^2\to\bigotimes_i\mathcal H_i\) be the isometries \(U|r\rangle=u_r\), \(V|r\rangle=v_r\), and set
\[
 K=(U^\dagger\otimes V^\dagger)R(U\otimes V)\succeq0,
\]
Then \(G\) is the \(\{|11\rangle,|22\rangle\}\) principal submatrix of
\(K^{\Gamma_2}\), where \(\Gamma_2\) is the partial transpose on the
second auxiliary copy of \(\mathbb C^2\).  With
\(K_{rs,tu}:=\langle r,s|K|t,u\rangle\), this follows from
\[
 b_3(A_1,A_2)=K_{12,21}.
\]
Ordinary Cauchy--Schwarz for \(K\) instead gives
\[
 |b_3(A_1,A_2)|^2
 \leq q_3(\ketbra{u_1}{v_2})q_3(\ketbra{u_2}{v_1}),
\]
with the wrong two diagonal factors.  For \(u_1=v_1=|000\rangle\) and 
\(u_2=v_2=|111\rangle\), the desired diagonal product is \(1/64\), whereas the crossed one is \(1\).  In the same example the true off-diagonal entry is \((-1/2)^3=-1/8\), so \eqref{eq:crossed-determinant} is saturated.  This elementary example rules out a product-switch shortcut.

\subsection{Two broad nonnormal closures}

The first result constrains one local output--input support.  It is
independent of the common-plane corollary and covers every tripartite
system having at least one qubit-sized local factor.

\begin{theorem}[Local output--input support overlap]
\label{thm:local-two-support}
Let \(\xi=\operatorname{vec}C\) have Schmidt rank at most two across
the output--input cut.  Suppose that, for some site \(i\), there are
projections \(P_i,Q_i\) such that
\begin{equation}
 \xi=(P_i\otimes\overline{Q_i}\otimes I_{\widehat i})\xi,
 \label{eq:local-support}
\end{equation}
where the overline is entrywise conjugation in the fixed input basis,
\(\widehat i=\{1,2,3\}\setminus\{i\}\) labels the other two
output--input pairs, and \(I_{\widehat i}\) is their identity.
If
\begin{equation}
 \tau_i=\Tr(P_iQ_i)\leq2,
 \label{eq:local-overlap}
\end{equation}
then \(q_3(C)\geq0\).  In particular, this holds if either
\(\rank P_i\leq2\) or \(\rank Q_i\leq2\), and hence
\begin{equation}
 \min(d_1,d_2,d_3)\leq2,\quad\rank C\leq2
 \quad\Longrightarrow\quad q_3(C)\geq0.
 \label{eq:one-local-qubit}
\end{equation}
\end{theorem}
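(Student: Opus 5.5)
The plan is to peel off site \(i\) and reduce to the two-copy endpoint on the remaining pair \(\widehat i\). There, nonnegativity for Schmidt-rank-two test vectors is the solved two-copy threshold \cite{FuGaoPark,SongChen,FraserEtAl,BhartiGajjalaHaug}. Its proof is rooted in Lemma~\ref{lem:double-antisymmetric}, which is already stated in the rectangular form needed here. For operators \(X\) on \(\cH_{\widehat i}\), write
\[
 q_2(X)=\sum_{J\subseteq\widehat i}\left(-\tfrac12\right)^{|J|}\|\Tr_JX\|_2^2 .
\]
Grouping the subsets \(J\) in \eqref{eq:q3} by whether \(i\in J\) gives the exact identity
\[
 q_3(C)=\widetilde q_2(C)-\tfrac12\,q_2(\Tr_iC).
\]
Here \(\widetilde q_2(C)\) is the same alternating sum with \(\Tr_J\) acting only on \(\widehat i\), and site \(i\) is left untouched.

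\textbf{Step 1 (slicing \(\widetilde q_2\)).} Regard \(\xi\) as a vector in \((\cH_i\otimes\overline{\cH_i})\otimes(\text{rest})\). For each vector \(f\) in the site-\(i\) output--input pair space, let \(C_f\) be the operator on \(\cH_{\widehat i}\) with \(\operatorname{vec}C_f=(\langle f|\otimes I)\xi\). Since every term of \(\widetilde q_2\) is a Hilbert--Schmidt norm in which the site-\(i\) indices are simply summed, we get
\[
 \widetilde q_2(C)=\sum_f q_2(C_f)
\]
for any orthonormal basis \(\{f\}\) of the site-\(i\) pair space. Each slice \(C_f\) is obtained by applying a functional to one side of an output--input rank-two vector, so \(\operatorname{vec}C_f\) still has Schmidt rank at most two. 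The two-copy endpoint theorem therefore gives \(q_2(C_f)\geq0\) for every \(f\).

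\textbf{Step 2 (the traced term is one slice).} We have \(\Tr_iC=C_w\) for the unnormalized maximally correlated vector \(w=\sum_a|a\rangle|\bar a\rangle\). By \eqref{eq:local-support}, \(\xi\) is invariant under \(P_i\otimes\overline{Q_i}\), so we may replace \(w\) by \(w'=(P_i\otimes\overline{Q_i})w\). This vector satisfies
\[
 \|w'\|^2=\Tr(P_iQ_iP_i)=\tau_i .
\]
If \(\tau_i=0\), then \(\Tr_iC=0\), and \(q_3(C)=\widetilde q_2(C)\geq0\) by Step 1. Otherwise, put \(\hat w=w'/\sqrt{\tau_i}\) and complete it to an orthonormal basis. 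Step 1 and the nonnegativity of all other slices give
\[
 \widetilde q_2(C)\geq q_2(C_{\hat w})
 =\tau_i^{-1}q_2(\Tr_iC).
\]
Hence
\[
 q_3(C)\geq\left(\tau_i^{-1}-\tfrac12\right)q_2(\Tr_iC)\geq0,
\]
using \(\tau_i\leq2\) and \(q_2(\Tr_iC)=\tau_i\,q_2(C_{\hat w})\geq0\).

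\textbf{Step 3 (special cases).} Since \(\Tr(P_iQ_i)\leq\min(\rank P_i,\rank Q_i)\), the rank conditions follow. If \(d_i\leq2\), take \(P_i=Q_i=I\), so that \(\tau_i=d_i\leq2\); this gives \eqref{eq:one-local-qubit}.

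\textbf{Main obstacle.} The delicate point is Step 1: the basis-independence of \(\widetilde q_2\) as a sum over slices, and the preservation of output--input Schmidt rank under slicing. These must be checked carefully with the conjugation convention (overline on the input factor). The two-copy input must also be invoked in the rectangular, unequal-dimension form; I would cite it through the Lemma~\ref{lem:double-antisymmetric} route rather than the equal-dimension statements.
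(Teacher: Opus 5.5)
\textbf{Gap in Step~1 (and hence Step~2).} A vector $f$ in the site-$i$ pair space $\cH_i\otimes\overline{\cH_i}$ has one leg on each side of the output--input cut, so $\langle f|$ is \emph{not} a one-sided functional. Only for a product vector $f=a\otimes\bar b$ is the slice the compression $(\langle a|\otimes I)C(|b\rangle\otimes I)$, which has rank at most two. For entangled $f$ the rank can grow, just as a partial trace of a rank-two operator need not have rank two. Your basis-independence identity is correct, but the two-copy theorem applies only to product slices. Moreover, $\hat w\propto(P_i\otimes\overline{Q_i})w$ is entangled whenever $\rank(P_iQ_i)\geq2$.

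Here is a concrete counterexample. Order the sites $i,j,k$ with $d_i=2$, $d_j=3$, $d_k=2$, and take $P_i=Q_i=I$, so $\tau_i=2$. Let $C=\ketbra{u}{v}+\ketbra{u'}{v'}$ with
\[
 u=|000\rangle+|110\rangle,\quad v=|001\rangle+|111\rangle,\quad
 u'=|020\rangle,\quad v'=|021\rangle .
\]
Then $\rank C=2$, but $\Tr_iC=I_3\otimes E_{01}$ has rank three and $q_2(\Tr_iC)=(3-\tfrac92)\cdot1=-\tfrac32$. Hence $q_2(C_{\hat w})=-\tfrac34<0$. This refutes both the slice-wise nonnegativity claimed in Step~1 and the assertion $q_2(\Tr_iC)=\tau_i\,q_2(C_{\hat w})\geq0$.

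The complementary slices fail the same way. The space $\hat w^\perp$ contains $(|0\bar0\rangle-|1\bar1\rangle)/\sqrt2$. If you replace $u$ by $|000\rangle-|110\rangle$, that slice becomes $(I_3\otimes E_{01})/\sqrt2$, again with $q_2=-\tfrac34$. No choice of completion avoids this: in $2\times2$, $\hat w^\perp$ has no orthonormal product basis. So ``nonnegativity of all other slices'' cannot be obtained slice by slice.

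\textbf{What is missing.} You need the aggregate inequality
\[
 \sum_{f}q_2(C_f)
 =\bigl\langle\xi,\bigl((\mathsf P-\ketbra{\hat w}{\hat w})\otimes W_{d_j}\otimes W_{d_k}\bigr)\xi\bigr\rangle\geq0,
\]
summed over an orthonormal basis of $\hat w^\perp\cap\operatorname{ran}\mathsf P$, where $\mathsf P=P_i\otimes\overline{Q_i}$. This requires an entanglement-theoretic input that your proposal lacks. The operator $\mathsf P-\ketbra{\hat w}{\hat w}$ is the identity on $\operatorname{ran}P_i\otimes\operatorname{ran}\overline{Q_i}$ minus a rank-one projector. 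It is therefore separable by the Gurvits--Barnum ball theorem, since $\|\Delta\|_2=1$. So it is a positive combination of product projectors $\ketbra{a}{a}\otimes\ketbra{b}{b}$, and only product slices enter.

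With this input, the paper's finish is cleaner. Write $q_3(C)=\langle\xi,(G\otimes W_{d_j}\otimes W_{d_k})\xi\rangle$ with
\[
 G=\mathsf P-\tfrac12\ketbra{w'}{w'}
 =\bigl(1-\tfrac{\tau_i}{2}\bigr)\mathsf P+\tfrac{\tau_i}{2}\bigl(\mathsf P-\ketbra{\hat w}{\hat w}\bigr).
\]
This $G$ is separable for $\tau_i\leq2$, and slicing it into product terms finishes via the two-copy theorem. No sign information on $q_2(\Tr_iC)$ is needed. Within your own decomposition the sign can also be absorbed: if $q_2(\Tr_iC)<0$, then $q_3(C)\geq\widetilde q_2(C)\geq0$ by product slicing.

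The rest of your plan is sound: the splitting of $q_3$, the identity $\Tr_iC=C_{w'}$ with $\|w'\|^2=\tau_i$, product-basis slicing, and Step~3. For the two-copy input, use the arbitrary-dimension statement \cite[Theorem A]{FraserEtAl} directly. Lemma~\ref{lem:double-antisymmetric} alone does not yield it.
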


\begin{proof}
Under vectorization,
\begin{equation}
 q_3(C)=\langle\xi,W_3\xi\rangle,\qquad
 W_3=\bigotimes_{j=1}^3W_{d_j},\qquad
 W_d=I-\frac12\ketbra{\Omega_d}{\Omega_d}.
 \label{eq:W3}
\end{equation}
Put
\[
 {\mathsf P}=P_i\otimes\overline{Q_i},\qquad
 \mathcal K_i^{\mathrm{supp}}
 =\operatorname{ran}P_i\otimes
  \operatorname{ran}\overline{Q_i}.
\]
The conjugate on the input support is forced by vectorization:
\[
 (P_i\otimes\overline{Q_i})\operatorname{vec}X
 =\operatorname{vec}(P_iXQ_i).
\]
Since \(\xi=({\mathsf P}\otimes I_{\widehat i})\xi\), only the
compression of \(W_{d_i}\) to \(\mathcal K_i^{\mathrm{supp}}\)
contributes.  With
\[
 |\omega\rangle={\mathsf P}|\Omega_{d_i}\rangle,
\]
one has
\[
 {\mathsf P}W_{d_i}{\mathsf P}
 ={\mathsf P}-\frac12\ketbra{\omega}{\omega}
 \ \cong\
 I_{\mathcal K_i^{\mathrm{supp}}}
 -\frac12\ketbra{\omega}{\omega}
 =:G.
\]
The maximally entangled contraction identity gives
\[
 \|\omega\|^2
 =\langle\Omega_{d_i}|
 P_i\otimes\overline{Q_i}|\Omega_{d_i}\rangle
 =\Tr(P_i^T\overline{Q_i})
 =\Tr(P_iQ_i)=\tau_i,
\]
where \(\Tr(P_iQ_i)=\Tr(P_iQ_iP_i)\) is real and nonnegative.
Consequently,
\[
 \left\|G-I_{\mathcal K_i^{\mathrm{supp}}}\right\|_2
 =\frac12\bigl\|\ketbra{\omega}{\omega}\bigr\|_2
 =\frac{\tau_i}{2}\leq1.
\]
Moreover, \eqref{eq:local-overlap} gives \(G\succeq0\), because its only possibly nonunit eigenvalue is \(1-\tau_i/2\). The Gurvits--Barnum Hilbert--Schmidt ball theorem states that every operator \(I_{\mathcal K_i^{\mathrm{supp}}}+\Delta\succeq0\) with 
\(\Delta=\Delta^\dagger\) and \(\|\Delta\|_2\leq1\) is separable across the bipartition of \(\mathcal K_i^{\mathrm{supp}}\); the center is the unnormalized identity \cite[Theorem 1]{GurvitsBarnum}.  It follows
directly that \(G\) is separable.

Write a rank-one separable decomposition
\[
 G=\sum_\alpha
 \ketbra{a_\alpha}{a_\alpha}\otimes
 \ketbra{b_\alpha}{b_\alpha}.
\]
Contracting \(\xi\) with
\(\langle a_\alpha|\otimes\langle b_\alpha|\) leaves a vector \(\xi_\alpha\) on the other two output--input pairs whose Schmidt rank is still at most two.  Consequently,
\[
 \langle\xi,G\otimes W_{d_j}\otimes W_{d_k}\,\xi\rangle
 =\sum_\alpha
 \langle\xi_\alpha,W_{d_j}\otimes W_{d_k}\,\xi_\alpha\rangle
 \geq0
\]
by the two-copy endpoint theorem
\cite[Theorem A]{FraserEtAl}; see also the concurrent equal-dimension and complementary formulations
\cite{FuGaoPark,SongChen,BhartiGajjalaHaug}.
This proves the first assertion.  Since
\(\Tr(P_iQ_i)\leq\min(\rank P_i,\rank Q_i)\), either support-rank condition implies \eqref{eq:local-overlap}.  Taking \(P_i=Q_i=I\) when \(d_i\leq2\) proves
\eqref{eq:one-local-qubit}.
\end{proof}

For a singular-value decomposition, let
\[
 U=\operatorname{span}\{u_1,u_2\},\qquad
 V=\operatorname{span}\{v_1,v_2\},
\]
and let \(P_i,Q_i\) be their minimal one-site support projections,
equivalently the support projections of \(\Tr_{\widehat i}P_U\) and \(\Tr_{\widehat i}P_V\), where \(P_U\) and \(P_V\) are the orthogonal projections onto \(U\) and \(V\), respectively, and \(\Tr_{\widehat i}\) traces out the two sites other than \(i\).
Thus a counterexample, if one exists, must obey
\[
 \Tr(P_iQ_i)>2\qquad(i=1,2,3).
\]
In particular, both its output and input supports must have dimension at least three at every site.  In \(3\times3\times3\), all six local supports would have to be the full qutrit space.

The second closure rests on a switched Schur complement available when one support plane contains a product reference ray.  The matrix order in this complement is reversed from the ordinary Schur order.  We therefore keep the full co-Choi argument beside the theorem that uses it.

\begin{theorem}[A product ray in either support plane]
\label{thm:product-ray}
Let \(\rank C\leq2\).  If either \(\operatorname{ran}C\) or
\(\operatorname{ran}C^\dagger\) contains a nonzero fully product vector, then \(q_3(C)\geq0\).
\end{theorem}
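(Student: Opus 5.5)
The plan is to reduce to the case of a product ray in the output support, fix the output plane, and prove positivity of the resulting quadratic form in the input vectors by a Schur complement taken against the product ray.

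\emph{Reduction to the output plane.} First, \(q_3(C^\dagger)=q_3(C)\). Every partial trace commutes with the adjoint, Hilbert--Schmidt norms are adjoint-invariant, and \(|\Tr C^\dagger|=|\Tr C|\). So it suffices to treat the case where \(\operatorname{ran}C\) contains a fully product unit vector \(u_1=a_1\otimes a_2\otimes a_3\). Complete it to an orthonormal basis \(u_1,u_2\) of a two-plane containing \(\operatorname{ran}C\). Then
\[
C=\ketbra{u_1}{w_1}+\ketbra{u_2}{w_2}
\]
for some input vectors \(w_1,w_2\) that need not be orthogonal. Simultaneous local unitary conjugation leaves \(q_3\) invariant, but independent left and right rotations do not. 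I therefore avoid normalising \(a_j\) to a basis vector by one-sided moves. Instead I work with the vectorization \eqref{eq:W3}, in which \(\xi=u_1\otimes\overline{w_1}+u_2\otimes\overline{w_2}\). Fixing \(u_1,u_2\), the map \((w_1,w_2)\mapsto q_3(C)\) is a Hermitian form on \(\cH\oplus\cH\). Here \(\cH=\bigotimes_j\cH_j\) is the input space. Its block operator (a co-Choi matrix) is
\[
M_{rs}=(\langle u_r|\otimes I)\,W_3\,(|u_s\rangle\otimes I).
\]
The theorem is then equivalent to \(M\succeq0\) for every orthonormal pair \(u_1,u_2\) with \(u_1\) product.

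\emph{Switched Schur complement.} Because \(u_1\) is product, the corner block factorizes:
\[
M_{11}=\bigotimes_j\Bigl(I-\tfrac12\ketbra{\overline{a_j}}{\overline{a_j}}\Bigr).
\]
It is invertible, with the explicit inverse \(M_{11}^{-1}=\bigotimes_j(I+\ketbra{\overline{a_j}}{\overline{a_j}})\). This is why the complement is taken against the product block rather than in the usual order. Then \(M\succeq0\) if and only if
\[
M_{22}-M_{21}M_{11}^{-1}M_{12}\succeq0.
\]
To evaluate the off-diagonal blocks I would use
\[
(\langle a|\otimes I)W_d(|x\rangle\otimes I)=\langle a,x\rangle I-\tfrac12\ketbra{\bar a}{\bar x},
\]
applied factorwise after expanding \(u_2\) in the local bases adapted to the \(a_j\). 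The condition \(u_2\perp u_1\) kills the leading product term. What remains is a sum, over nonempty \(J\subseteq\{1,2,3\}\), of terms with coefficient \((-1/2)^{|J|}\). These terms involve the contractions \(\langle a_J|u_2\rangle\), and the rank-one projectors onto \(\overline{a_j}\) for \(j\in J\).

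\emph{Evaluating the complement.} After multiplying by \(M_{11}^{-1}\), every projector onto \(\overline{a_j}\) gets the factor \(1+1\). The goal is to show that the complement takes the form
\[
\langle\overline{w},\ \bigl(\widetilde W\text{-compression of }u_2\bigr)\,\overline{w}\rangle,
\]
where the modified local weights \(\widetilde W_{d_j}\) differ from \(W_{d_j}\) only on the \(\overline{a_j}\)-directions. I would then split \(\overline w\) along \(\overline{a_j}\) and its complement at each site. On the part of \(u_2\) orthogonal to \(a_j\) at some site, the form should reduce to a two-copy endpoint quantity, nonnegative by \cite[Theorem A]{FraserEtAl}, by the same slicing argument used in Theorem~\ref{thm:local-two-support}. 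The diagonal-in-\(a\) remainder should be controlled by the positive rank-two bound, Theorem~\ref{thm:strong}, in the form \(\mathcal S\geq0\), through \eqref{eq:q-S}.

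\emph{Expected obstacle.} The bookkeeping of this last step is the main obstacle. The complement is a signed sum over the \(2^3\) Walsh sectors, and one must verify that the subtracted term never exceeds \(M_{22}\). My first attempt would be a sector-by-sector comparison. I would show that each \(J\)-term of \(M_{21}M_{11}^{-1}M_{12}\) is dominated by the matching \(\Tr_J\)-term of \(M_{22}\), using Cauchy--Schwarz together with the normalization \(\|u_2\|=1\). If that comparison is too lossy in the mixed sectors, the fallback is to regroup by sites. I would then apply Theorem~\ref{thm:nuclear-pt} to each two-site cut of \(u_2\), exactly as in the cutwise estimate \eqref{eq:one-k-bound}.
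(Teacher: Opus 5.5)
Your reduction is correct, and it is the paper's own reduction written in vectorized form. Set $u=u_1$, $x=u_2$, $v=w_1$, $y=w_2$. Then $M_{11}=\overline{A}$, $M_{21}=\overline{L}$ and $M_{22}=\overline{D}$. So your ordinary Schur complement $M_{22}-M_{21}M_{11}^{-1}M_{12}$ is the entrywise conjugate of $D-LA^{-1}L^\dagger$ in Lemma~\ref{lem:product-ray-schur}. Your formulas for $M_{11}$, $M_{11}^{-1}$ and the one-site off-diagonal contraction are also right.

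The gap is that the only nontrivial step, positivity of this complement, is left as a goal, and none of your sketched routes gives an operator inequality.
\begin{itemize}
\item The terms of $M_{22}$ carry alternating signs $(-\tfrac12)^{|J|}$. The subtracted term $M_{21}M_{11}^{-1}M_{12}$ is a product of two such expansions with $M_{11}^{-1}$ between them, so its terms are indexed by pairs of nonempty subsets and do not line up with the sectors of $M_{22}$. Termwise Cauchy--Schwarz domination of signed terms therefore cannot give $\succeq$.
\item Theorem~\ref{thm:nuclear-pt} is a scalar Hilbert--Schmidt inequality, so it gives no bound valid for every input vector $\overline{w}$.
\item Theorem~\ref{thm:strong} requires $C\succeq0$, which is not available here.
\item The slicing argument of Theorem~\ref{thm:local-two-support} rests on a separable local factor $G$, and the complement has no such factor.
\end{itemize}

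The missing idea is this. Because $u_1$ is product, both terms of the complement are images of the single operator $\ketbra{x}{x}$ under tensor products of one-site maps. In the paper's notation, $D=(\Lambda_1\otimes\Lambda_2\otimes\Lambda_3)(\ketbra{x}{x})$ and $LA^{-1}L^\dagger=(\mathcal M_1\otimes\mathcal M_2\otimes\mathcal M_3)(\ketbra{x}{x})$. This holds because $A^{-1}$ and $L$ factor site by site once $x$ is expanded in a product basis.

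Positivity of $\bigotimes_i\Lambda_i-\bigotimes_i\mathcal M_i$ then reduces to a local check on co-Choi matrices. The matrices $C_0=\widehat J(\Lambda)=I-\tfrac12F\succ0$ and $S=\widehat J(\mathcal M)$ satisfy $-C_0\preceq S\preceq C_0$ on each reducing subspace $P\otimes P$, $\operatorname{span}\{|0a\rangle,|a0\rangle\}$ and $Q\otimes Q$. Both signs are essential. Together they make $\mathsf R=C_0^{-1/2}SC_0^{-1/2}$ a Hermitian contraction. Hence $\bigotimes_i\mathsf R_i$ is also a contraction, and $\bigotimes_iC_{0,i}-\bigotimes_iS_i\succeq0$. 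So the difference map is completely copositive, hence positive, and applying it to $\ketbra{x}{x}$ gives the complement inequality.

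Site-by-site positivity alone would not suffice. Each one-site difference $\Lambda-\mathcal M$ is the positive map $X\mapsto\Tr(QXQ)I-QXQ$, but a difference of tensor products is not a tensor product of differences.

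This argument needs no two-copy theorem and no rank-two bound, and it does not use $u_2\perp u_1$. Once the complement is positive, you get $M\succeq0$. Equivalently, this is the paper's estimate $|g|^2\le a\langle y,LA^{-1}L^\dagger y\rangle\le ad$, which gives $q_3(C)\ge(\sqrt a-\sqrt d)^2$.
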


The crucial input is the following product-reference lemma.

\begin{lemma}[Product-reference switched Schur complement]
\label{lem:product-ray-schur}
Let \(u=u_1\otimes u_2\otimes u_3\) be a normalized product vector and
let \(x\) be arbitrary.  Compress the positive operator \(R\) in
\eqref{eq:R-positive} on its first copy and define operators on the
second copy by
\[
 A=(\langle u|\otimes I)R(|u\rangle\otimes I),\quad
 L=(\langle u|\otimes I)R(|x\rangle\otimes I),\quad
 D=(\langle x|\otimes I)R(|x\rangle\otimes I).
\]
Then \(A\succ0\) and
\begin{equation}
 D-LA^{-1}L^\dagger\succeq0.
 \label{eq:product-ray-schur}
\end{equation}
The ordinary Schur complement of the compression of \(R\) contains \(L^\dagger A^{-1}L\), whereas the product structure proves the switched order required below.
\end{lemma}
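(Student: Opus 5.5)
The plan is to reduce \eqref{eq:product-ray-schur} to a one-site operator inequality and then tensorize it, using that both \(u\) and \(R\) factor over the three sites.

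\emph{Step 1: \(A\succ0\) and an explicit inverse.} Write \(R=\bigotimes_iR_i\) with \(R_i=I-\tfrac12F_i\). For one site, \((\langle a|\otimes I)F_i(|b\rangle\otimes I)=\ketbra{b}{a}\). Since \(u\) is product,
\[
 A=\bigotimes_{i=1}^3\Bigl(I-\tfrac12\ketbra{u_i}{u_i}\Bigr),\qquad
 A^{-1}=\bigotimes_{i=1}^3\bigl(I+\ketbra{u_i}{u_i}\bigr).
\]
In particular \(A\succeq I/8\succ0\). This is the only place the product hypothesis on \(u\) is used, and it is essential: the switched complement contains \(A^{-1}\), which must itself be a tensor product of one-site operators.

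\emph{Step 2: rewrite the claim as a quadratic form.} Take a test vector \(y\) on the second copy. Then \(\langle y,LA^{-1}L^\dagger y\rangle=\sup_z\bigl(2\Real\langle z,L^\dagger y\rangle-\langle z,Az\rangle\bigr)\). With the swap-trick identities \eqref{eq:R-diagonal}--\eqref{eq:R-cross}, \eqref{eq:product-ray-schur} becomes
\[
 \langle x\otimes y,R(x\otimes y)\rangle+\langle u\otimes z,R(u\otimes z)\rangle
 -2\Real\langle u\otimes y,R(x\otimes z)\rangle\geq0
\]
for all \(x,y,z\). This is exactly \(q_3(\ketbra{x}{y}-\ketbra{u}{z})\geq0\). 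Hence the lemma is precisely what the product-ray theorem needs, and it must be proved directly rather than by appeal to that theorem.

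The crossed pairing is the obstacle. The term \(\langle u\otimes y,R(x\otimes z)\rangle\) pairs \(x\) with \(z\), not with \(y\), so plain positivity of \(R\) gives the wrong Cauchy--Schwarz, as in the crossed-Gram discussion. The fix is to pass to the realigned (partially transposed) picture. Treat \(\langle y,LA^{-1}L^\dagger y\rangle=\|A^{-1/2}L^\dagger y\|^2\) as a Hermitian form in \(\bar x\otimes y\). Likewise treat \(\langle y,Dy\rangle=\langle x\otimes y,R(x\otimes y)\rangle\) as the form of the operator \(R^{\Gamma}\), with the transpose on the copy carrying \(x\), evaluated on \(\bar x\otimes y\). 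Because \(u\) is product, the first form is also a tensor product over sites: it is \(\bigotimes_iN_i\), where
\[
 N_i:=\text{(realigned form of) }X_i^\dagger\bigl(I+\ketbra{u_i}{u_i}\bigr)X_i,\qquad
 X_i:=(\langle u_i|\otimes I)R_i(\,\cdot\,).
\]
Both \(R^\Gamma=\bigotimes_iR_i^{\Gamma}\) and \(\bigotimes_iN_i\) factor over sites.

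\emph{Step 3: one-site inequality and tensorization.} Choose a basis in which \(u_i=e_0\), split \(\cH_i=\mathbb C e_0\oplus e_0^\perp\), and verify directly the single-site inequality
\[
 R_i^{\Gamma}\succeq N_i\succeq0 .
\]
This is a finite computation, since \(R_i^{\Gamma}=I-\tfrac12\ketbra{\Omega}{\Omega}\) and \(N_i\) is built from rank-one pieces along \(e_0\). If it holds, \(0\preceq N_i\preceq R_i^\Gamma\) for all three sites gives \(\bigotimes_iN_i\preceq\bigotimes_iR_i^\Gamma\). Restricting to vectors \(\bar x\otimes y\) then yields \(D-LA^{-1}L^\dagger\succeq0\).

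I expect the main obstacle to be this single-site inequality. Specifically, \(R_i^\Gamma\) must dominate \(N_i\) and \(N_i\) must be positive, since monotonicity under tensor products fails without positivity of each factor. \(R_i^{\Gamma}\) has a negative-leaning direction \(\Omega\), so I would first check that \(N_i\) vanishes on, or is sufficiently small along, the \(e_0\otimes e_0\) component of \(\Omega\). If the clean dominance fails, the fallback is to make the domination hold only on the vectors \(\bar x\otimes y\) actually tested. Concretely, I would absorb the negative part of \(R_i^\Gamma\) using the Lemma~\ref{lem:double-antisymmetric}-type bound \(\langle\psi,Q\psi\rangle\leq\tfrac12\) for Schmidt-rank-limited vectors.
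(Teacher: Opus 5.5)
There is a genuine gap, and it is in Step~3. Your Steps~1 and~2 are correct. So is your remark that $N_i\succeq0$, since it represents the Gram quantity $\|A^{-1/2}L^\dagger y\|^2$.

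\textbf{Step~3 fails for $d_i\geq3$.} The one-site inequality $0\preceq N_i\preceq R_i^{\Gamma}$ is false whenever $d_i\geq3$, and the failure is structural. As you computed, $R_i^{\Gamma}=I-\frac12\ketbra{\Omega_{d_i}}{\Omega_{d_i}}$. Its eigenvalue along $\Omega_{d_i}$ is $1-d_i/2<0$, so no positive semidefinite $N_i$ can lie below it.

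To see this concretely, take $u_i=e_0$, $Q=I-\ketbra{e_0}{e_0}$, and $\ell_a=(\langle e_0|\otimes I)R_i(|e_a\rangle\otimes I)=\delta_{a0}I-\frac12\ketbra{e_a}{e_0}$. Your two forms then encode the one-site maps $\Lambda_i(E_{ab})=\delta_{ab}I-\frac12E_{ab}$ and $\mathcal M_i(E_{ab})=\ell_a(I+\ketbra{e_0}{e_0})\ell_b^\dagger$. These maps agree except on the $Q$-block, and their difference is $\Lambda_i-\mathcal M_i\colon X\mapsto\Tr(QX)I-QXQ$, the reduction map on $e_0^\perp$. That map is positive but not completely positive: its Choi matrix $R_i^\Gamma-N_i=Q\otimes I-\ketbra{\Omega_Q}{\Omega_Q}$, with $\Omega_Q=\sum_{a\neq0}e_a\otimes e_a$, has eigenvalue $2-d_i$ along $\Omega_Q$.

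So in your conjugated $\bar x\otimes y$ picture, the only true one-site statement is block positivity, i.e.\ positivity of a map. Positivity of maps does not survive tensor products. Because $x$ and $y$ are entangled across the three sites, restricting to the tested vectors gives nothing that tensorizes. The appeal to Lemma~\ref{lem:double-antisymmetric} does not supply such a statement either. As proposed, the argument closes only when every $d_i\leq2$.

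\textbf{The missing idea: do not conjugate $x$.} In the unconjugated $x\otimes y$ picture the roles of positivity are exchanged. There $\langle y,Dy\rangle=\langle x\otimes y,R(x\otimes y)\rangle$ with $R=\bigotimes_iR_i$ and $R_i=I-\frac12F_i\succ0$. Likewise $\langle y,LA^{-1}L^\dagger y\rangle=\langle x\otimes y,(\bigotimes_iS_i)(x\otimes y)\rangle$, where $S_i=\sum_{a,b}E_{ba}\otimes\mathcal M_i(E_{ab})$ is the co-Choi matrix, i.e.\ your $N_i$ partially transposed.

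Now $S_i$ is Hermitian but indefinite. It equals $\frac12$ on $e_0\otimes e_0$, $\bigl(\begin{smallmatrix}1&-1/2\\-1/2&0\end{smallmatrix}\bigr)$ on each $\operatorname{span}\{e_0\otimes e_a,\,e_a\otimes e_0\}$ with $a\neq0$, and $\frac12F$ on $e_0^\perp\otimes e_0^\perp$. Comparing with $R_i$ on these reducing blocks gives the two-sided bound $-R_i\preceq S_i\preceq R_i$.

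Hence each $\mathsf R_i=R_i^{-1/2}S_iR_i^{-1/2}$ is a Hermitian contraction, so $\bigotimes_i\mathsf R_i\preceq I$. This gives $\bigotimes_iS_i\preceq R$ as an operator inequality on the whole doubled space; equivalently, $\bigotimes_i\Lambda_i-\bigotimes_i\mathcal M_i$ is completely copositive. Restricting to vectors $x\otimes y$ yields $D-LA^{-1}L^\dagger\succeq0$.

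It is this two-sided domination, not positivity of each factor, that tensorizes. That is the ingredient your plan lacks.
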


\begin{proof}
Local unitary covariance permits \(u_i=|0\rangle\).  On one local space put \(P=|0\rangle\langle0|\), \(Q=I-P\), and
\[
 A_0=I-\frac12P,\qquad
 \ell_a=(\langle0|\otimes I)
 \left(I-\frac12F\right)(|a\rangle\otimes I)
 =\delta_{a0}I-\frac12|a\rangle\langle0|.
\]
For the chosen local basis let \(E_{ab}=|a\rangle\langle b|\).
Define maps on these matrix units by
\[
 \Lambda(E_{ab})=\delta_{ab}I-\frac12E_{ab},\qquad
 {\cal M}(E_{ab})=\ell_aA_0^{-1}\ell_b^\dagger .
\]
Writing \(x=\sum_{\boldsymbol a}x_{\boldsymbol a}|\boldsymbol a\rangle\)
and expanding the tensor product gives
\begin{align}
 A&=A_{0,1}\otimes A_{0,2}\otimes A_{0,3},\notag\\
 D&=(\Lambda_1\otimes\Lambda_2\otimes\Lambda_3)
       (|x\rangle\langle x|),\notag\\
 LA^{-1}L^\dagger
 &=({\cal M}_1\otimes{\cal M}_2\otimes{\cal M}_3)
       (|x\rangle\langle x|).
 \label{eq:product-ray-map-identities}
\end{align}

For a linear map \(\mathcal E\), let
\[
 J(\mathcal E)=\sum_{a,b}E_{ab}\otimes\mathcal E(E_{ab}),\qquad
 \widehat J(\mathcal E)=\sum_{a,b}E_{ba}\otimes\mathcal E(E_{ab}).
\]
If \({\mathsf T}\) denotes matrix transposition in the chosen basis,
then
\begin{equation}
 \widehat J(\mathcal E)=J(\mathcal E\circ{\mathsf T}).
 \label{eq:cochoi-ccp}
\end{equation}
Thus \(\widehat J(\mathcal E)\succeq0\) is equivalent to complete copositivity of \(\mathcal E\), meaning that
\(\mathcal E\circ{\mathsf T}\) is completely positive.  Set
\[
 C_0=\widehat J(\Lambda)=I-\frac12F,\qquad
 S=\widehat J({\cal M}).
\]
The mutually orthogonal spaces \(P\otimes P\),
\(\operatorname{span}\{|0a\rangle,|a0\rangle\}\) for \(a\ne0\),
and \(Q\otimes Q\) reduce both matrices.  On the first space
\(C_0=S=1/2\); on the middle spaces, in the displayed ordered basis,
\[
 C_0=\begin{pmatrix}1&-1/2\\-1/2&1\end{pmatrix},
 \qquad
 S=\begin{pmatrix}1&-1/2\\-1/2&0\end{pmatrix};
\]
and on \(Q\otimes Q\),
\[
 C_0=I-\frac12F_Q,\qquad S=\frac12F_Q.
\]
Here \(F_Q\) is the swap restricted to
\(\operatorname{ran}Q\otimes\operatorname{ran}Q\).
Thus \(C_0-S\succeq0\) and \(C_0+S\succeq0\) on every reducing
space, or
\begin{equation}
 -C_0\preceq S\preceq C_0.
 \label{eq:cochoi-order}
\end{equation}
Since \(C_0\succ0\), the Hermitian matrix
\({\mathsf R}=C_0^{-1/2}SC_0^{-1/2}\) is a contraction.  After the
canonical unitary permutation that groups like tensor factors,
\[
 \widehat J\!\left(
 \bigotimes_i\Lambda_i-\bigotimes_i{\cal M}_i\right)
 =
 \left(\bigotimes_i C_{0,i}^{1/2}\right)
 \left(I-\bigotimes_i{\mathsf R}_i\right)
 \left(\bigotimes_i C_{0,i}^{1/2}\right)\succeq0.
\]
The subscript \(i\) here denotes the copy of each one-site object
\(\Lambda,\mathcal M,C_0,\mathsf R,A_0\) acting at site \(i\).
Indeed, every eigenvalue of \(\bigotimes_i{\mathsf R}_i\) lies in \([-1,1]\).  By \eqref{eq:cochoi-ccp}, the map in parentheses is completely copositive and therefore positive.  Applying it to \(|x\rangle\langle x|\) in \eqref{eq:product-ray-map-identities}
proves \eqref{eq:product-ray-schur}.
\end{proof}

\begin{proof}[Proof of Theorem~\ref{thm:product-ray}]
Suppose first that \(\rank C=2\) and the range contains a product unit vector \(u\).  Complete it to an orthonormal basis \(u,x\) of the range and write
\[
 C=|u\rangle\langle v|+|x\rangle\langle y|,
\]
where \(v,y\) need not be orthogonal.  For this \(u,x\), define
\[
 A=(\langle u|\otimes I)R(|u\rangle\otimes I),\quad
 L=(\langle u|\otimes I)R(|x\rangle\otimes I),\quad
 D=(\langle x|\otimes I)R(|x\rangle\otimes I).
\]
Lemma~\ref{lem:product-ray-schur} gives
\(D-LA^{-1}L^\dagger\succeq0\).  Set
\[
 a=\langle v,Av\rangle,\qquad
 d=\langle y,Dy\rangle,\qquad
 g=\langle y,Lv\rangle.
\]
Here \(a\) and \(d\) are the two diagonal values of \(q_3\), and \(g\)
is their polarized cross term.  Cauchy--Schwarz and
\eqref{eq:product-ray-schur} give
\[
 |g|^2\leq
 a\,\langle y,LA^{-1}L^\dagger y\rangle\leq ad.
\]
Consequently,
\[
 q_3(C)=a+d+2\Real g
 \geq(\sqrt a-\sqrt d)^2\geq0.
\]
The assertion for a product ray in \(\operatorname{ran}C^\dagger\)
follows from \(q_3(C^\dagger)=q_3(C)\); the rank-one case follows by setting one covector to zero.
\end{proof}

The product-ray and local-overlap closures are independent.  For example, in \((\mathbb C^3)^{\otimes3}\), let
\[
 u=|000\rangle,\quad
 x=\frac{|012\rangle+|120\rangle+|201\rangle}{\sqrt3},\quad
 v=\frac{|000\rangle+|111\rangle+|222\rangle}{\sqrt3},\quad y=x,
\]
and \(C=|u\rangle\langle v|+|x\rangle\langle y|\).  All one-site supports are the full qutrit space, while \(\dim(\operatorname{ran}C+\operatorname{ran}C^\dagger)=3\) and
\(CC^\dagger\ne C^\dagger C\).  Thus neither the local-overlap theorem nor the common-plane corollary applies, whereas Theorem~\ref{thm:product-ray} does.

Two more specialized closures---a local-traceless face and operators between distinct locally orthogonal product codes---are proved in Appendix~\ref{app:additional-sectors}.

\section{Sharpness and the rank boundary}
\label{sec:sharpness}

The following constructions separate two issues: optimality of the constants within rank two and necessity of the rank-two hypothesis.

\subsection{A nonnormal rank-two/rank-three boundary family}

We continue to write \(E_{ab}=|a\rangle\langle b|\) for matrix units.

\begin{proposition}[Nonnormal equality at rank two and negativity at rank three]
\label{prop:nonnormal-boundary}
Let \(m\geq2\), assume the local dimensions contain a
\(2\times2\times m\) subspace, and define
\begin{equation}
 C_m=E_{01}\otimes E_{01}\otimes\frac{I_m}{\sqrt m}.
 \label{eq:Cm}
\end{equation}
Then \(C_m\) is nonnormal, \(\rank C_m=m\), \(\|C_m\|_2=1\), and
\begin{equation}
 q_3(C_m)=1-\frac m2.
 \label{eq:Cm-q}
\end{equation}
In particular, \(C_2\) is a genuinely nonnormal rank-two equality
operator, whereas \(q_3(C_3)=-1/2\).
\end{proposition}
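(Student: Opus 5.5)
The proof is a direct computation. Since \(C_m\) is a tensor product, every partial trace factorizes, and \(q_3\) becomes a product of single-site factors.

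\textbf{Structural claims.} Write \(C_m=X\otimes X\otimes Y\) with \(X=E_{01}\) and \(Y=I_m/\sqrt m\).
\begin{itemize}
\item Rank: \(\rank C_m=\rank X\cdot\rank X\cdot\rank Y=1\cdot1\cdot m=m\).
\item Norm: \(\|C_m\|_2=\|X\|_2^2\,\|Y\|_2=1\cdot 1=1\).
\item Nonnormality: \(X^\dagger X=E_{11}\) while \(XX^\dagger=E_{00}\). Hence \(C_m^\dagger C_m=E_{11}\otimes E_{11}\otimes I_m/m\) differs from \(C_mC_m^\dagger=E_{00}\otimes E_{00}\otimes I_m/m\), since both are nonzero.
\end{itemize}

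\textbf{Factorized form of \(q_3\).} For a product operator \(C=X_1\otimes X_2\otimes X_3\) and any \(J\subseteq\{1,2,3\}\),
\[
 \|\Tr_JC\|_2^2=\prod_{i\in J}|\Tr X_i|^2\prod_{i\notin J}\|X_i\|_2^2 .
\]
Summing with the weights \((-1/2)^{|J|}\) from \eqref{eq:b3} therefore gives
\[
 q_3(C)=\prod_{i=1}^3\Bigl(\|X_i\|_2^2-\tfrac12|\Tr X_i|^2\Bigr).
\]
This is the same as evaluating \(\langle\xi,W_3\xi\rangle\) in \eqref{eq:W3} on the product vector \(\xi=\operatorname{vec}C\).

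\textbf{Evaluating the factors.}
\begin{itemize}
\item Site with \(X=E_{01}\): \(\|X\|_2^2=1\) and \(\Tr X=0\), so the factor is \(1\).
\item Site with \(Y=I_m/\sqrt m\): \(\|Y\|_2^2=1\) and \(|\Tr Y|^2=m\), so the factor is \(1-\tfrac m2\).
\end{itemize}
The product is \(q_3(C_m)=1-\tfrac m2\), which is \eqref{eq:Cm-q}.

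\textbf{Consequences.} For \(m=2\), \(C_2\) has rank two, is nonnormal, and satisfies \(q_3(C_2)=0\), so it is a rank-two equality case. For \(m=3\), \(C_3\) has rank three and \(q_3(C_3)=-\tfrac12<0\).

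The only step needing care is the factorization of the partial traces and the matching of signs and weights with \eqref{eq:q3}. This is routine because the partial traces act factorwise on product operators.
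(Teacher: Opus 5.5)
Your proposal is correct and follows essentially the same route as the paper: you factorize \(q_3\) on product operators as \(\prod_i\bigl(\|X_i\|_2^2-\tfrac12|\Tr X_i|^2\bigr)\), evaluate the three single-site factors, and read off nonnormality from \(C_mC_m^\dagger=E_{00}\otimes E_{00}\otimes I_m/m\) versus \(C_m^\dagger C_m=E_{11}\otimes E_{11}\otimes I_m/m\). Your explicit derivation of the factorization from the partial-trace norms \(\|\Tr_JC\|_2^2\) fills in the one step the paper states without justification.
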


\begin{proof}
For product operators the form factorizes:
\[
 q_3(X_1\otimes X_2\otimes X_3)
 =\prod_{i=1}^3
 \left(\|X_i\|_2^2-\frac12|\Tr X_i|^2\right).
\]
The first two factors of \eqref{eq:Cm} have Hilbert--Schmidt norm one and trace zero, while \(I_m/\sqrt m\) has Hilbert--Schmidt norm one and trace \(\sqrt m\).  This proves \eqref{eq:Cm-q}.  The rank and norm statements are immediate.  Finally,
\[
 C_mC_m^\dagger
 =E_{00}\otimes E_{00}\otimes\frac{I_m}{m},\qquad
 C_m^\dagger C_m
 =E_{11}\otimes E_{11}\otimes\frac{I_m}{m},
\]
so \(C_m\) is nonnormal.
\end{proof}

\subsection{Three-qubit anti-state equality}

On one qubit, let \(K\) denote conjugation in the computational basis, put
\[
 \sigma_y=\begin{pmatrix}0&-i\\ i&0\end{pmatrix},
 \qquad J=i\sigma_y,
\]
and define the antiunitary
\(\vartheta=JK\).  If \(K_{\rm global}\) denotes component-wise conjugation of the full three-qubit coefficient vector, set
\[
 \Theta=J^{\otimes3}K_{\rm global}
 =\vartheta^{\otimes3}.
\]
Here the tensor product of local antiunitaries is understood canonically.  The local map \(\vartheta=i\sigma_yK\) is the spin flip entering concurrence and universal state inversion
\cite{Wootters,RungtaEtAl}; its odd-party tensor power is the anti-state map used in \cite{TranEtAl}.

\begin{proposition}[Anti-state saturation]
\label{prop:antistate}
For every normalized three-qubit \(u\), let \(v=\Theta u\).  Then
\[
 u\perp v,\qquad
 3p_3(u,v)-p_2(u,v)=\sqrt{a(u)a(v)}.
\]
Moreover, for all \(\lambda,\mu\geq0\),
\begin{equation}
 q_3(\lambda\Pu+\mu\Pv)
 =(\lambda-\mu)^2\left(\frac18+a(u)\right).
 \label{eq:antistate-q}
\end{equation}
\end{proposition}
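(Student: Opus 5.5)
The plan is to verify the three assertions in order: orthogonality, equality in Theorem~\ref{thm:sharp-plucker}, and then the closed form \eqref{eq:antistate-q} via Proposition~\ref{prop:spectral-bridge} and \eqref{eq:q-S}.

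\emph{Orthogonality.} On one qubit, \(\langle w,\vartheta w\rangle=w^T\overline{J}^{\,T}\cdots\) reduces to the bilinear form \(w^TJw\), which vanishes because \(J\) is antisymmetric. For three qubits, \(\langle u,\Theta u\rangle=u^T J^{\otimes3}u\) up to conjugation. Since \(J^{\otimes3}\) is antisymmetric (an odd tensor power of an antisymmetric matrix), the form vanishes.

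\emph{Marginals of \(v\).} Because \(\vartheta\) is antiunitary and maps \(\rho\) to \(J\overline\rho J^\dagger\) locally, every reduced state of \(v\) is the spin-flipped reduced state of \(u\):
\[
 \rho_S^v=J^{\otimes|S|}\,\overline{\rho_S^u}\,J^{\dagger\otimes|S|}.
\]
Purities are preserved, so \(e_i(v)=e_i(u)\) and \(a(v)=a(u)\). For one qubit, \(J\overline\rho J^\dagger=I-\rho\) (universal inversion), so
\[
 x_i=\Tr\bigl(\rho_i(I-\rho_i)\bigr)=1-\Tr\rho_i^2=e_i(u).
\]
For a pair \(ij\), the two-qubit spin flip is \(\tilde\rho=(I-\rho_i\otimes I-I\otimes\rho_j+\rho)\) (the standard two-qubit inversion \(\sigma_y\otimes\sigma_y\overline\rho\,\sigma_y\otimes\sigma_y\)). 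Hence
\[
 y_{ij}=\Tr(\rho_{ij}\tilde\rho_{ij})=1-\Tr\rho_i^2-\Tr\rho_j^2+\Tr\rho_{ij}^2.
\]
For a pure three-qubit \(u\), \(\Tr\rho_{ij}^2=\Tr\rho_k^2\), so \(y_{ij}=1-(1-e_i)-(1-e_j)+(1-e_k)=e_i+e_j-e_k\).

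Then
\[
 2\sum_{i<j}y_{ij}-\sum_kx_k=2\sum_k e_k-\sum_k e_k=\sum_ke_k=4a(u).
\]
Thus \(3p_3-p_2=a(u)=\sqrt{a(u)a(v)}\) by \eqref{eq:p-cross}. The main obstacle is getting the two-qubit inversion identity and its sign bookkeeping right. I would check this by expanding in Pauli strings: \(\sigma_y\overline{\sigma}\sigma_y=-\sigma\) for each nontrivial Pauli, so weight-one components flip sign and weight-two components do not. Equivalently, this is a one-line check of the Walsh/Bloch weights, and I would carry out that check.

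\emph{Closed form.} Proposition~\ref{prop:spectral-bridge} with \(a(u)=a(v)\) and cross term \(p_2-3p_3=-a(u)\) gives
\[
 \cS=4a(u)(\lambda-\mu)^2.
\]
Also \(2\|C\|_2^2-|\Tr C|^2=(\lambda-\mu)^2\). Substituting both into \eqref{eq:q-S} yields \eqref{eq:antistate-q}.
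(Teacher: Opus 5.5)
Your proof is correct, but it reaches the key identity by a different route than the paper. The paper never computes the pair overlaps \(y_{ij}\) individually. It contracts \(u\otimes\Theta u\) with three singlets to get \(p_3(u,\Theta u)=1/8\) directly. It uses only the one-qubit inversion \(\rho_i^v=I-\rho_i^u\) to obtain \(\sum_i x_i=4a(u)\). It then reads \(\sum_{i<j}y_{ij}=\sum_i x_i\) and \(p_2=3/8-a(u)\) off \eqref{eq:p2p3}.

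You instead stay in the marginal-overlap coordinates. You combine the two-qubit inversion \((\sigma_y\otimes\sigma_y)\overline{\rho}(\sigma_y\otimes\sigma_y)=I-\rho_i\otimes I-I\otimes\rho_j+\rho\) with complementarity of pure-state purities to get \(y_{ij}=e_i+e_j-e_k\). Your Pauli-string check of that inversion identity is exactly what is needed, and it goes through: every Pauli satisfies \(\sigma_y\overline{\sigma}\sigma_y=-\sigma\), so only odd-weight Bloch components change sign.

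What each route buys:
\begin{itemize}
\item The paper's route is shorter. It isolates the structural fact that the fully antisymmetric weight equals the constant \(1/8\) for every \(u\).
\item Your route recovers that constant a posteriori, since \(p_3=\frac18(1-4a+4a)\). It also gives strictly finer information: \(y_{ik}+y_{jk}-x_k=e_k=\sqrt{e_k(u)e_k(v)}\). So the anti-state saturates each cutwise estimate \eqref{eq:one-k-bound} separately. Because \(e_k(u)=e_k(v)\), it saturates the final Cauchy--Schwarz step as well.
\end{itemize}

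Your orthogonality argument via the antisymmetry of \(J^{\otimes3}\) is just the matrix form of the paper's Kramers argument. The closing substitution into \eqref{eq:S-exterior} and \eqref{eq:q-S} is the same as the paper's.
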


\begin{proof}
Since \(\vartheta^2=-I\), one has \(\Theta^2=-I\), and Kramers orthogonality gives \(\langle u,\Theta u\rangle=0\).  Each local antisymmetric two-qubit space is spanned by the singlet.  Contracting \(u\otimes\Theta u\) with the three normalized singlets gives
\[
 \langle\omega|^{\otimes3}(u\otimes\Theta u)
 =-\frac{1}{2\sqrt2},\qquad
 |\omega\rangle=\frac{|01\rangle-|10\rangle}{\sqrt2},
 \]
where the phase depends on the convention for \(\vartheta\), but the modulus does not.  Indeed, the one-qubit contraction is
\(\langle\omega|(x\otimes\vartheta y)
=-\langle y,x\rangle/\sqrt2\).  Consequently,
\[
 p_3(u,\Theta u)=\frac18.
\]
The spin flip maps each one-qubit reduction to
\(\rho_i^v=I-\rho_i^u\).  Explicitly, if
\[
 \rho=\begin{pmatrix}a&c\\\overline c&1-a\end{pmatrix},
 \qquad
 J=\begin{pmatrix}0&1\\-1&0\end{pmatrix},
\]
then
\[
 J\overline\rho\,J^\dagger
 =\begin{pmatrix}1-a&-c\\-\overline c&a\end{pmatrix}
 =I-\rho .
\]
Consequently,
\[
 \sum_i\Tr(\rho_i^u\rho_i^v)
 =\sum_i\left(1-\Tr[(\rho_i^u)^2]\right)=4a(u).
\]
Equation~\eqref{eq:p2p3}, together with \(p_3=1/8\), gives
\[
 p_2=\frac38-a(u),\qquad a(v)=a(u).
\]
Therefore \(3p_3-p_2=a(u)\), proving equality in \eqref{eq:sharp-plucker}.  Substituting the same identities into \eqref{eq:S-exterior} and then \eqref{eq:q-S} proves
\eqref{eq:antistate-q}.
\end{proof}

Thus every balanced anti-state mixture is an equality state, and the constant in Theorem~\ref{thm:sharp-plucker} cannot be reduced.
For the product antipodes \(u=|000\rangle\), \(v=|111\rangle\), one has \(a(u)=a(v)=0\), and \eqref{eq:antistate-q} reduces to
\[
 q_3(\lambda P_u+\mu P_v)=\frac{(\lambda-\mu)^2}{8}.
\]
This attains equality in the positive lower bound \eqref{eq:strong-bound}, so its coefficient is sharp as well.

\subsection{Failure at rank three}

\begin{proposition}[Rank-three counterexample]
\label{prop:rank3-counterexample}
There is a positive tripartite state \(C\) of rank three for which both
\(\cS(C)<0\) and \(q_3(C)<0\).
\end{proposition}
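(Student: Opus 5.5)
The plan is to find an explicit example in a small local dimension, verify it by exact finite computation, and use the identity \eqref{eq:q-S} to organize the check. For a normalized positive \(C\), \eqref{eq:q-S} reads \(q_3(C)=\frac14\cS(C)+\frac18(2\Tr C^2-1)\). For rank three, the second term can already be negative: if \(C\) has three equal eigenvalues, then \(2\Tr C^2-1=-1/3\). So it suffices to make \(\cS(C)\) strictly negative while keeping the spectrum close to flat; then \(q_3(C)<0\) follows automatically. The whole task therefore reduces to producing a rank-three positive \(C\) with \(\cS(C)<0\).

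\emph{Seed example.} I would start from the classically correlated qutrit state
\[
 C_0=\frac13\sum_{a=0}^{2}|aaa\rangle\langle aaa|
\]
on \((\mathbb C^3)^{\otimes3}\). Every reduction of \(C_0\) is again uniform on three points. Hence \(\|C_0\|_2^2=\|\Tr_iC_0\|_2^2=\|\Tr_{\bar i}C_0\|_2^2=1/3\), which gives \(\cS(C_0)=1+1-2=0\) and \(q_3(C_0)=1/3-1/2+1/4-1/8=-1/24<0\). This state already violates \(q_3\geq0\) at rank three, but it only saturates \(\cS\geq0\). A short computation shows that no diagonal (classical) rank-three state can do better. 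For a probability distribution with three support points, the pair collision probabilities are at most the single-site ones, and the global collision probability is at least each pair one. Hence coherence in the range is needed to make \(\cS\) strictly negative.

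\emph{Perturbation step.} I would replace \(C_0\) by the normalized projector \(C_\epsilon=\frac13P_{E_\epsilon}\) onto \(E_\epsilon=\operatorname{span}\{w_a(\epsilon)\}\), where \(w_a(\epsilon)=|aaa\rangle+\epsilon\,\phi_a\) and the \(\phi_a\) are chosen so that the pairwise reductions gain off-diagonal coherence faster than the single-site reductions. A natural first choice is cyclic vectors of the form \(\phi_a=|a,a{+}1,a{+}1\rangle\) or their permutations. The spectrum stays flat, so \(\Tr C_\epsilon^2=1/3\) exactly. I would then expand \(\cS(C_\epsilon)\) to the first nonvanishing order in \(\epsilon\) and choose \(\phi_a\) and the sign or phase of \(\epsilon\) to make that coefficient negative.

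If the perturbative route is inconclusive, the fallback is to parametrize a rank-three projector in three qubits or three qutrits, minimize \(\cS\) numerically, and then round to a closed-form candidate. That candidate would be verified by directly computing the purities of all seven reductions.

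The main obstacle is this step: finding a concrete coherent rank-three range where \(\cS\) actually drops below zero. The constraint is subtle, because Theorem~\ref{thm:strong} forbids negativity within any two-dimensional subrange. Once such a \(C\) is found, the verification of \(\cS(C)<0\) is a finite purity computation, and \(q_3(C)<0\) follows immediately from \eqref{eq:q-S} and the flat spectrum.
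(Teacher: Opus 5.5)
\textbf{There is a genuine gap.} Your reduction is correct, and it is exactly how the paper closes its argument. For a normalized rank-three $C$ with flat spectrum, \eqref{eq:q-S} gives $q_3(C)=\frac14\cS(C)-\frac1{24}$, and your values $\cS(C_0)=0$, $q_3(C_0)=-1/24$ are right. But the proposition is a pure existence claim. Its entire content is exhibiting a positive rank-three $C$ with $\cS(C)<0$, and that is precisely the step you leave open.

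Moreover, the perturbation you single out moves in the wrong direction. With $w_a=|aaa\rangle+\epsilon|a,a{+}1,a{+}1\rangle$, site $1$ remains a classical label and sites $2,3$ carry $\chi_a\propto|aa\rangle+\epsilon|a{+}1,a{+}1\rangle$. Put $c^2=1/(1+|\epsilon|^2)$ and $s^2=1-c^2$. All one-site marginals stay $I_3/3$. The two-site purities are $\frac13+\frac23c^2s^2$ for $\Tr_1C_\epsilon$ and $\frac13(c^4+s^4)$ for $\Tr_2C_\epsilon$ and $\Tr_3C_\epsilon$. One finds exactly
\[
 \cS(C_\epsilon)=\frac43c^2s^2=\frac{4|\epsilon|^2}{3(1+|\epsilon|^2)^2}>0.
\]
This holds for every sign and phase of $\epsilon$ and for every site permutation of the pattern. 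A numerical search is not a proof until a closed-form candidate is produced and verified, so as written the proposal does not establish the statement.

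The missing idea is to let one site be a maximally mixed spectator of an entangled pair. Take $C=\ketbra{\psi}{\psi}_{12}\otimes\tau_3$ with $e=\Tr\psi_1^2$ and $t=\Tr\tau_3^2$. The global, one-body, and two-body purity sums are $t$, $2e+t$, and $1+2et$, so $\cS(C)=2(2t-1)(1-e)$. This is negative as soon as $\psi$ is entangled and $t<1/2$. The paper takes $\psi=\Phi_3$ and $\tau_3=I_3/3$, which is a flat rank-three projector as in your framework, giving $\cS=-4/9$ and $q_3=-11/72$.

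Your perturbative route can be rescued along this direction. Keep site $3$ as the label and spread the coherence over all $|bb\rangle$: set $C=\frac13\sum_a\ketbra{\eta_a}{\eta_a}\otimes\ketbra{a}{a}$ with $\eta_a=\alpha|aa\rangle+\beta\sum_{b\ne a}|bb\rangle$ and $\alpha^2+2\beta^2=1$. Then $\cS(C)=\frac43\beta^3(\beta-4\alpha)$. This is negative for small $\beta>0$, but only at third order, and at $\alpha=\beta=1/\sqrt3$ it is exactly the paper's example.

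Separately, your justification for the diagonal case reverses an inequality. Coarse-graining can only increase a collision probability, so the global one is at most each pair one. The conclusion $\cS\geq0$ still holds: for a diagonal state $\cS=2\sum_{k<l}p_kp_lf(m_{kl})$, where $m_{kl}$ counts the sites on which two support points agree and $f(0)=f(2)=0$, $f(1)=1$.
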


\begin{proof}
Let
\[
 C=\ketbra{\psi}{\psi}_{12}\otimes\tau_3,\qquad
 e=\Tr(\psi_1^2)=\Tr(\psi_2^2),\qquad t=\Tr\tau_3^2.
\]
Here
\(\psi_1=\Tr_2\ketbra{\psi}{\psi}\) and
\(\psi_2=\Tr_1\ketbra{\psi}{\psi}\), while \(\tau_3\) is a density
operator on subsystem \(3\).
The global, one-body, and two-body purity sums are
\[
 \Tr C^2=t,\qquad
 \sum_i\Tr C_i^2=2e+t,\qquad
 \sum_i\Tr C_{\bar i}^2=1+2et.
\]
Consequently
\begin{equation}
 \cS(C)=2(2t-1)(1-e).
 \label{eq:counter-family}
\end{equation}
Choose
\[
 |\Phi_3\rangle=\frac1{\sqrt3}\sum_{a=0}^2|aa\rangle,
\]
maximally entangled on \(\mathbb C^3\otimes\mathbb C^3\), and set
\(\psi=\Phi_3\), \(\tau_3=I_3/3\).
Then \(e=t=1/3\), \(\rank C=3\), and
\[
 \cS(C)=-\frac49,\qquad
 q_3(C)=\frac14\cS(C)+\frac18(2t-1)
       =-\frac{11}{72}.
\]
\end{proof}

This example lies outside the Schmidt-rank-at-most-two witness domain in \eqref{eq:qk-criterion}.  It blocks an arbitrary-rank extension of Theorem~\ref{thm:strong} and of the associated linear-entropy inequality.  Because it is not an admissible rank-two witness, it has no direct implication for Werner-state distillability.

Within the tensor-product family in the proof, rank two is the exact threshold for \(\cS\): if \(\rank\tau\leq2\), then \(t\geq1/2\), making \eqref{eq:counter-family} nonnegative.

\section{Operational and physical meaning}
\label{sec:physical-meaning}

Equation~\eqref{eq:qk-criterion} identifies \(C\) with the coefficient matrix of a Schmidt-rank-two test vector.  The main results impose three successive filters on a negative witness at \(\alpha=-1/2\): it cannot be normal (including positive semidefinite); its global
initial and final two-planes must be distinct and product-ray-free; and their local output--input overlap must exceed two at every site. Appendix~\ref{app:additional-sectors} adds two independent exclusions, namely every local-traceless face and the crossed locally orthogonal
product-code class.  Each restriction concerns the coherent test vector; none assumes that the Werner state has low rank.

The symbol \(C\) has two uses.  In the distillation criterion it is an arbitrary coefficient matrix and need not be Hermitian.  If \(C\succeq0\) and \(\Tr C=1\), the same matrix is also a tripartite density operator.  In that interpretation, \(\cS(C)\geq0\) is exactly the linear-entropy relation \eqref{eq:sssa}.  Proposition \ref{prop:rank3-counterexample} shows that this entropy statement is rank-sensitive and fails as a universal entropy law.

\subsection{Local SWAP parity and measurable purities}

For density operators \(\rho,\sigma\) on a multipartite system, let \(F_i\) exchange the two copies of subsystem \(i\), and define \(F_S=\prod_{i\in S}F_i\), with \(F_\varnothing=I\).  Also write \(\rho_S=\Tr_{\bar S}\rho\) and \(\sigma_S=\Tr_{\bar S}\sigma\).  The SWAP identity reads
\begin{equation}
 \Tr[(\rho\otimes\sigma)F_S]
 =\Tr(\rho_S\sigma_S).
 \label{eq:physical-swap}
\end{equation}
When \(\rho=\sigma\), the right side is the subsystem purity \(\Tr\rho_S^2\).  Controlled-SWAP circuits, two-copy beam-splitter
interference, and randomized-measurement protocols provide
experimentally established routes to these nonlinear observables
\cite{EkertEtAl,AlvesEtAl,BovinoEtAl,IslamEtAl,ElbenEtAl,BrydgesEtAl,
MillerEtAl}.
Because the local swaps commute, the complete joint parity distribution is obtained from the correlators in \eqref{eq:physical-swap} by the Walsh transform written explicitly in
\eqref{eq:walsh-sector}.  The quantities \(a,p_2,p_3\) are particular coarse-grainings of this distribution. Antisymmetric sectors themselves also carry nontrivial
entanglement-cost and distillable-key bounds \cite{ChristandlSchuchWinter}; here they enter specifically as outcomes of copy-exchange parity measurements.

For a pure tripartite vector \(w\),
\[
 4a(w)=\sum_{i=1}^3
 \bigl(1-\Tr[(\rho_i^w)^2]\bigr).
\]
Each summand is the linear entropy of entanglement across the bipartition \(i:\bar i\).  Theorem~\ref{thm:sharp-plucker} therefore states that the weighted parity combination \(3p_3-p_2\), comparing the fully antisymmetric outcome with the two-minus sectors, is bounded by
the geometric mean of the total one-versus-rest entanglement of the two rays.  This gives the Pl\"ucker inequality an operational two-copy reading: it is a sharp compatibility constraint on observable local exchange parities.

\subsection{Transition coherence and the second compound}

The operator
\[
 X_k=\Tr_k\ketbra{u}{v}
\]
is the coherence between two orthogonal rays that remains after subsystem \(k\) is discarded.  Theorem~\ref{thm:nuclear-pt} controls the total Hilbert--Schmidt strength of its two further reductions.  The quantity retained in the proof,
\[
 \|X_k\|_1^2-\|X_k\|_2^2
 =2\|\wedgeop X_k\|_1,
\]
is the sum of all pairwise products of singular values.  It vanishes
exactly when \(X_k\) has rank at most one and hence detects the irreducibly two-dimensional part of the reduced coherence.  The exterior-square estimate bounds that component by the geometric mean of the two local linear entropies.  This is the compensation mechanism
that is absent from a treatment of the SWAP-sector weights as unconstrained classical probabilities.

\section{Discussion}

A three-step dimension-free argument drives the results.  The double-antisymmetric projector first gives the exact Schmidt-rank-two overlap \(1/2\).  The SVD then converts that projection
statement into the partial-trace inequality of Theorem~\ref{thm:nuclear-pt}.  Finally, the second compound matrices of the three transition operators convert the nuclear-norm excess into local linear entropies.  The three cutwise estimates sum exactly to the Pl\"ucker combination in \eqref{eq:sharp-plucker}.

This chain closes every positive semidefinite rank-two test and, after combining the two spectral signs, every normal rank-two test in arbitrary finite local dimensions.  The anti-state family proves that the Pl\"ucker constant and the positive-spectrum square are sharp.
Proposition~\ref{prop:nonnormal-boundary} supplies a genuinely nonnormal rank-two equality and a rank-three negative continuation, while Proposition~\ref{prop:rank3-counterexample} rules out an arbitrary-rank extension of the positive result.

Consequently, any negative rank-two endpoint witness, if exists, must be genuinely nonnormal.  Its initial and final two-planes must be distinct and product-ray-free; their local output--input overlap must exceed two at every site; no local partial trace may vanish; and the operator must lie outside the crossed locally orthogonal product-code sector.  This necessary profile is the geometric content of the sector theorems.

Beyond normality, Proposition~\ref{prop:crossed-gram} reduces the unrestricted question to one crossed-Gram determinant.  The local support and product-ray theorems remove two broad geometric regions.
Even a future closure of the complete three-copy problem would remain a finite-copy result: for every prescribed copy number there exist states whose distillation requires more copies \cite{WatrousCopies}.  The asymptotic NPT bound-entanglement problem is therefore strictly
stronger.  The present results nevertheless convert the first unresolved three-copy layer from an unstructured rank-two search into a two-plane compatibility problem with sharp analytic boundaries.

\appendix

\section{Vectorization and the Werner distillability form}
\label{app:werner-criterion}

We record the contraction behind \eqref{eq:qk-criterion}.  Fix product
bases on
\[
 A=A_1\otimes\cdots\otimes A_k,\qquad
 B=B_1\otimes\cdots\otimes B_k,
\]
with \(A_j\simeq B_j\simeq\mathbb C^d\).  For
\(J\subseteq[k]=\{1,\ldots,k\}\), let
\[
 R_J=
 \bigotimes_{j\in J}\ketbra{\Omega_d}{\Omega_d}_{A_jB_j}
 \bigotimes_{j\notin J}I_{A_jB_j}.
\]
A direct index contraction gives
\begin{equation}
 \langle\operatorname{vec}C|R_J|\operatorname{vec}C\rangle
 =\|\Tr_JC\|_2^2.
 \label{eq:vec-contraction}
\end{equation}
Indeed, each maximally entangled projector identifies the row and column index at the corresponding tensor factor and sums over that common index; the remaining free indices are exactly the matrix indices of \(\Tr_JC\).  The cases \(J=\varnothing\) and \(J=[k]\)
reduce respectively to \(\|C\|_2^2\) and \(|\Tr C|^2\).

Since
\[
 (\rho_\alpha^{(d)})^\Gamma
 =\frac{I+\alpha\ketbra{\Omega_d}{\Omega_d}}
 {d^2+\alpha d},
\]
expanding its \(k\)-fold tensor power and applying
\eqref{eq:vec-contraction} proves \eqref{eq:qk-criterion}.  Finally, the standard distillability theorem states that \(\rho\) is \(k\)-copy distillable exactly when \((\rho^\Gamma)^{\otimes k}\) has negative expectation on a vector of Schmidt rank at most two
\cite{Horodecki1997,Horodecki1998,DurEtAl}.  Vectorization converts that Schmidt-rank condition into \(\rank C\leq2\).

\section{Swap and transition identities}
\label{app:swap}

For \(S\subseteq[k]\), let \(F_S=\prod_{i\in S}F_i\) be the product of the local swaps between the two copies, with \(F_\varnothing=I\), and let \(\bar S=S^c=[k]\setminus S\).  For operators on a tensor product, the swap trick gives
\[
 \Tr[(A\otimes B)F_S]
 =\Tr[(\Tr_{\bar S}A)(\Tr_{\bar S}B)]
\]
in its bilinear form, whereas the Hilbert--Schmidt form is
\[
 \Tr[(A^\dagger\otimes B)F_S]
 =\Tr\!\left[
   (\Tr_{\bar S}A)^\dagger\Tr_{\bar S}B
 \right].
\]
Consequently, for an arbitrary operator \(C\),
\[
 \|\Tr_JC\|_2^2
 =\Tr[(C^\dagger\otimes C)F_{J^c}].
\]
Hence, for every tripartite \(C\),
\[
 q_3(C)=
 \Tr\!\left[(C^\dagger\otimes C)
 \prod_{i=1}^3\left(F_i-\frac12I\right)\right].
\]
For Hermitian \(C\), \(C^\dagger\otimes C=C\otimes C\).

For normalized \(u,v\),
\[
 \|\Tr_S\ketbra{u}{v}\|_2^2
 =\Tr(\rho_S^u\rho_S^v).
\]
Expanding
\(\prod_i(I\pm F_i)/2\), using
\(\langle u,v\rangle=0\), and grouping the sectors with exactly two or three minus signs yields \eqref{eq:p2p3}.  Repeating the expansion for \(u\otimes u\), for which only even local parity survives, yields \eqref{eq:a-purity}.

For completeness, the kernel identity \eqref{eq:swap-kernel} follows term by term from
\begin{align*}
 \langle\Tr_{\mathcal K}\ketbra{u}{v},
         \Tr_{\mathcal K}\ketbra{s}{t}\rangle_{\HS}
 &=\langle u\otimes t,F_{\mathcal L}(v\otimes s)\rangle,\\
 \langle\Tr_{\mathcal L}\ketbra{u}{v},
         \Tr_{\mathcal L}\ketbra{s}{t}\rangle_{\HS}
 &=\langle u\otimes t,F_{\mathcal K}(v\otimes s)\rangle,\\
\overline{\Tr\ketbra{u}{v}}\Tr\ketbra{s}{t}
 &=\langle u\otimes t,v\otimes s\rangle.
\end{align*}

\section{Additional closed nonnormal sectors}
\label{app:additional-sectors}

The following two results are independent of the main nonnormal
closures.  
\subsection{The local-traceless sector}

For \(S\subseteq\{1,2,3\}\), let
\[
 q_S(X)=
 \left\langle X,
 \prod_{i\in S}\left(\id-\frac12\Tr_i^*\Tr_i\right)X
 \right\rangle_{\HS},
\]
where \(\Tr_i^*\) is the Hilbert--Schmidt adjoint of \(\Tr_i\).
Explicitly, up to the canonical permutation that restores the original
factor order, \(\Tr_i^*(Y)=I_i\otimes Y\).
Thus \(q_{\{1,2,3\}}=q_3\), while a two-element \(S\) gives the
two-copy endpoint form on those sites, acting blockwise on the
spectator factor.

\begin{theorem}[Local-traceless rank-two sector]
\label{cor:local-traceless}
If \(\rank C\leq2\) and \(\Tr_iC=0\) for at least one site \(i\),
then \(q_3(C)\geq0\).
\end{theorem}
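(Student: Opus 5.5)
The plan is to show that the vanishing partial trace makes the site-\(i\) factor of \(q_3\) act trivially. This collapses \(q_3(C)\) to a two-copy endpoint form with a spectator site, which is then nonnegative by the same contraction argument used in the proof of Theorem~\ref{thm:local-two-support}.

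\textbf{Step 1: rewrite \(q_3\) as a product of commuting factors.} Put \(T_j=\Tr_j^*\Tr_j\), so that \(T_j(X)=I_j\otimes\Tr_jX\) up to the canonical reordering. The maps \(T_1,T_2,T_3\) act on different tensor factors, hence commute and are Hilbert--Schmidt self-adjoint. By definition \(q_3=q_{\{1,2,3\}}\), so
\[
 q_3(C)=\Bigl\langle C,\prod_{j\neq i}\Bigl(\id-\tfrac12T_j\Bigr)\Bigl(\id-\tfrac12T_i\Bigr)C\Bigr\rangle_{\HS}.
\]
I would double-check this against \eqref{eq:q3} by expanding the product: for \(J\subseteq\{1,2,3\}\), the term \(\langle C,\prod_{j\in J}T_jC\rangle\) equals \(\|\Tr_JC\|_2^2\), which reproduces the signed coefficients \((-\tfrac12)^{|J|}\).

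\textbf{Step 2: collapse the site-\(i\) factor.} Since \(\Tr_iC=0\), we have \(T_iC=0\), and therefore \((\id-\tfrac12T_i)C=C\). Consequently \(q_3(C)=q_{\widehat i}(C)\), where \(\widehat i=\{j,k\}\) are the two remaining sites. Explicitly,
\[
 q_3(C)=\|C\|_2^2-\tfrac12\bigl(\|\Tr_jC\|_2^2+\|\Tr_kC\|_2^2\bigr)+\tfrac14\|\Tr_{jk}C\|_2^2 .
\]
This is the two-copy endpoint form on sites \(j,k\), acting blockwise on the spectator factor \(\cH_i\).

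\textbf{Step 3: positivity by contraction.} In the vectorized picture \eqref{eq:W3}, the reduced form reads \(q_{\widehat i}(C)=\langle\xi,I_{\mathrm{pair}\,i}\otimes W_{d_j}\otimes W_{d_k}\,\xi\rangle\), with \(\xi=\operatorname{vec}C\). I would expand \(I_{\mathrm{pair}\,i}=\sum_{a,b}\ketbra{a}{a}\otimes\ketbra{b}{b}\) in a product basis of the \(i\)th output--input pair. This identity is trivially a separable decomposition, playing the role that the Gurvits--Barnum decomposition of \(G\) played earlier. Contracting \(\xi\) with \(\langle a|\otimes\langle b|\) produces vectors \(\xi_{ab}\) on the other two output--input pairs. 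Their Schmidt rank across the output--input cut is still at most two, because the contraction is a product operation on the two sides of that cut. Thus
\[
 q_3(C)=\sum_{a,b}\langle\xi_{ab},W_{d_j}\otimes W_{d_k}\,\xi_{ab}\rangle\geq0
\]
by the two-copy endpoint theorem \cite[Theorem A]{FraserEtAl}, which holds in rectangular dimensions.

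\textbf{Expected obstacle.} The argument is short, so the only delicate point is bookkeeping. One must confirm that the Step 1 product formula really equals \eqref{eq:q3}, in particular that the commuting factors may be reordered so that \(\id-\tfrac12T_i\) acts on \(C\) first. One must also confirm that the spectator-site contraction preserves the rank-two hypothesis, as argued in Step 3.
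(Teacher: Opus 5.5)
Your proof is correct and follows the paper's argument: both use \(\Tr_iC=0\) to drop the site-\(i\) factor, which reduces \(q_3(C)\) to the two-copy endpoint form on the other two sites, and then split it as \(\sum_{a,b}q_{\widehat i}(C_{ab})\) over the block compressions \(C_{ab}=(\langle a|\otimes I)C(|b\rangle\otimes I)\), which are the unvectorized form of your contracted vectors \(\xi_{ab}\) and have rank at most two, before invoking \cite[Theorem A]{FraserEtAl}. Your vectorized phrasing, via the trivially separable decomposition of \(I_{\mathrm{pair}\,i}\), simply makes explicit the parallel with the contraction step in the proof of Theorem~\ref{thm:local-two-support}.
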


\begin{proof}
If \(\Tr_iC=0\), then
\((\id-\frac12\Tr_i^*\Tr_i)C=C\).  Hence
\[
 q_3(C)=q_{\{1,2,3\}\setminus\{i\}}(C).
\]
Choose an orthonormal basis \(\{|a\rangle\}\) at the spectator site
\(i\), and define the explicit block compressions
\[
 C_{ab}=(\langle a|\otimes I)C(|b\rangle\otimes I).
\]
The two-copy form acts blockwise, and the matrix units
\(|a\rangle\langle b|\) are Hilbert--Schmidt orthonormal, so
\[
 q_{\{1,2,3\}\setminus\{i\}}(C)
 =\sum_{a,b}q_{\{1,2,3\}\setminus\{i\}}(C_{ab}).
\]
Each \(C_{ab}\) is a compression of \(C\), hence
\(\rank C_{ab}\leq\rank C\leq2\).  Every summand is nonnegative by the arbitrary-dimension two-copy endpoint theorem
\cite[Theorem A]{FraserEtAl}.
\end{proof}

\subsection{Crossed locally orthogonal product codes}

\begin{theorem}[Crossed locally orthogonal code sector]
\label{thm:triorthogonal}
Let \(V:\mathbb C^m\to\bigotimes_{i=1}^3\cH_i\) and \(W:\mathbb C^n\to\bigotimes_{i=1}^3\cH_i\) be the isometries
\[
 V|r\rangle=\bigotimes_{i=1}^3|e_r^{(i)}\rangle,\qquad
 W|s\rangle=\bigotimes_{i=1}^3|f_s^{(i)}\rangle,
\]
where each family \(\{e_r^{(i)}\}_{r=1}^m\) and \(\{f_s^{(i)}\}_{s=1}^n\) is orthonormal at every site \(i\).
No relation between the \(e\)- and \(f\)-families is assumed.  Put
\[
 S_i(r,s)=\langle f_s^{(i)},e_r^{(i)}\rangle,\qquad
 T=S_1\circ S_2\circ S_3,
\]
where \(\circ\) is the Hadamard product, and
\[
 w_{rs}=1-\frac12\sum_i|S_i(r,s)|^2
 +\frac14\sum_{i<j}|S_i(r,s)S_j(r,s)|^2.
\]
Then, for every \(M\in\mathbb C^{m\times n}\),
\begin{equation}
 q_3(VMW^\dagger)
 =\sum_{r,s}w_{rs}|M_{rs}|^2
 -\frac18\left|\sum_{r,s}T_{rs}M_{rs}\right|^2.
 \label{eq:crossed-code-exact}
\end{equation}
Consequently,
\begin{equation}
 \rank M\leq2
 \quad\Longrightarrow\quad
 q_3(VMW^\dagger)\geq
 \frac{2-\sigma_1(T)^2-\sigma_2(T)^2}{8}\|M\|_2^2
 \geq0.
 \label{eq:crossed-code-bound}
\end{equation}
Here \(\sigma_1(T)\geq\sigma_2(T)\) are the two largest singular values, and \(\sigma_2(T)=0\) when \(T\) has fewer than two singular values.
\end{theorem}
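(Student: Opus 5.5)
The proof has two parts: an exact contraction of each term of \(q_3\), which gives \eqref{eq:crossed-code-exact}, and then a lower bound on that expression under the rank hypothesis.

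\textbf{Exact formula.} Write \(C=VMW^\dagger=\sum_{r,s}M_{rs}\,\bigotimes_{i}\ketbra{e_r^{(i)}}{f_s^{(i)}}\). Every summand is a product operator, so for \(J\subseteq\{1,2,3\}\)
\[
 \Tr_J C=\sum_{r,s}M_{rs}\Bigl(\prod_{i\in J}S_i(r,s)\Bigr)\bigotimes_{i\notin J}\ketbra{e_r^{(i)}}{f_s^{(i)}} .
\]
Suppose \(J\neq\{1,2,3\}\). The Hilbert--Schmidt inner product of the \((r,s)\) and \((r',s')\) surviving product operators factorizes over the remaining sites \(i\notin J\). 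Each factor is \(\langle e_r^{(i)},e_{r'}^{(i)}\rangle\langle f_{s'}^{(i)},f_s^{(i)}\rangle=\delta_{rr'}\delta_{ss'}\), because each family is orthonormal at every site. So at least one surviving site makes these operators orthonormal, and
\[
 \|\Tr_JC\|_2^2=\sum_{r,s}|M_{rs}|^2\prod_{i\in J}|S_i(r,s)|^2 .
\]
For \(J=\{1,2,3\}\) the trace is the scalar \(\sum_{r,s}T_{rs}M_{rs}\).

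Inserting these into \eqref{eq:q3} with weights \((-1/2)^{|J|}\) gives \eqref{eq:crossed-code-exact}. The terms with \(|J|\le2\) assemble into \(w_{rs}\), and the \(|J|=3\) term contributes \(-\tfrac18|\sum T_{rs}M_{rs}|^2\).

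\textbf{Lower bound on the weights.} Put \(c_i=|S_i(r,s)|^2\in[0,1]\). Then
\[
 w_{rs}=\prod_i\Bigl(1-\tfrac{c_i}{2}\Bigr)+\tfrac18c_1c_2c_3 .
\]
This is multilinear in \((c_1,c_2,c_3)\), so its minimum over the cube \([0,1]^3\) is attained at a vertex. The vertex values are \(1,\tfrac12,\tfrac14,\tfrac14\) according to the number of coordinates equal to one. Hence \(w_{rs}\ge\tfrac14\) for every \((r,s)\), and \(\sum_{r,s}w_{rs}|M_{rs}|^2\ge\tfrac14\|M\|_2^2\).

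\textbf{Bound on the crossed term.} Write \(\sum_{r,s}T_{rs}M_{rs}=\Tr(T^TM)\). By von Neumann's trace inequality, \(|\Tr(T^TM)|\le\sum_k\sigma_k(T)\sigma_k(M)\). Since \(\rank M\le2\), only two terms survive, and Cauchy--Schwarz gives
\[
 |\Tr(T^TM)|^2\le\bigl(\sigma_1(T)^2+\sigma_2(T)^2\bigr)\|M\|_2^2 .
\]
Combining with the weight bound yields the first inequality in \eqref{eq:crossed-code-bound}.

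\textbf{Nonnegativity.} It remains to show \(\sigma_1(T)\le1\), so that \(2-\sigma_1^2-\sigma_2^2\ge0\). Each \(S_i=F_i^\dagger E_i\) is a product of isometries, where \(E_i\) and \(F_i\) have columns \(e_r^{(i)}\) and \(f_s^{(i)}\). Hence each \(S_i\) is a contraction. By Schur's theorem, a Hadamard product of contractions is a contraction, so \(\|T\|_\infty\le1\).

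I expect the main obstacle to be this last step: checking that the contraction property survives the Hadamard product of the possibly rectangular overlap matrices. The rest is bookkeeping with the swap identities of Appendix~\ref{app:swap}.
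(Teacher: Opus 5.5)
Your proposal is correct and follows the paper's proof essentially step for step. It uses the same product expansion with sitewise Hilbert--Schmidt orthogonality for proper \(J\), the same vertex check giving \(w_{rs}\geq1/4\), von Neumann's trace inequality plus Cauchy--Schwarz for the crossed term, and contractivity of \(T\). The step you flag as the obstacle is handled in the paper by the compression identity \(T=\Delta_m^\dagger(S_1\otimes S_2\otimes S_3)\Delta_n\) with \(\Delta_k|r\rangle=|rrr\rangle\), which works unchanged for rectangular factors. One small correction: your \(S_i=F_i^\dagger E_i\) should read \((F_i^\dagger E_i)^T\), a harmless transpose that does not change the norm.
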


\begin{proof}
Expanding \(C=VMW^\dagger\) gives
\[
 C=\sum_{r,s}M_{rs}
 \bigotimes_{i=1}^3
 |e_r^{(i)}\rangle\langle f_s^{(i)}|.
\]
For every proper subset \(J\subsetneq\{1,2,3\}\),
\[
 \Tr_J C
 =\sum_{r,s}M_{rs}
 \left(\prod_{i\in J}S_i(r,s)\right)
 \bigotimes_{k\notin J}
 |e_r^{(k)}\rangle\langle f_s^{(k)}|.
\]
Since \(J^c\neq\varnothing\), orthonormality at any site \(k\in J^c\) makes the remaining tensor factors Hilbert--Schmidt orthogonal for different ordered pairs \((r,s)\).  Therefore
\[
 \|\Tr_J C\|_2^2
 =\sum_{r,s}|M_{rs}|^2
 \prod_{i\in J}|S_i(r,s)|^2
 \qquad(J\subsetneq\{1,2,3\}).
\]
On the other hand,
\[
 \Tr C=\sum_{r,s}M_{rs}\prod_iS_i(r,s)
 =\sum_{r,s}T_{rs}M_{rs}.
\]
Substitution in \eqref{eq:q3} proves
\eqref{eq:crossed-code-exact}.

For \(0\leq a_i\leq1\), the multiaffine polynomial
\[
 1-\frac12\sum_i a_i+\frac14\sum_{i<j}a_i a_j
\]
has minimum \(1/4\) on \([0,1]^3\), as follows by checking its eight
vertices.  Hence \(w_{rs}\geq1/4\).

If \(\mathsf E_i,\mathsf G_i\) are the isometries with columns
\(e_r^{(i)}\) and \(f_s^{(i)}\), then
\(S_i=\overline{\mathsf E_i^\dagger\mathsf G_i}\), so every
\(S_i\) is a contraction.  With diagonal isometries
\(\Delta_k|r\rangle=|rrr\rangle\),
\[
 T=\Delta_m^\dagger
 (S_1\otimes S_2\otimes S_3)\Delta_n,
\]
and hence \(\|T\|_\infty\leq1\).  Finally,
\[
 \sum_{r,s}T_{rs}M_{rs}
 =\langle\overline T,M\rangle_{\HS}.
\]
If \(\rank M\leq2\), von Neumann's trace inequality gives
\[
 |\langle\overline T,M\rangle_{\HS}|^2
 \leq
 \bigl(\sigma_1(T)^2+\sigma_2(T)^2\bigr)\|M\|_2^2.
\]
Combining the last three estimates with
\eqref{eq:crossed-code-exact} proves
\eqref{eq:crossed-code-bound}.
\end{proof}

\begin{remark}
Taking \(V=W\) and the same local bases gives \(S_i=T=I_m\),
\(w_{rr}=1/4\), and \(w_{rs}=1\) for \(r\ne s\).  Formula
\eqref{eq:crossed-code-exact} then becomes
\[
 q_3(VMV^\dagger)
 =\|M\|_2^2-\frac34\sum_r|M_{rr}|^2
 -\frac18|\Tr M|^2,
\]
which is the same-code locally orthogonal formula.
\end{remark}

\begin{remark}[A fully supported square-zero example]
In \(3\times3\times3\), with labels understood modulo three, take
\[
 V|r\rangle=|r,r,r\rangle,\qquad
 W|s\rangle=|s,s+1,s+2\rangle
\]
and
\[
 M=
 \begin{pmatrix}
 1&0&1\\
 0&1&-1\\
 1&1&0
 \end{pmatrix}
 =
 \begin{pmatrix}1&0\\0&1\\1&1\end{pmatrix}
 \begin{pmatrix}1&0\\0&1\\1&-1\end{pmatrix}^{T}.
\]
The two code spaces are orthogonal, \(\rank M=2\), and the row and column two-planes of \(M\) have nonzero weight on all three labels. Thus \(C=VMW^\dagger\) is square zero, genuinely nonnormal, and has full output and input support at every local site.  In a locally  rthogonal code, a fully product vector is a single codeword, and all three local partial traces of \(C\) are nonzero.  Hence none of the preceding closed-sector theorems applies.  Here exactly one of the three local overlaps is nonzero for each ordered pair \((r,s)\), so
\(w_{rs}=1/2\), \(T=0\), and
\[
 q_3(C)=\frac12\|M\|_2^2=3.
\]
\end{remark}

\section*{Data availability}

Data sharing is not applicable to this article as no datasets were generated or analysed during the current study.


\begingroup
\small



\begin{thebibliography}{99}

\bibitem{BennettEtAl}
C.~H. Bennett, G.~Brassard, S.~Popescu, B.~Schumacher, J.~A. Smolin,
and W.~K. Wootters,
\emph{Purification of noisy entanglement and faithful teleportation via
noisy channels},
Phys.\ Rev.\ Lett.\ \textbf{76}, 722--725 (1996);
\href{https://arxiv.org/abs/quant-ph/9511027}{arXiv:quant-ph/9511027}.

\bibitem{Horodecki1997}
M.~Horodecki, P.~Horodecki, and R.~Horodecki,
\emph{Inseparable two spin-\(1/2\) density matrices can be distilled to
a singlet form},
Phys.\ Rev.\ Lett.\ \textbf{78}, 574--577 (1997);
\href{https://arxiv.org/abs/quant-ph/9607009}{arXiv:quant-ph/9607009}.

\bibitem{Horodecki1998}
M.~Horodecki, P.~Horodecki, and R.~Horodecki,
\emph{Mixed-state entanglement and distillation: Is there a ``bound''
entanglement in nature?},
Phys.\ Rev.\ Lett.\ \textbf{80}, 5239--5242 (1998);
\href{https://arxiv.org/abs/quant-ph/9801069}{arXiv:quant-ph/9801069}.

\bibitem{DurEtAl}
W.~D\"ur, J.~I. Cirac, M.~Lewenstein, and D.~Bru{\ss},
\emph{Distillability and partial transposition in bipartite systems},
Phys.\ Rev.\ A \textbf{61}, 062313 (2000);
\href{https://arxiv.org/abs/quant-ph/9910022}{arXiv:quant-ph/9910022}.

\bibitem{LewensteinPrimer}
M.~Lewenstein, D.~Bru{\ss}, J.~I. Cirac, B.~Kraus, M.~Ku\'s,
J.~Samsonowicz, A.~Sanpera, and R.~Tarrach,
\emph{Separability and distillability in composite quantum systems---a
primer},
J.\ Mod.\ Opt.\ \textbf{47}, 2481--2499 (2000);
\href{https://arxiv.org/abs/quant-ph/0006064}{arXiv:quant-ph/0006064}.

\bibitem{Peres}
A.~Peres,
\emph{Separability criterion for density matrices},
Phys.\ Rev.\ Lett.\ \textbf{77}, 1413--1415 (1996);
\href{https://arxiv.org/abs/quant-ph/9604005}{arXiv:quant-ph/9604005}.

\bibitem{HorodeckiPPT}
M.~Horodecki, P.~Horodecki, and R.~Horodecki,
\emph{Separability of mixed states: Necessary and sufficient conditions},
Phys.\ Lett.\ A \textbf{223}, 1--8 (1996);
\href{https://arxiv.org/abs/quant-ph/9605038}{arXiv:quant-ph/9605038}.

\bibitem{DiVincenzoEtAl}
D.~P. DiVincenzo, P.~W. Shor, J.~A. Smolin, B.~M. Terhal, and
A.~V. Thapliyal,
\emph{Evidence for bound entangled states with negative partial
transpose},
Phys.\ Rev.\ A \textbf{61}, 062312 (2000);
\href{https://arxiv.org/abs/quant-ph/9910026}{arXiv:quant-ph/9910026}.

\bibitem{WatrousCopies}
J.~Watrous,
\emph{Many copies may be required for entanglement distillation},
Phys.\ Rev.\ Lett.\ \textbf{93}, 010502 (2004);
\href{https://arxiv.org/abs/quant-ph/0312123}{arXiv:quant-ph/0312123}.

\bibitem{Clarisse}
L.~Clarisse,
\emph{The distillability problem revisited},
Quantum Inf.\ Comput.\ \textbf{6}, 539--560 (2006);
\href{https://arxiv.org/abs/quant-ph/0510035}{arXiv:quant-ph/0510035}.

\bibitem{ViannaDoherty}
R.~O. Vianna and A.~C. Doherty,
\emph{Study of the distillability of Werner states using entanglement
witnesses and robust semidefinite programs},
Phys.\ Rev.\ A \textbf{74}, 052306 (2006);
\href{https://arxiv.org/abs/quant-ph/0608095}{arXiv:quant-ph/0608095}.

\bibitem{PankowskiEtAl}
{\L}.~Pankowski, M.~Piani, M.~Horodecki, and P.~Horodecki,
\emph{A few steps more towards NPT bound entanglement},
IEEE Trans.\ Inf.\ Theory \textbf{56}, 4085--4100 (2010);
\href{https://arxiv.org/abs/0711.2613}{arXiv:0711.2613}.

\bibitem{Djokovic}
D.~Z. Djokovi\'c,
\emph{On two-distillable Werner states},
Entropy \textbf{18}, 216 (2016);
\href{https://arxiv.org/abs/1003.4337}{arXiv:1003.4337}.

\bibitem{HorodeckiProblems}
P.~Horodecki, {\L}.~Rudnicki, and K.~\.Zyczkowski,
\emph{Five open problems in quantum information theory},
PRX Quantum \textbf{3}, 010101 (2022);
\href{https://arxiv.org/abs/2002.03233}{arXiv:2002.03233}.

\bibitem{ShorSmolinTerhal}
P.~W. Shor, J.~A. Smolin, and B.~M. Terhal,
\emph{Nonadditivity of bipartite distillable entanglement follows from
a conjecture on bound entangled Werner states},
Phys.\ Rev.\ Lett.\ \textbf{86}, 2681--2684 (2001);
\href{https://arxiv.org/abs/quant-ph/0010054}{arXiv:quant-ph/0010054}.

\bibitem{MullerHermesReebWolf}
A.~M\"uller-Hermes, D.~Reeb, and M.~M. Wolf,
\emph{Positivity of linear maps under tensor powers},
J.\ Math.\ Phys.\ \textbf{57}, 015202 (2016);
\href{https://arxiv.org/abs/1502.05630}{arXiv:1502.05630}.

\bibitem{EggelingEtAl}
T.~Eggeling, K.~G.~H. Vollbrecht, R.~F. Werner, and M.~M. Wolf,
\emph{Distillability via protocols respecting the positivity of partial
transpose},
Phys.\ Rev.\ Lett.\ \textbf{87}, 257902 (2001);
\href{https://arxiv.org/abs/quant-ph/0104095}{arXiv:quant-ph/0104095}.

\bibitem{EckerEtAl}
S.~Ecker, P.~Sohr, L.~Bulla, M.~Huber, M.~Bohmann, and R.~Ursin,
\emph{Experimental single-copy entanglement distillation},
Phys.\ Rev.\ Lett.\ \textbf{127}, 040506 (2021);
\href{https://arxiv.org/abs/2101.11503}{arXiv:2101.11503}.

\bibitem{KanedaEtAl}
F.~Kaneda, R.~Shimizu, S.~Ishizaka, Y.~Mitsumori, H.~Kosaka, and
K.~Edamatsu,
\emph{Experimental activation of bound entanglement},
Phys.\ Rev.\ Lett.\ \textbf{109}, 040501 (2012);
\href{https://arxiv.org/abs/1111.6170}{arXiv:1111.6170}.

\bibitem{Werner}
R.~F. Werner,
\emph{Quantum states with Einstein--Podolsky--Rosen correlations
admitting a hidden-variable model},
Phys.\ Rev.\ A \textbf{40}, 4277--4281 (1989).

\bibitem{VollbrechtWerner}
K.~G.~H. Vollbrecht and R.~F. Werner,
\emph{Entanglement measures under symmetry},
Phys.\ Rev.\ A \textbf{64}, 062307 (2001);
\href{https://arxiv.org/abs/quant-ph/0010095}{arXiv:quant-ph/0010095}.

\bibitem{FangEtAl}
X.-X.~Fang, G.~N.~M. Tabia, K.-S.~Chen, Y.-C.~Liang, and H.~Lu,
\emph{Experimental single-copy distillation of quantumness from
higher-dimensional entanglement},
Phys.\ Rev.\ Lett.\ \textbf{134}, 150201 (2025);
\href{https://arxiv.org/abs/2410.06610}{arXiv:2410.06610}.

\bibitem{QianEtAl}
L.~Qian, L.~Chen, D.~Chu, and Y.~Shen,
\emph{A matrix inequality for entanglement distillation problem},
Linear Algebra Appl.\ \textbf{616}, 139--177 (2021);
\href{https://arxiv.org/abs/1908.02428}{arXiv:1908.02428}.

\bibitem{CostaRico}
P.~Costa Rico,
\emph{New partial trace inequalities and distillability of Werner
states},
Lett.\ Math.\ Phys.\ \textbf{115}, 47 (2025);
\href{https://arxiv.org/abs/2310.05726}{arXiv:2310.05726}.

\bibitem{QietAl}
S.-Y.~Qi, G.~Gupur, Y.-C.~Wu, and G.-P.~Guo,
\emph{Nonpositive-transpose entanglement and bound entanglement: From distillability sets to inequalities and multivariable insights.},
Phys.\ Rev.\ A \textbf{110}, 012406 (2024);
\href{https://arxiv.org/abs/2402.18037}{arXiv:2402.18037}.

\bibitem{BhartiGajjalaHaug}
K.~Bharti, R.~Gajjala, and T.~Haug,
\emph{Two-copy nondistillability of Werner states: sharp partial-trace
inequalities and finite-copy extensions},
\href{https://arxiv.org/abs/2607.24479}{arXiv:2607.24479} (2026).

\bibitem{FuGaoPark}
J.~Fu, L.~Gao, and S.-J.~Park,
\emph{A solution to 2-copy distillability of Werner states},
\href{https://arxiv.org/abs/2607.21367}{arXiv:2607.21367} (2026).

\bibitem{SongChen}
Z.~Song and L.~Chen,
\emph{A partial-trace matrix inequality and Werner-state
distillability},
\href{https://arxiv.org/abs/2607.23416}{arXiv:2607.23416} (2026).

\bibitem{FraserEtAl}
T.~C. Fraser, F.~Huber, B.~Pozsgay, and I.~Vona,
\emph{On the two-copy distillability of Werner states and a new
partial trace inequality},
\href{https://arxiv.org/abs/2607.24309}{arXiv:2607.24309} (2026).

\bibitem{Audenaert}
K.~M.~R. Audenaert,
\emph{Subadditivity of \(q\)-entropies for \(q>1\)},
J.\ Math.\ Phys.\ \textbf{48}, 083507 (2007);
\href{https://arxiv.org/abs/0705.1276}{arXiv:0705.1276}.

\bibitem{Rastegin}
A.~E. Rastegin,
\emph{Relations for certain symmetric norms and anti-norms before and
after partial trace},
J.\ Stat.\ Phys.\ \textbf{148}, 1040--1053 (2012);
\href{https://arxiv.org/abs/1202.3853}{arXiv:1202.3853}.

\bibitem{CostaRicoWolf}
P.~Costa Rico and M.~M. Wolf,
\emph{Partial trace relations beyond normal matrices},
\href{https://arxiv.org/abs/2507.18278}{arXiv:2507.18278} (2025).

\bibitem{Rudolph}
O.~Rudolph,
\emph{A separability criterion for density operators},
J.\ Phys.\ A: Math.\ Gen.\ \textbf{33}, 3951--3955 (2000);
\href{https://arxiv.org/abs/quant-ph/0002026}{arXiv:quant-ph/0002026}.

\bibitem{ChenWu}
K.~Chen and L.-A.~Wu,
\emph{A matrix realignment method for recognizing entanglement},
Quantum Inf.\ Comput.\ \textbf{3}, 193--202 (2003);
\href{https://arxiv.org/abs/quant-ph/0205017}{arXiv:quant-ph/0205017}.

\bibitem{MintertEtAl}
F.~Mintert, M.~Ku\'s, and A.~Buchleitner,
\emph{Concurrence of mixed multipartite quantum states},
Phys.\ Rev.\ Lett.\ \textbf{95}, 260502 (2005);
\href{https://doi.org/10.1103/PhysRevLett.95.260502}{doi:10.1103/PhysRevLett.95.260502};
\href{https://arxiv.org/abs/quant-ph/0411127}{arXiv:quant-ph/0411127}.

\bibitem{AolitaEtAl}
L.~Aolita, A.~Buchleitner, and F.~Mintert,
\emph{Scalable experimental estimation of multipartite entanglement},
Phys.\ Rev.\ A \textbf{78}, 022308 (2008);
\href{https://doi.org/10.1103/PhysRevA.78.022308}{doi:10.1103/PhysRevA.78.022308};
\href{https://arxiv.org/abs/0710.3529}{arXiv:0710.3529}.

\bibitem{EkertEtAl}
A.~K. Ekert, C.~M. Alves, D.~K.~L. Oi, M.~Horodecki, P.~Horodecki,
and L.~C. Kwek,
\emph{Direct estimations of linear and nonlinear functionals of a
quantum state},
Phys.\ Rev.\ Lett.\ \textbf{88}, 217901 (2002);
\href{https://arxiv.org/abs/quant-ph/0203016}{arXiv:quant-ph/0203016}.

\bibitem{AlvesEtAl}
C.~M. Alves, P.~Horodecki, D.~K.~L. Oi, L.~C. Kwek, and A.~K. Ekert,
\emph{Direct estimation of functionals of density operators by local
operations and classical communication},
Phys.\ Rev.\ A \textbf{68}, 032306 (2003);
\href{https://arxiv.org/abs/quant-ph/0304123}{arXiv:quant-ph/0304123}.

\bibitem{BovinoEtAl}
F.~A. Bovino, G.~Castagnoli, A.~Ekert, P.~Horodecki, C.~M. Alves, and
A.~V. Sergienko,
\emph{Direct measurement of nonlinear properties of bipartite quantum
states},
Phys.\ Rev.\ Lett.\ \textbf{95}, 240407 (2005);
\href{https://arxiv.org/abs/quant-ph/0511187}{arXiv:quant-ph/0511187}.

\bibitem{IslamEtAl}
R.~Islam, R.~Ma, P.~M. Preiss, M.~E. Tai, A.~Lukin, M.~Rispoli, and
M.~Greiner,
\emph{Measuring entanglement entropy in a quantum many-body system},
Nature \textbf{528}, 77--83 (2015);
\href{https://arxiv.org/abs/1509.01160}{arXiv:1509.01160}.

\bibitem{ElbenEtAl}
A.~Elben, B.~Vermersch, M.~Dalmonte, J.~I. Cirac, and P.~Zoller,
\emph{R\'enyi entropies from random quenches in atomic Hubbard and spin
models},
Phys.\ Rev.\ Lett.\ \textbf{120}, 050406 (2018);
\href{https://arxiv.org/abs/1709.05060}{arXiv:1709.05060}.

\bibitem{MillerEtAl}
D.~Miller, K.~Levi, L.~Postler, A.~Steiner, L.~Bittel,
G.~A.~L. White, Y.~Tang, E.~J. Kuehnke, A.~A. Mele, S.~Khatri,
L.~Leone, J.~Carrasco, C.~D. Marciniak, I.~Pogorelov,
M.~Guevara-Bertsch, R.~Freund, R.~Blatt, P.~Schindler, T.~Monz,
M.~Ringbauer, and J.~Eisert,
\emph{Experimental measurement and a physical interpretation of quantum
shadow enumerators},
Phys.\ Rev.\ Research \textbf{8}, 023318 (2026);
\href{https://arxiv.org/abs/2408.16914}{arXiv:2408.16914}.

\bibitem{Levay}
P.~L\'evay,
\emph{On the geometry of a class of \(N\)-qubit entanglement
monotones},
J.\ Phys.\ A: Math.\ Gen.\ \textbf{38}, 9075--9085 (2005);
\href{https://arxiv.org/abs/quant-ph/0507070}{arXiv:quant-ph/0507070}.

\bibitem{Harris}
J.~Harris,
\emph{Algebraic Geometry: A First Course},
Graduate Texts in Mathematics, Vol.~133 (Springer, New York, 1992).

\bibitem{Landsberg}
J.~M. Landsberg,
\emph{Tensors: Geometry and Applications},
Graduate Studies in Mathematics, Vol.~128
(American Mathematical Society, Providence, 2012).

\bibitem{ShorLaflamme}
P.~W. Shor and R.~Laflamme,
\emph{Quantum analog of the MacWilliams identities for classical coding
theory},
Phys.\ Rev.\ Lett.\ \textbf{78}, 1600--1602 (1997);
\href{https://arxiv.org/abs/quant-ph/9610040}{arXiv:quant-ph/9610040}.

\bibitem{RainsShadow}
E.~M. Rains,
\emph{Quantum shadow enumerators},
IEEE Trans.\ Inf.\ Theory \textbf{45}, 2361--2366 (1999);
\href{https://arxiv.org/abs/quant-ph/9611001}{arXiv:quant-ph/9611001}.

\bibitem{Rains}
E.~M. Rains,
\emph{Polynomial invariants of quantum codes},
IEEE Trans.\ Inf.\ Theory \textbf{46}, 54--59 (2000);
\href{https://arxiv.org/abs/quant-ph/9704042}{arXiv:quant-ph/9704042}.

\bibitem{EltschkaEtAl}
C.~Eltschka, F.~Huber, O.~G\"uhne, and J.~Siewert,
\emph{Exponentially many entanglement and correlation constraints for
multipartite quantum states},
Phys.\ Rev.\ A \textbf{98}, 052317 (2018);
\href{https://arxiv.org/abs/1807.09165}{arXiv:1807.09165}.

\bibitem{WyderkaGuhne}
N.~Wyderka and O.~G\"uhne,
\emph{Characterizing quantum states via sector lengths},
J.\ Phys.\ A: Math.\ Theor.\ \textbf{53}, 345302 (2020);
\href{https://arxiv.org/abs/1905.06928}{arXiv:1905.06928}.

\bibitem{JohnstonKribs}
N.~Johnston and D.~W. Kribs,
\emph{A family of norms with applications in quantum information
theory},
J.\ Math.\ Phys.\ \textbf{51}, 082202 (2010);
\href{https://arxiv.org/abs/0909.3907}{arXiv:0909.3907}.

\bibitem{JohnstonEtAl}
N.~Johnston, D.~W. Kribs, V.~I. Paulsen, and R.~Pereira,
\emph{Minimal and maximal operator spaces and operator systems in
entanglement theory},
J.\ Funct.\ Anal.\ \textbf{260}, 2407--2423 (2011);
\href{https://arxiv.org/abs/1010.1432}{arXiv:1010.1432}.

\bibitem{HornJohnson}
R.~A. Horn and C.~R. Johnson,
\emph{Matrix Analysis}, 2nd ed.
(Cambridge University Press, Cambridge, 2012).

\bibitem{GurvitsBarnum}
L.~Gurvits and H.~Barnum,
\emph{Largest separable balls around the maximally mixed bipartite
quantum state},
Phys.\ Rev.\ A \textbf{66}, 062311 (2002);
\href{https://arxiv.org/abs/quant-ph/0204159}{arXiv:quant-ph/0204159}.

\bibitem{Wootters}
W.~K. Wootters,
\emph{Entanglement of formation of an arbitrary state of two qubits},
Phys.\ Rev.\ Lett.\ \textbf{80}, 2245--2248 (1998);
\href{https://arxiv.org/abs/quant-ph/9709029}{arXiv:quant-ph/9709029}.

\bibitem{RungtaEtAl}
P.~Rungta, V.~Bu\v{z}ek, C.~M. Caves, M.~Hillery, and G.~J. Milburn,
\emph{Universal state inversion and concurrence in arbitrary
dimensions},
Phys.\ Rev.\ A \textbf{64}, 042315 (2001);
\href{https://arxiv.org/abs/quant-ph/0102040}{arXiv:quant-ph/0102040}.

\bibitem{TranEtAl}
M.~C. Tran, M.~Zuppardo, A.~de Rosier, L.~Knips, W.~Laskowski,
T.~Paterek, and H.~Weinfurter,
\emph{Genuine \(N\)-partite entanglement without \(N\)-partite
correlation functions},
Phys.\ Rev.\ A \textbf{95}, 062331 (2017);
\href{https://arxiv.org/abs/1704.03385}{arXiv:1704.03385}.

\bibitem{BrydgesEtAl}
T.~Brydges, A.~Elben, P.~Jurcevic, B.~Vermersch, C.~Maier,
B.~P. Lanyon, P.~Zoller, R.~Blatt, and C.~F. Roos,
\emph{Probing R\'enyi entanglement entropy via randomized measurements},
Science \textbf{364}, 260--263 (2019);
\href{https://arxiv.org/abs/1806.05747}{arXiv:1806.05747}.

\bibitem{ChristandlSchuchWinter}
M.~Christandl, N.~Schuch, and A.~Winter,
\emph{Entanglement of the antisymmetric state},
Commun.\ Math.\ Phys.\ \textbf{311}, 397--422 (2012);
\href{https://arxiv.org/abs/0910.4151}{arXiv:0910.4151}.

\end{thebibliography}
\endgroup
\end{document}